\documentclass[twoside,twocolumn]{article}

\usepackage[letterpaper,hcentering,textwidth=6.75in,textheight=9.25in,top=1in,headheight=12pt,headsep=15pt]{geometry}
\usepackage{fancyhdr}
\renewcommand{\headrulewidth}{0.5pt}
\renewenvironment{abstract}
{\centerline{\large\bfseries Abstract}\vspace{0.5ex}\begin{quote}}
{\par\end{quote}\vskip 1ex}
\usepackage[round]{natbib}

\usepackage{amssymb}
\usepackage{amsmath}
\usepackage{enumitem}
\usepackage{bm}
\usepackage{xcolor}
\usepackage[hidelinks]{hyperref}
\usepackage{algorithm}
\usepackage{algpseudocode}
\algtext*{EndIf} 
\usepackage{graphicx}
\usepackage{booktabs}

\newcommand{\lt}{\left}
\newcommand{\rt}{\right}
\newcommand{\indicator}[1]{\mathbf{1}{\left\{ #1 \right\}}}
\newcommand{\E}{\mathbb{E}}
\newcommand{\V}{\mathrm{Var}}
\newcommand{\R}{\mathbb{R}}

\newcommand{\indep}{\perp\!\!\!\perp}

\newcommand{\argmin}{\operatorname*{arg\,min}}
\newcommand{\argmax}{\operatorname*{arg\,max}}

\DeclareMathOperator{\Cov}{Cov}
\newcommand{\cA}{\mathcal{A}}

\newcommand{\cD}{\mathcal{D}}

\newcommand{\cH}{\mathcal{H}}

\newcommand{\cN}{\mathcal{N}}
\newcommand{\cP}{\mathcal{P}}

\newcommand{\cX}{\mathcal{X}}

\newcommand{\cZ}{\mathcal{Z}}

\newcommand{\TV}{\operatorname{d_{\mathrm{TV}}}}

\newcommand{\sA}{\mathsf{A}}

\newcommand{\sE}{\mathsf{E}}
\newcommand{\sK}{\mathsf{K}}

\newtheorem{assumption}{Assumption}

\newtheorem{definition}{Definition}

\newtheorem{proposition}{Proposition}
\newtheorem{theorem}{Theorem}

\newtheorem{lemma}{Lemma}
\newenvironment{proof}{\paragraph{Proof:}}{\hfill$\square$}

\begin{document}

\title{\textbf{Policy Learning with Weak Signals}}

\author{
{\large Benedikt Koch}\\
{\small Harvard University}\\
{\footnotesize benedikt\_koch@g.harvard.edu}
\and
{\large Winston Chou}\\
{\small Netflix}\\
{\footnotesize wchou@netflix.com}
\and
{\large Aur\'elien Bibaut}\\
{\small Netflix}\\
{\footnotesize abibaut@netflix.com}
\and
{\large Nathan Kallus}\\
{\small Netflix}\\
{\footnotesize nkallus@netflix.com}
}
\date{}
\maketitle
\thispagestyle{empty}

\begin{abstract}
Policy learning in digital experimentation faces three challenges: weak signal-to-noise ratios, rich covariate spaces, and massive data volumes. We formalize this regime by modeling treatment-effect estimates from increasingly fine covariate partitions as Gaussian observations with bounded signal-to-noise ratios. We establish that, in general, the optimal treatment policy is not learnable in this setting. Even learning the optimal policy value suffers from impractically slow rates. However, when treatment effects vary smoothly, we derive minimax-adaptive policies based on linear smoothers that achieve vanishing welfare regret. We demonstrate the practical value of our framework by applying it to large-scale real-world experiments at Netflix, showing that personalized linear-smoothing policies can dominate unpersonalized policies even in this challenging empirical setting.
\end{abstract}

\section{Introduction}

Randomized experiments are increasingly used not only to decide whether to deploy a treatment to an entire population, but also to learn a personalized treatment policy
\citep{manski2004statistical, hirano2009asymptotics, kitagawa2018should, athey2021policy}. This is of special interest to digital technology firms which strive to deliver a personalized product experience to each individual user \citep{gomez2015netflix, tu2021personalized}.

However, learning effective treatment policies in digital experiments faces three challenges. First, because digital firms run many experiments, large step-change improvements are rare, so that treatment effects tend to be small relative to measurement error even with millions of users \citep{peysakhovich2018learning, azevedo2020b, bibaut2024learning}. Second, the covariate space is often very rich, which helps detect treatment effect heterogeneity, but increases the the number of treatment decisions to be learned. Lastly, data volume and privacy constraints often only allow storage and analysis of summary- rather than unit-level records \citep{chou2021randomizedcontrolledtrialsdata, wong2019efficientcomputationlinearmodel}.

We formalize this regime by modeling treatment-effect estimates as a Gaussian sequence model over an increasing number of covariate cells, whose signal-to-noise ratios remain bounded. We show that, in such a regime, learning the optimal \emph{personalized} policy is asymptotically impossible: no decision rule yields vanishing worst-case and (least-favorable prior) Bayes regret, extending the scalar results of \citet{hirano2009asymptotics} and \citet{tetenov2012statistical}. Adapting results from \citet{cai2011}, \citet{lepski1999estimation}, and \citet{han2020estimation}, we show that the optimal policy \emph{value} is estimable, although its minimax-optimal logarithmic rate is too slow to be useful in practice.

Nevertheless, when the conditional average treatment effect (CATE) function varies smoothly (is $\beta$-H\"older) across $K$ covariate cells, simple linear smoothers such as trees, kernels, or spline regression can borrow signal across neighboring cells and induce policies with vanishing welfare-regret. We derive sharp regret bounds for linear smoothers that trade off approximation and estimation error. Under a quasi-uniform design, the right amount of signal sharing between cells balances the two and yields the rate $K^{-\beta/(2\beta+d)}$ where $d$ is the covariate dimension, matching classical rates for nonparametric regression and classification \citep{stone1982optimal, yang1999minimax, audibert2007fast}. Our theoretical contribution is to establish that tree, kernel, and spline policies adapt to unknown smoothness and attain this rate under nonuniform welfare weights and a fixed-design, heteroskedastic Gaussian observation model. We obtain a matching minimax lower bound by adapting Assouad's lemma to weighted welfare and randomized policies.

We demonstrate the practical relevance of our methodological framework by applying it to two large-scale real-world digital experiments at Netflix. We show that, in this challenging empirical setting, treatment effect policies based on linear smoothing estimators can achieve robust welfare gains over unpersonalized policies.  Our applications underscore both the importance of personalization and the empirical value of linear smoothing for learning useful treatment policies.

\section{Framework and Notation}
\label{sec:setup}

We begin by introducing our notational framework for policy learning with weak signals.

\subsection{Tasks, Policies, and Welfare}
Consider a collection of local decision problems, indexed by $k=1,\ldots,K$, which we call \emph{tasks}. Each task involves a binary treatment choice, has an associated \emph{treatment effect} $\tau_k\in\R$, and a known, deterministic \emph{task feature} $x_k\in\cX$. Each task is also associated with a known \emph{task weight} $p_k > 0$ with $\sum_{k=1}^K p_k = 1$.

In our policy learning setting, a task is an element of a discrete covariate space $\cX=\{x_1,\ldots,x_K\}$, with $\tau_k = \E[Y(1) - Y(0) \mid X = x_k]$ the conditional average treatment effect at value $x_k$, and $p_k = P(X = x_k)$. The task-level decision problem is whether to treat units having that covariate value.\footnote{Continuous covariates are accommodated by partitioning $\cX=\bigcup_{k=1}^K \cX_k$ and setting $\tau_k=\E[\tau(X)\mid X\in\cX_k]$, where $\tau(x)=\E[Y(1)-Y(0)\mid X=x]$.}

A \emph{policy} $\pi = (\pi_1, \ldots, \pi_K) \in [0, 1]^K$ assigns treatment probability $\pi_k$ to task $k$ and yields the \emph{welfare} given the $K$ tasks
\[
    V_K(\pi) = \sum_{k=1}^K p_k \tau_k \pi_k,
\]
where we drop the policy independent baseline. For shortness, we write $V(\pi)=V_K(\pi)$. The \emph{oracle policy} $\pi_k^\ast = \indicator{\tau_k > 0}$ maximizes $V$ and induces the \emph{optimal value}
\begin{align}\label{eq:optimal_value}
    V^\ast = \sup_{\pi \in [0, 1]^K} V(\pi) = \sum_{k=1}^K p_k \tau_k^+,
\end{align}
where $\tau^+ = \max\{\tau, 0\}$. The oracle policy may be non-unique when $\tau_k = 0$ for some tasks, but the optimal value $V^\ast$ is unique \citep{luedtke2016statistical, shi2020breaking, whitehouse2025inference}. Recovering $V^\ast$, and a policy that attains it, is our goal.

\subsection{Observation Model and Decision Rules}
The decision maker uses task-level data $(\widehat\tau_k, \widehat \sigma_k, x_k,n_k)_{k=1}^K$, where $\widehat\tau_k$ is an estimate of $\tau_k$ with standard error $\sigma_k/\sqrt{n_k}$, based on $n_k$ units. For example, $\widehat\tau_k$ may be the difference-in-means estimator in a randomized trial. A \emph{statistical decision rule} is a measurable map $\widetilde\pi: \R^K \times [0,\infty)^K \times \cX^K \to [0, 1]^K$, $(\widehat\tau, \widehat \sigma, x) \mapsto \widetilde\pi(\widehat\tau, \widehat\sigma, x)$. Each coordinate may depend on the entire triplet, so rules that pool or shrink across tasks (e.g., empirical Bayes) are included. Since $x$ is fixed throughout, we suppress its input.

Even if the decision maker observes unit-level data, task-level summaries can sufficiently capture treatment information, in large samples under suitable conditions, see Appendix~\ref{app:limit_experiments}. Such summaries are also common in industrial settings, where privacy concerns, data warehousing costs, and scale limit the storage and processing of unit-level records.  

Because the welfare $V(\widetilde\pi(\widehat\tau, \widehat \sigma))$ of a decision rule is random, we assess rules by their \emph{expected welfare} $W(\widetilde\pi) = \E[V(\widetilde\pi(\widehat\tau, \widehat \sigma))]$, with expectation over the randomness of $(\widehat\tau, \widehat \sigma)$, and aim to minimize the \emph{regret}
\[
    V^\ast - W(\widetilde\pi) = \sum_{k = 1}^K p_k \tau_k \lt(\indicator{\tau_k > 0} - \E[\widetilde\pi_k(\widehat \tau, \widehat \sigma)]\rt) \geq 0.
\]

We assume that the task-level estimates follow a Gaussian sequence model centered at the true effects \citep{peysakhovich2018learning, azevedo2020b, sudijono2024optimizing, tripuraneni2024choosing, chen2025compound,chen2026normal}, with justification in Appendix~\ref{app:limit_experiments}.

\begin{assumption}[Gaussian Sequence Model]\label{ass:gaussian_model}
For $k = 1, \ldots, K$,
\[
    \widehat\tau_k
    \overset{\mathrm{ind}}{\sim}
    \mathcal N\lt(\tau_k,\frac{\sigma_k^2}{n_k}\rt),
\]
where the standard deviations $\sigma_k$ are bounded, i.e., $\underline{\sigma} \leq \sigma_k \leq \overline{\sigma}$ with known constants $\underline{\sigma}, \overline{\sigma} > 0$.
\end{assumption}

For expositional simplicity, we set $n_k=n$, but our results extend (up to constants) to settings where the ratios $n_k/n$ are bounded away from zero and infinity by replacing $\sigma_k$ with $\tilde\sigma_k = \sigma_k \sqrt{n / n_k}$ below.

\subsection{The Weak-Signal Regime}

The defining feature of our setting is that treatment effects are small relative to sampling noise. Under this setting, even interventions worth deploying move the outcome of interest by an amount comparable to its standard error \citep{hirano2009asymptotics,mikusheva2022inference, azevedo2020b, chou2025evaluating}. Assumption~\ref{ass:weak_signal} encodes this by fixing the information per task.

\begin{assumption}[Weak Signal]\label{ass:weak_signal}
There is a constant $H > 0$ such that $\tau_k = h_k / \sqrt{n}$ with $\sup_k |h_k| \leq H$ for all $k$.
\end{assumption}

Under Assumptions~\ref{ass:gaussian_model}--\ref{ass:weak_signal}, the observation model can equivalently be written in terms of the rescaled statistics $Z_k := \sqrt{n} \widehat\tau_k$,
\begin{equation}\label{eq:weak_model}
    Z_k \overset{\mathrm{ind}}{\sim} \cN(h_k, \sigma_k^2), \qquad k = 1, \dots, K.
\end{equation}
Similarly, we equivalently write decision rules as $\widetilde\pi(Z,\widehat\sigma) = \widetilde\pi (\widehat\tau, \widehat \sigma)$. Thus, each task has bounded signal-to-noise ratio $h_k / \sigma_k = O(1)$. To return to the original scale, we target the rescaled oracle value $\sqrt{n} V^\ast = \sum_{k=1}^K p_k h_k^+$.

We study the regime in which the number of tasks grows, so that the covariate space becomes richer, $K\to\infty$, with approximately balanced task sizes $n_k \asymp n$, where $n$ is large but fixed.\footnote{$a_K\asymp b_K$ means that there exist constants $0<c\leq C<\infty$ such that $cb_K\leq a_K\leq Cb_K$ for all sufficiently large $K$.} Thus the total sample size $N=\sum_{k=1}^K n_k\asymp Kn$ grows as additional observations support increasingly fine partitions of $\cX$, but per-task signal-to-noise ratios remain bounded under Assumption~\ref{ass:weak_signal}. Our empirical analysis in Section~\ref{sec:netflix} is consistent with this regime.

\section{Policy and Value Learning}\label{sec:policy_learning}

We characterize what can be learned in this weak-signal regime. We show that, in the general case without any structure linking tasks, no statistical decision rule can uniformly recover the oracle policy with vanishing worst-case regret. Nevertheless, the scalar optimal value $\sqrt{n} V^\ast$ can still be recovered by polynomial approximation of the nonsmooth functional $h_k^+$ if the standard errors are known. However, because the resulting estimator converges very slowly, learning this value is impractical even if feasible.

Let $\cP_K$ denote all joint laws of $(Z, \widehat \sigma)$ satisfying Assumptions~\ref{ass:gaussian_model} and~\ref{ass:weak_signal}. Write $\bm h = (h_1, \ldots, h_K)$ and $\bm \sigma = (\sigma_1, \ldots, \sigma_K)$. Then, an element $P_{\bm h, \bm \sigma} \in \cP_K$ is a joint law of $(Z, \widehat \sigma)$ with Gaussian marginal $Z \sim \cN(\bm h, \operatorname{diag}(\sigma_1^2, \ldots, \sigma_K^2))$. Thus, laws $P_{\bm h, \bm \sigma}, P_{\bm h, \bm \sigma}^\prime \in \cP_K$ share the same $Z$-marginal but may differ in the conditional law of $\widehat \sigma$ given $Z$.
Define the optimal value, expected welfare and scaled regret
\begin{align*}
    V^\ast_{\bm h} &= \frac{1}{\sqrt{n}} \sum_{k=1}^K p_k h_k^+,
    \\
    W_{P_{\bm h, \bm \sigma}}(\widetilde\pi) &= \frac{1}{\sqrt{n}} \sum_{k=1}^K p_k h_k \E_{P_{\bm h, \bm \sigma}}[\widetilde\pi_k(Z, \widehat \sigma)],
    \\
    R_K(P_{\bm h, \bm \sigma}, \widetilde\pi) &:= \sqrt{n} \{V^\ast_{\bm h} - W_{P_{\bm h, \bm \sigma}}(\widetilde\pi)\},
\end{align*}
We write $\E_{\bm h,\bm\sigma}$ instead of $\E_{P_{\bm h, \bm \sigma}}$ when the expectation only involves $Z$.

\subsection{An Optimal Policy is not Learnable}

Theorem~\ref{thm:minimax_regret} shows that learning an optimal policy incurs nonvanishing worst-case welfare regret.
\begin{theorem}[Unattainable Optimal Policy]\label{thm:minimax_regret}
   Suppose Assumptions~\ref{ass:gaussian_model} and~\ref{ass:weak_signal} hold. Then,
   \begin{align*}
        &\inf_{\widetilde\pi} \sup_{P_{\bm h, \bm \sigma} \in \cP_K} R_K(P_{\bm h, \bm \sigma}, \widetilde\pi) 
        \\
        &=  \max_{\Pi \in \mathfrak P} \inf_{\widetilde\pi} \int R_K(P_{\bm h, \bm \sigma},\widetilde\pi)
\Pi(dP_{\bm h, \bm \sigma})
        \\
        &=\max_{0 \leq t \leq H} t\Phi \lt(-\frac{t}{\overline \sigma}\rt),
    \end{align*}
   where the infima range over rules $\widetilde\pi: \R^K \times [0, \infty)^K \to [0, 1]^K$ and $\mathfrak P$ is the set of Borel priors on $\cP_K$.
    The plug-in rule $\widetilde\pi_k^*(Z, \widehat \sigma)=\indicator{Z_k>0}$ is both minimax and Bayes under the least-favorable prior.
\end{theorem}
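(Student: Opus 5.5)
The plan is to sandwich the three quantities. By weak duality, Bayes risk under any prior is at most the minimax risk. We therefore (i) compute the worst-case regret of the plug-in rule $\widetilde\pi^*_k=\indicator{Z_k>0}$ and show it equals $c^\ast:=\max_{0\le t\le H} t\Phi(-t/\overline\sigma)$. We then (ii) exhibit a prior whose Bayes risk is at least $c^\ast$. Together these give
\[
\max_\Pi\inf_{\widetilde\pi}\int R_K\,d\Pi\;\le\;\inf_{\widetilde\pi}\sup_P R_K\;\le\;\sup_P R_K(P,\widetilde\pi^*)\;=\;c^\ast,
\]
with equality throughout. The same sandwich shows that the plug-in rule is both minimax and Bayes under the constructed prior. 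The maximum over $\Pi$ is attained by that prior.

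\textbf{Step (i), upper bound.} For the plug-in rule the regret decomposes across tasks as
\[
R_K=\sum_k p_k\, r(h_k,\sigma_k),
\]
where $r(h,\sigma)=h^+ - h\,\Phi(h/\sigma)$. For $h>0$ this equals $h\Phi(-h/\sigma)$. For $h\le 0$ it equals $|h|\Phi(-|h|/\sigma)$. So $r(h,\sigma)=|h|\Phi(-|h|/\sigma)$, which is increasing in $\sigma$. It is therefore at most $|h|\Phi(-|h|/\overline\sigma)\le c^\ast$ whenever $|h|\le H$. Since $\sum_k p_k=1$, we get $\sup_P R_K(P,\widetilde\pi^*)\le c^\ast$, with equality at $h_k\equiv t^\ast$ and $\sigma_k\equiv\overline\sigma$. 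Here $t^\ast$ is a maximizer, which exists because the map is continuous on a compact set.

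\textbf{Step (ii), lower bound.} Take the prior under which the $h_k$ are i.i.d.\ uniform on $\{-t^\ast,+t^\ast\}$, with $\sigma_k\equiv\overline\sigma$ and $\widehat\sigma\equiv\overline\sigma$ deterministic, so that $\widehat\sigma$ carries no information. The Bayes risk is linear in $\widetilde\pi$, and the posterior of $h_k$ depends only on $Z_k$ by independence. The Bayes rule is therefore coordinatewise: treat iff $\E[h_k\mid Z]>0$, that is, iff $Z_k>0$ by the symmetric likelihood ratio. Its Bayes risk is
\[
\sum_k p_k\cdot\tfrac12\bigl[t^\ast\Phi(-t^\ast/\overline\sigma)+t^\ast\Phi(-t^\ast/\overline\sigma)\bigr]=c^\ast.
\]
This gives $\inf_{\widetilde\pi}\int R_K\,d\Pi\ge c^\ast$, attained by $\widetilde\pi^*$.

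\textbf{Main obstacle.} The delicate point is step (ii): one must make sure that allowing rules to depend on the full vector $Z$ and on $\widehat\sigma$ cannot beat the coordinatewise rule. I would handle this by writing the Bayes risk as
\[
\sum_k p_k\,\E\bigl[h_k^+ - \E[h_k\mid Z,\widehat\sigma]\,\widetilde\pi_k\bigr]
\]
and minimizing pointwise. Independence of the prior across coordinates, together with the degenerate $\widehat\sigma$, makes $\E[h_k\mid Z,\widehat\sigma]=\E[h_k\mid Z_k]$. A second technicality is making the prior a Borel measure on $\cP_K$, which is immediate since it is finitely supported. If $H=0$ all three quantities are trivially zero.
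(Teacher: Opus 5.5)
Your argument is correct. Step (i) coincides with the paper's upper bound, but your lower bound takes a genuinely different route.

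The paper proves the minimax lower bound directly. For each $t\in(0,H]$ it restricts to laws with $\bm h\in\{-t,t\}^K$, $\bm\sigma=\bm{\overline{\sigma}}$ and $\widehat\sigma$ independent of $Z$. It then reduces the regret to $t$ times a weighted Hamming loss. Finally it applies its weighted Assouad lemma (Lemma~\ref{lem:weighted_assouad}), using the Gaussian identity $\TV = 2\Phi(t/\overline\sigma)-1$ for hypotheses that differ in one coordinate. It then invokes the averaged form of the same lemma to lower-bound the Bayes risk under the uniform prior on $\{-t^\ast,t^\ast\}^K$.

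You instead compute the Bayes rule under that prior exactly. You minimize pointwise, using $\E[h_k\mid Z,\widehat\sigma]=\E[h_k\mid Z_k]$ and the symmetric likelihood ratio. Weak duality then lifts the Bayes value to the minimax value.

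What your route buys:
\begin{itemize}
\item It is more elementary and self-contained: no total-variation formula and no Assouad lemma.
\item It delivers both identities in a single sandwich.
\item It shows that the plug-in rule \emph{is} the Bayes rule, not merely that it attains the bound.
\end{itemize}
The two arguments give the same constant because here the two-point bound $(1-\TV)/2=\Phi(-t/\overline\sigma)$ is exactly the Bayes error of the sign test at $Z_k=0$.

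What the paper's route buys: it builds the weighted, randomized Assouad lemma, which it reuses for the H\"older lower bound in Theorem~\ref{thm:minimax_lower_bound}. It also yields the bound for every $t\in(0,H]$ without first fixing a maximizer $t^\ast$.
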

The proof is given in Appendix~\ref{app:minimax_regret}. Since $n$ does not grow and the minimax regret value in Theorem~\ref{thm:minimax_regret} does not depend on $K$, the regret need not vanish as $K\to\infty$. Thus, in the weak-signal regime, no decision rule achieves vanishing worst-case regret over any model class that contains the independent Gaussian submodel with bounded noise and uninformative task features. Furthermore, the minimax optimal rule ignores $\widehat\sigma$ and is separable, i.e., its decision for task $k$ depends only on $Z_k$.

If $K = 1$, $\bm\sigma$ is known, and $H$ is large, Theorem~\ref{thm:minimax_regret} matches the
exact minimax regret in the scalar Gaussian model \citep{tetenov2012statistical}, which is also
the local asymptotic minimax bound in \citep[Theorem~3.5]{hirano2009asymptotics}. Our proof is based on a weighted Assouad argument (Lemma~\ref{lem:weighted_assouad}) rather than a complete-class theorem.

\subsection{The Optimal Value is Learnable}
The value of the optimal policy is learnable if we assume that the standard deviations are known. However, we show that its estimation error converges very slowly in $K$.  Writing $h_k^+ = \frac{1}{2}(h_k + |h_k|)$,
\[
    \sqrt{n} V_{\bm h}^\ast = \frac{1}{2} \sum_{k=1}^K p_k h_k + \frac{1}{2} \sum_{k=1}^K p_k |h_k|.
\]
The first term is a smooth linear functional in $\bm h$ and can be estimated at parametric rate by $\frac{1}{2} \sum_{k=1}^K p_k Z_k$. The difficulty lies in the second term as $| \cdot |$ is nonsmooth at $h_k = 0$, so averaging $|Z_k|$ leaves nonvanishing bias. Following \citet{cai2011,han2020estimation}, we approximate $|h_k|$ by a polynomial, which we estimate in turn unbiasedly using Hermite polynomials using the known standard errors. The polynomial degree controls the tradeoff between the bias from approximating $|h_k|$ and the variance of the estimator. 

Let $Q^\ast_m(x) = \sum_{j = 0}^m q^\ast_{j, m} x^{2j}$ be the best uniform polynomial approximation of $|x|$ on $[-H, H]$ of degree $2m$ (necessarily even). Denote with $\Theta_j(x) = (-1)^j e^{\frac{x^2}{2}} \frac{\partial^j}{\partial x^j}(e^{-\frac{x^2}{2}})$ the probabilist's Hermite polynomial of degree $j$. A key feature of $\Theta_j$ is that for $Z_k \sim \cN(h_k, \sigma_k^2)$, we have $\E[\sigma_k^{2j} \Theta_{2j}(Z_k/\sigma_k)] = h_k^{2j}$. This motivates the estimator
\[
    \widehat V^*_m:= \frac{1}{2 \sqrt{n}} \sum_{k=1}^K p_k \lt(Z_k + \sum_{j = 0}^m q^\ast_{j, m} \sigma_k^{2j} \Theta_{2j} \lt(\frac{Z_k}{\sigma_k}\rt) \rt).
\]

The following result is an adaptation of \citet[Theorem~3 and Theorem~4]{cai2011}.
\begin{theorem}[Estimating the Optimal Value]\label{thm:value}
Suppose Assumptions~\ref{ass:gaussian_model}--\ref{ass:weak_signal} hold and that the true standard errors $\bm \sigma$ are known. Suppose further that $\max_k p_k = O(1/K)$ and that $m_K = \lt \lfloor \frac{1}{4} \log(K) / (\log(\log(K))) \rt \rfloor$. Then,
\begin{align*}
    \sup_{\bm h \in [-H, H]^K} &\E_{\bm h, \bm \sigma}\lt[\lt(\sqrt{n} \{\widehat V^\ast_{m_K} - V_{\bm h}^\ast\}\rt)^2\rt]
    \\
    &\qquad \qquad \qquad \lesssim_{H, \overline{\sigma}, \underline{\sigma}} \lt(\frac{\log(\log (K))}{\log (K)}\rt)^2,
\end{align*}
This rate is minimax optimal in the submodel $p_k = 1/K, \sigma_k = \sigma$ for all $k = 1, \ldots, K$ for some $\sigma > 0$. \footnote{$a\lesssim b$ means $a\leq Cb$ for a universal constant $C$.  Subscripts on $\lesssim$ list the quantities on which $C$ may depend.}
\end{theorem}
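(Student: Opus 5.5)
We would split the error into squared bias plus variance and control each separately. Write $\sqrt n\{\widehat V^\ast_m - V^\ast_{\bm h}\} = \frac12\sum_k p_k (Z_k - h_k) + \frac12 \sum_k p_k\{\widehat A_k - |h_k|\}$, where $\widehat A_k = \sum_{j=0}^m q^\ast_{j,m}\sigma_k^{2j}\Theta_{2j}(Z_k/\sigma_k)$. The first term has mean zero and variance $\frac14\sum_k p_k^2\sigma_k^2 \lesssim \overline\sigma^2/K$, since $\max_k p_k = O(1/K)$ and $\sum_k p_k=1$. For the second term we use the unbiasedness identity $\E[\sigma_k^{2j}\Theta_{2j}(Z_k/\sigma_k)] = h_k^{2j}$, which follows by scaling from the Hermite generating function $\E[\Theta_j(W+\mu)] = \mu^j$ for $W\sim\cN(0,1)$. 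It gives $\E\widehat A_k = Q^\ast_m(h_k)$, so the bias equals $\frac12\sum_k p_k (Q^\ast_m(h_k) - |h_k|)$. Its absolute value is at most $\frac12\sup_{|x|\le H}|Q^\ast_m(x)-|x|| \lesssim H/m$ by Bernstein's classical result that the best degree-$2m$ uniform approximation error of $|x|$ on $[-1,1]$ is $\asymp 1/m$, rescaled to $[-H,H]$. With $m_K \asymp \log K/\log\log K$, the squared bias is $\lesssim_H (\log\log K/\log K)^2$.

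\textbf{Variance.} By independence across $k$, the variance of the second term is $\frac14\sum_k p_k^2 \V(\widehat A_k) \lesssim K^{-1}\max_k \E[\widehat A_k^2]$. We bound $\E[\widehat A_k^2]$ in three steps.
\begin{enumerate}[label=(\roman*)]
\item Bound the coefficients of $Q_m^\ast$. Since $Q^\ast_m$ is bounded by $2H$ on $[-H,H]$, the Chebyshev-expansion argument (as in \citet{cai2011}, Lemma 2) gives $|q^\ast_{j,m}| \le H^{1-2j} 2^{3m}$.
\item Bound the Hermite moments. For $|h|/\sigma\le H/\underline\sigma$ we have $\E[\Theta_{2j}(Z/\sigma)^2] \le (C j)^{2j}$ with $C$ depending on $H/\underline\sigma$. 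This follows from $\Theta_{2j}(W+\mu) = \sum_i \binom{2j}{i}\mu^{2j-i}\Theta_i(W)$ and $\E\Theta_i(W)^2 = i!$.
\item Combine (i) and (ii) with Cauchy--Schwarz, yielding $\E\widehat A_k^2 \lesssim (m+1)^2 \, 2^{6m} (C' m)^{2m}$, where $C'$ depends on $H,\overline\sigma,\underline\sigma$.
\end{enumerate}
Hence the variance is at most $K^{-1}\exp(2m\log m + O(m))$. With $m = \lfloor \tfrac14 \log K/\log\log K\rfloor$ we have $2m\log m \le \tfrac12\log K$, and $O(m) = o(\log K)$. The variance is therefore $O(K^{-1/2+o(1)})$, which is negligible next to the squared bias. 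This step is the main technical obstacle: the constants in the coefficient and Hermite bounds must be tracked carefully enough that the $m\log m$ exponent, rather than something like $m^2$, controls growth. We would follow \citet{cai2011}, Theorem 3, verbatim here, checking only that heterogeneous $\sigma_k\in[\underline\sigma,\overline\sigma]$ and weights $p_k\lesssim 1/K$ change constants alone.

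\textbf{Lower bound.} In the submodel $p_k=1/K$, $\sigma_k=\sigma$, we have $\sqrt n V^\ast = \frac1{2K}\sum_k h_k + \frac1{2K}\sum_k|h_k|$. We would follow \citet{cai2011}, Theorem 4, via a two fuzzy hypotheses argument. Take two priors $\mu_0,\mu_1$ on $[-H,H]$ that match their first $2m$ moments. Choose them symmetric, so the linear term agrees in expectation, while $\int|x|\,d\mu_1 - \int|x|\,d\mu_0 \asymp H/m$; this uses duality with best polynomial approximation. Draw $h_k$ i.i.d. from each prior, and bound the $\chi^2$-distance between the resulting Gaussian mixtures by $\bigl(1+ \sum_{j>2m} (H/\sigma)^{2j}/j!\bigr)^K - 1$. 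This is $o(1)$ when $m\asymp \log K/\log\log K$. The concentration of the empirical averages of $|h_k|$ and $h_k$ around their prior means at rate $K^{-1/2}$ is negligible, so Le Cam's method yields the lower bound $(\log\log K/\log K)^2$.
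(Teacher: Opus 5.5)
Your proposal is correct and follows the paper's proof essentially step for step. It uses the same split into a mean-zero linear term and a Hermite term, Bernstein's $H/m$ bias bound, and the coefficient bound $|q^\ast_{j,m}|\le H^{1-2j}2^{3m}$ from \citet{cai2011} combined with a $(Cj)^{2j}$-type Hermite moment bound to get variance $\lesssim K^{-1}A^m m^{2m}$, then balances at $m_K$ in the same way. The only differences are that you derive the Hermite moment bound from the Appell identity (where the paper cites \citet{cai2011}, Lemma~3) and that you sketch the moment-matching fuzzy-hypotheses lower bound, which the paper simply cites from \citet{cai2011}.
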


The proof is given in Appendix~\ref{app:value}. As with
Theorem~\ref{thm:minimax_regret}, the optimality claim covers estimators that
additionally use the task features. The results in this section show that, in the weak-signal regime, the sign vector $\{\indicator{\tau_k>0}\}_{k=1}^K$ cannot be 
uniformly recovered, but its associated optimal value
$\sum_k p_k \tau_k^+$ can be estimated consistently when the number of tasks grows, even when per-task information is fixed or decreases. However, the slow logarithmic rate limits the utility of the resulting optimal estimator in practice. 

\section{Policy Learning by Linear Smoothing}\label{sec:linear_smoothing}
In Section~\ref{sec:policy_learning}, we showed that, without restrictions linking tasks, no decision rule can achieve vanishing worst-case welfare regret. Writing $h_k=g(x_k)$, this setting is equivalent to allowing $g$ to vary arbitrarily over the task features. We now show that if $g$ is smooth, one can borrow information across tasks, making treatment decisions easier while introducing an approximation bias that can be controlled.

\begin{assumption}[H\"older-Smooth Signal]
\label{ass:holder_smoothness}
For fixed $\beta\in(0,1]$ and $L,H>0$, the task-level signals satisfy
$h_k=g(x_k)$ for every $k\in[K]$, where
$g\in\cH_d(\beta,L,H)$ and
\begin{align*}
    \cH_d(\beta, L, H) &:= \Big\{g: \R^d \to \R: \|g\|_\infty \leq H, 
    \\
    &\quad |g(x) - g(y)| \leq L \|x - y\|_2^\beta, \, \forall x, y \in \R^d\Big\}.
\end{align*}
\end{assumption}
Assumption~\ref{ass:holder_smoothness} is standard in nonparametric theory \citep{stone1982optimal,lepski1999estimation,audibert2007fast}. We restrict attention to $\beta\in(0,1]$, which does not require differentiability of the CATE.

For $g\in\cH_d(\beta,L,H)$, write $P_{g,\bm\sigma}$ for a joint law in $\cP_K$ with $h_k=g(x_k)$. Define the oracle value, expected welfare and scaled regret by, respectively,
\begin{align*}
    & V^\ast_g := \frac{1}{\sqrt{n}}\sum_{k = 1}^K p_k g(x_k)^+ \\
    & W_{P_{g, \bm{\sigma}}}(\widetilde\pi) := \E_{P_{g, \bm{\sigma}}}[V(\widetilde\pi(Z, \widehat \sigma))] \\
    & R_K(P_{g, \bm{\sigma}}, \widetilde\pi) := \sqrt{n}\lt\{V^\ast_g - W_{P_{g, \bm{\sigma}}}(\widetilde\pi)\rt\}.
\end{align*}
Appendix~\ref{app:binary_classification} connects our treatment-choice problem to an auxiliary binary classification and shows that $R_K$ is equivalent to the corresponding expected excess risk.

\subsection{Linear Smoothing}
We establish that linear smoothers achieve vanishing regret and are minimax optimal in our regime.

\begin{definition}[Linear Smoother]
\label{def:linear_smoother}
Let $w_{j} :\R^d \to \R$ be deterministic weight functions
satisfying $\sum_{j=1}^K w_{j}(x)=1.$
Then
\[
    \widehat g_w(x)
    :=
    \sum_{j=1}^K w_{j}(x)Z_j,
\]
is a linear smoother for signal $g$ and induces the (threshold) policy $\widetilde\pi_w(x) = \indicator{\widehat g_w(x)>0}$.
\end{definition}
Definition~\ref{def:linear_smoother} permits negative weights and ignores $\widehat \sigma$. We focus on linear smoothers because they are interpretable and are easily fitted to task-level data, making them desirable in practice. Moreover, for estimation, linear estimators have been shown to be minimax or near minimax optimal in certain settings \citep{donoho1990minimax,donoho1994minimax, donoho1998minimax}.
Under Assumption~\ref{ass:holder_smoothness}, we show that rules based on linear smoothers can yield vanishing welfare regret and achieve minimax-optimal rates.\footnote{In Proposition~\ref{pro:welfare_decomposition} (Appendix~\ref{app:additional_theoretical_results}), we show that the expected welfare of a policy induced by a linear smoother, $W_{P_{g, \bm{\sigma}}}(\widetilde\pi_w)$, decomposes into a task-invariant baseline term and a personalization term that captures whether tasks with larger treatment effects are more likely to receive treatment.}

We focus on three classes of linear smoothers: partition pooling ($\mathrm{P}$), kernel smoothing ($\mathrm{K}$) and series regression ($\mathrm{S}$), which are summarized in
Table~\ref{tab:smoothers}. Each is indexed by a
\emph{resolution} parameter $r>0$, which controls the scale of smoothing. Precise definitions of these classes are in
Appendix~\ref{app:smoothers}.

\begin{table*}[ht]
\centering
\small
\renewcommand{\arraystretch}{1.6}
\begin{tabular}{@{}p{0.15\linewidth} p{0.2\linewidth} p{0.6\linewidth}@{}}
\toprule
Smoother & Weight $w_{r,j}(x)$ & Conditions \\
\midrule
Partition pooling
  & $\dfrac{p_j\,\indicator{x_j\in A_r(x)}}{\sum_{\ell:\,x_\ell\in A_r(x)}p_\ell}$
  & $\cA_r$ a finite partition of $\cX$; $A_r(x) \in \cA_r$ the cell containing $x$; \newline 
     $\max_{A\in\cA_r}\sup_{x,y\in A}\|x-y\|_2\le r$; \\
Kernel smoothing
  & $\dfrac{\sK\lt((x_j-x)/r\rt)}{\sum_{\ell=1}^K\sK\lt((x_\ell-x)/r\rt)}$
  & $\sK:\R^d\to[0,\infty)$ bounded, supported on $B(0,1)$,
    continuous at $0$, \newline $\sK(0)>0$. \\
Series regression
  & $a_j\, b_r(x)^\top G_{r,\bm a}^{-1} b_r(x_j)$
  & $a_k>0$, $\sum_k a_k=1$; $b_r:\cX\to\R^{q_r}$ with $q_r\le K$,
    $q_r\asymp r^{-d}$; \newline constants $\in\operatorname{span}(b_r)$;
    $G_{r,\bm a}:=\sum_{k}a_k b_r(x_k)b_r(x_k)^\top$ nonsingular. \\
\bottomrule
\end{tabular}
\caption{Linear smoothers at resolution $r$.}
\label{tab:smoothers}
\end{table*}

\subsection{Approximation--Estimation Tradeoff}

Linear smoothers can reduce variance at the cost of introducing bias. For a linear smoother $\widehat g_w$, denote its mean $m_w(x) := \E_{g, \bm{\sigma}}[\widehat g_w(x)]  = \sum_{k=1}^K w_k(x)g(x_k)$ and variance $s_w^2(x) := \text{Var}_{g, \bm{\sigma}}(\widehat g_w(x)) = \sum_{k=1}^K w_k(x)^2\sigma_k^2$.

\begin{proposition}[Linear-Smoothing Tradeoff]\label{pro:smoothing_tradeoff}
Suppose Assumptions~\ref{ass:gaussian_model}--\ref{ass:holder_smoothness} hold. The policy induced by
any linear smoother $\widehat g_w$ satisfies
    \begin{align*}
         R_K(P_{g, \bm{\sigma}}, \widetilde\pi_w) \leq \min\{H, \sA_g(w) + \gamma^\ast \sE_{\bm{\sigma}}(w)\},
    \end{align*}
    where $\sA_g(w) = \sum_{k = 1}^K p_k |g(x_k) - m_w(x_k)|$ is the \emph{approximation} and $\sE_{\bm{\sigma}}(w) = \sum_{k = 1}^K p_k s_w(x_k)$ is the \emph{estimation} error, and $\gamma^\ast =\sup_{t\geq0}t\Phi(-t)$. This bound is sharp over the class of linear smoothers.
\end{proposition}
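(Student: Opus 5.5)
The plan is to bound the regret task by task and then sum. Write $h_k=g(x_k)$, $m_k=m_w(x_k)$, $s_k=s_w(x_k)$. Under Assumption~\ref{ass:gaussian_model}, $\widehat g_w(x_k)\sim\cN(m_k,s_k^2)$. Since $\widetilde\pi_w$ ignores $\widehat\sigma$, the regret is
\[
R_K(P_{g,\bm\sigma},\widetilde\pi_w)=\sum_{k=1}^K p_k\,\rho_k,
\qquad
\rho_k:=h_k\bigl(\indicator{h_k>0}-P(\widehat g_w(x_k)>0)\bigr)=|h_k|\,P\bigl(\mathrm{sign}\,\widehat g_w(x_k)\neq \mathrm{sign}\,h_k\bigr)\ge 0,
\]
with the convention that the misclassification probability is $\Phi(-m_k/s_k)$ if $h_k>0$ and $\Phi(m_k/s_k)$ if $h_k\le 0$. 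The trivial bound comes first: $\rho_k\le |h_k|\le H$ and $\sum_k p_k=1$ give $R_K\le H$.

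For the main bound, the claim to prove is the per-task inequality
\[
\rho_k\le |h_k-m_k|+\gamma^\ast s_k .
\]
Summing it against $p_k$ yields $\sA_g(w)+\gamma^\ast\sE_{\bm\sigma}(w)$. By symmetry, take $h_k>0$.

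\emph{Case $m_k\le 0$.} Here $\rho_k\le h_k\le h_k-m_k=|h_k-m_k|$.

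\emph{Case $m_k>0$.} Here $\rho_k=h_k\Phi(-m_k/s_k)\le (h_k-m_k)^+ + m_k\Phi(-m_k/s_k)$. The second term equals $s_k\cdot t\Phi(-t)$ with $t=m_k/s_k$, so it is at most $\gamma^\ast s_k$.

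The degenerate case $s_k=0$ is handled directly: the misclassification probability is then $0$ or $1$, and either way the bound follows from the first case. This is the entire upper bound; the only substantive step is the decomposition $h_k=(h_k-m_k)+m_k$ followed by maximizing $t\Phi(-t)$.

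The main obstacle is sharpness over the class of linear smoothers. I read this as saying that the constants $1$ and $\gamma^\ast$ cannot be improved uniformly over weights $w$ and signals $g$. The plan is to exhibit cases attaining each term.

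\emph{Estimation constant.} Take the identity smoother $w_j(x_k)=\indicator{j=k}$, which has $\sA_g=0$ and $s_k=\sigma_k$. Let $g$ be H\"older with $g(x_k)=t^\ast\sigma_k$ wherever feasible, where $t^\ast$ maximizes $t\Phi(-t)$. This is possible if the $x_k$ are well separated, or in the limiting case $K=1$ with constant $g$ and $H\ge t^\ast\bar\sigma$. Then $\rho_k=\gamma^\ast\sigma_k$ exactly, so the bound is attained.

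\emph{Approximation constant.} Take a smoother whose mean $m_k$ is slightly negative while $h_k>0$, with $s_k\to0$; for example, a constant smoother averaging many tasks whose signals are negative elsewhere. Then $\rho_k=h_k\approx |h_k-m_k|$ as $m_k\uparrow 0$, so the approximation term cannot carry a smaller constant.

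\emph{Cap.} When the bound exceeds $H$, $\min\{H,\cdot\}$ is attained by sending $h_k=H$ with $m_k<0$ and $s_k$ small.

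I expect the delicate part to be making these constructions compatible with Assumption~\ref{ass:holder_smoothness} and $\|g\|_\infty\le H$. I would handle this by choosing sparse designs, or by arguing sharpness in the sense that for each term there is a sequence of $(g,w)$ along which the ratio of regret to bound tends to one.
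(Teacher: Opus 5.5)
Your upper bound is the paper's argument essentially verbatim: the same split on the sign of $g(x_k)m_w(x_k)$, the decomposition $|g(x_k)|\le|g(x_k)-m_w(x_k)|+|m_w(x_k)|$, maximization of $t\Phi(-t)$, and the trivial cap $H$. Your case $s_k=0$ cannot occur, since $\sum_j w_j(x_k)=1$ forces some $w_j(x_k)\ne0$ and $\sigma_j\ge\underline\sigma>0$. The gaps are in the sharpness part.

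\emph{Estimation constant.} Your example (identity smoother, $g(x_k)=t^\ast\sigma_k$) needs $t^\ast\sigma_k\le H$, i.e.\ $H\ge t^\ast\underline\sigma$ with $t^\ast\approx0.75$. The assumptions do not order $H$ and $\underline\sigma$; the paper's own simulation has $H=0.39$ and $\sigma_k=1$. This is a sup-norm constraint, so sparse designs do not help. When $H<t^\ast\underline\sigma$, the identity smoother's regret at task $k$ is at most $\max_{0\le t\le H}t\Phi(-t/\sigma_k)<\gamma^\ast\sigma_k$, so it cannot show that $\gamma^\ast$ is unimprovable. The missing idea is to pool, so that $s_w$ drops below $H/t^\ast$. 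The paper takes $p_k=1/K$, $\sigma_k=\sigma$, uniform weights $w_j\equiv1/K$, and the constant signal $g\equiv t^\ast\sigma/\sqrt K$. This signal lies in $\cH_d(\beta,L,H)$ for large $K$ with no design condition, which also removes your need for well-separated $x_k$. Then $\sA_g(w)=0$, $s_w(x_k)=\sigma/\sqrt K$, and $R_K=\gamma^\ast\sigma/\sqrt K=\gamma^\ast\sE_{\bm\sigma}(w)$ exactly.

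\emph{Approximation constant.} You argue for a single task, but the bound is aggregated. It also contains $\gamma^\ast\sE_{\bm\sigma}(w)$ summed over all tasks, including the ones you average over. Suppose the misclassified positive tasks carry vanishing mass, e.g.\ one positive task of weight $1/K$ among $K-1$ averaged negative ones. Then $R_K\approx\sA_g(w)\asymp 1/K$ while $\gamma^\ast\sE_{\bm\sigma}(w)\asymp K^{-1/2}$, so regret over bound tends to zero and nothing is shown. You need the misclassified tasks to carry non-vanishing mass, and the negative margin to dominate the pooled noise. The paper puts half the tasks at $g=H$ and half at $g=-\varepsilon_K$ with $\varepsilon_K=HK^{-1/4}$, far enough apart in the first coordinate for a $\beta$-H\"older transition. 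It uses weights $w_j=\frac{2}{K}\indicator{g(x_j)\le0}$, giving $m_w\equiv-\varepsilon_K$, $s_w\equiv\sigma\sqrt{2/K}$ and $\varepsilon_K/s_w\to\infty$. Hence $R_K\to H/2$, $\sA_g(w)\to H/2$ and $\sE_{\bm\sigma}(w)\to0$. The same aggregation remark applies to your cap example, which the paper does not need.
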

The proof is given in Appendix~\ref{app:smoothing_tradeoff}. The two terms in Proposition~\ref{pro:smoothing_tradeoff} represent a bias--variance tradeoff and formalize the costs and benefits of smoothing. As the weights become more concentrated around a task's covariate, the smoother estimates the signal more locally and reduces approximation error, but combines fewer noisy task estimates and therefore generally increases estimation error. More diffuse weights, on the contrary, yield better estimation accuracy at the cost of larger approximation error. The resolution parameter $r$ governs this tradeoff.

Under Assumption~\ref{ass:holder_smoothness}, the smoothers specified in Table~\ref{tab:smoothers} satisfy
$\sA_g(w_r^T)\lesssim Lr^\beta$ for $T\in\{\mathrm P,\mathrm K,\mathrm S\}$. To control the corresponding estimation errors, we impose the following condition on the design, which is a discrete analogue of the regularity condition of \citet{david1991singular} and implied by the strong density assumption of \citet{audibert2007fast}.

Let $Q_K = \sum_{k = 1}^K p_k \delta_{x_k}$ denote the task measure and let $B(x, r) = \{y \in \R^d: \|x - y\|_2 \leq r\}$ be the closed Euclidean ball centered at $x$ with radius $r$.
\begin{assumption}[Quasi-Uniform Design]
\label{ass:quasi_uniform_design}
There exist constants $0<c\leq C<\infty$, $a_0, r_0>0$, such that
\[
    c r^d
    \leq
    Q_K\lt(B(x_k,r)\rt)
    \leq
    Cr^d
\]
for every $k\in[K]$ and $r\in[a_0 K^{-1/d},r_0]$.
\end{assumption}
Assumption~\ref{ass:quasi_uniform_design} requires the welfare mass of each ball centered at a task to be proportional to its volume. The cutoff at $K^{-1/d}$ is necessary because $Q_K$ is discrete. Indeed, for arbitrarily small $r$, we have $Q_K(B(x_k,r))\geq p_k>0$, so the mass cannot shrink to zero. The resolution $K^{-1/d}$ corresponds to the regular spacing of $K$ points in $d$ dimensions.

\begin{theorem}[Regret Bounds]\label{thm:regret_linear_smoother}
    Suppose Assumptions~\ref{ass:gaussian_model}--\ref{ass:quasi_uniform_design}
    hold, and let $r\in[a_0K^{-1/d},r_0]$.
    Let $\widetilde\pi_{w_r^T}$, $T\in\{\mathrm P,\mathrm K,\mathrm S\}$,
    be a linear smoothing policy satisfying the conditions in
    Appendix~\ref{app:smoother_conditions}. Then there exists $C > 0$ independent of $(K, r, g, \overline{\sigma})$, such that 
    \begin{align*}
        R_K(P_{g,\bm{\sigma}},\widetilde\pi_{w_r^T}) \leq C \lt\{Lr^\beta + \frac{\overline{\sigma}}{\sqrt{K r^d}}\rt\}.
    \end{align*}
     In particular, for $r_K^\ast = \lt( \frac{d \overline{\sigma}}{2\beta L\sqrt K}\rt)^{\frac{2}{2\beta+d}}$,
    \begin{align*}
        R_K(P_{g,\bm{\sigma}},\widetilde\pi_{w_{r_K^\ast}^T}) \leq C L^{d/(2\beta+d)}
    \overline\sigma^{2\beta/(2\beta+d)}
    K^{-\beta/(2\beta+d)}.
    \end{align*}
\end{theorem}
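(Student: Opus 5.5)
The plan is to start from Proposition~\ref{pro:smoothing_tradeoff}, which gives $R_K(P_{g,\bm\sigma},\widetilde\pi_{w_r^T}) \le \sA_g(w_r^T) + \gamma^\ast \sE_{\bm\sigma}(w_r^T)$. This reduces the theorem to two deterministic bounds, uniform over $T\in\{\mathrm P,\mathrm K,\mathrm S\}$:
\[
\sA_g(w_r^T)\lesssim L r^\beta, \qquad \sE_{\bm\sigma}(w_r^T)\lesssim \overline\sigma (K r^d)^{-1/2}.
\]
Once both hold, the choice $r_K^\ast$ is found by minimizing $Lr^\beta + \overline\sigma K^{-1/2} r^{-d/2}$ over $r$. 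Setting the derivative to zero gives exactly the stated $r_K^\ast$. Substituting it back, both terms are of order $L^{d/(2\beta+d)}\overline\sigma^{2\beta/(2\beta+d)}K^{-\beta/(2\beta+d)}$. We must still check that $r_K^\ast\in[a_0K^{-1/d},r_0]$ for $K$ large. This holds because $K^{-1/(2\beta+d)}\gg K^{-1/d}$; otherwise constants are absorbed, using the trivial bound $H$.

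\textbf{Approximation.} For partition pooling and kernels the weights are nonnegative, sum to one, and are supported on $B(x,r)$ (kernel) or on the cell $A_r(x)$ of diameter at most $r$. By Hölder continuity, $|g(x_k)-m_w(x_k)|\le \sum_j |w_j(x_k)|\,L\|x_j-x_k\|^\beta\le Lr^\beta$. For series regression the weights may be negative. Here we use that constants lie in the span of $b_r$, so $m_w - g(x_k)=\sum_j w_j(x_k)(g(x_j)-g(x_k))$. The conditions in Appendix~\ref{app:smoother_conditions} should supply localization of the weights: $\sup_x\sum_j|w_j(x)|\,\|x_j-x\|^\beta\lesssim r^\beta$, for example via a local-polynomial/spline basis whose Gram matrix has a banded, well-conditioned inverse. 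I would invoke that condition directly.

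\textbf{Estimation.} We have $s_w(x_k)\le \overline\sigma(\sum_j w_j(x_k)^2)^{1/2}$. For partition pooling, $\sum_j w_j^2 = \sum_{j\in A}p_j^2/(\sum_{A}p_\ell)^2\le \max_j p_j / Q_K(A)$. Quasi-uniformity (Assumption~\ref{ass:quasi_uniform_design}) gives $p_j\le Q_K(B(x_j,a_0K^{-1/d}))\lesssim K^{-1}$. The appendix conditions should ensure each cell contains a ball of radius $\asymp r$ around some task, so $Q_K(A)\gtrsim r^d$, yielding $\sum_j w_j^2\lesssim (Kr^d)^{-1}$. For kernels, boundedness of $\sK$ controls the numerator as $\sum_j \sK^2\le \|\sK\|_\infty\sum_{\|x_j-x\|\le r}\sK$. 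For the denominator, continuity at $0$ with $\sK(0)>0$ gives $\sK\ge \sK(0)/2$ on $B(0,\eta)$, so $\sum_\ell \sK\gtrsim \#\{\ell: \|x_\ell-x_k\|\le \eta r\}\gtrsim K r^d$. This count comes from Assumption~\ref{ass:quasi_uniform_design} together with $p_\ell\lesssim 1/K$, once the weights $p$ are related to counts (or the kernel is weighted consistently). For series, $\sum_j w_j(x)^2 = \sum_j a_j^2 (b_r(x)^\top G^{-1}b_r(x_j))^2\le \max_j a_j\, b_r(x)^\top G^{-1}b_r(x)$. Averaging over $p_k$ and using $\sum_k a_k\, b_r(x_k)^\top G^{-1} b_r(x_k)=q_r$ gives order $q_r/K\asymp (Kr^d)^{-1}$, provided $a\asymp p\asymp 1/K$. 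Jensen's inequality then moves the square root outside the $p$-average.

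\textbf{Main obstacle.} I expect the series case to be the hardest step: both localization of the possibly negative weights (the approximation bound) and the comparability of $a$ and $p$ (the estimation bound) are not automatic. I would rely on the stability conditions stated in Appendix~\ref{app:smoother_conditions}, namely a uniform bound on the Lebesgue constant $\sup_x\sum_j|w_j(x)|$ and weight localization at scale $r$. If only the Lebesgue constant is available, I would fall back to a best-approximation argument instead. Here $m_w$ is the $a$-weighted projection of $g$, so $\|g-m_w\|_\infty\le(1+\Lambda)\inf_{f\in\mathrm{span}\, b_r}\|g-f\|_\infty\lesssim Lr^\beta$, with $\Lambda$ the Lebesgue constant, using a piecewise-constant element of the span.
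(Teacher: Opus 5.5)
Your reduction via Proposition~\ref{pro:smoothing_tradeoff}, the approximation bounds for partition pooling and kernels, and the three estimation bounds essentially match the paper's proof. For series, the appendix fixes $a_k=p_k$, so the ``comparability'' issue you flag disappears. Indeed $\sum_k p_k\, b_r(x_k)^\top G_{r,\bm p}^{-1}b_r(x_k)=\operatorname{tr}(I_{q_r})=q_r$ exactly, and $\max_j p_j\le Ca_0^d/K$ follows from Assumption~\ref{ass:quasi_uniform_design} at radius $a_0K^{-1/d}$.

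The genuine gap is the series \emph{approximation} term. The spline conditions in Appendix~\ref{app:smoother_conditions} are only these: nonnegative tensor-product B-splines forming a partition of unity with knot spacing of order $r$, $a_k=p_k$, constants in the span, $q_r\asymp r^{-d}$, and $G_{r,\bm p}$ nonsingular. There is no weight-localization hypothesis $\sup_x\sum_j|w_j(x)|\,\|x_j-x\|_2^\beta\lesssim r^\beta$ and no uniform bound on the Lebesgue constant. Both your main route and your fallback therefore rest on assumptions you do not have. Nor do these follow from what is stated: nonsingularity of $G_{r,\bm p}$ gives no quantitative conditioning, and sup-norm stability of discrete spline least squares would need a separate design-versus-knot analysis. (For B-splines of positive degree, piecewise-constant functions are also not in the span.)

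The missing idea is that sup-norm stability is unnecessary. The quantity $\sA_g$ is an $L^1(Q_K)$ error. With $a_k=p_k$ you have $m_{w_r^{\mathrm S}}(x_k)=b_r(x_k)^\top\theta^\ast$, where $\theta^\ast=G_{r,\bm p}^{-1}\sum_k p_kb_r(x_k)g(x_k)$ minimizes $\theta\mapsto\sum_k p_k(g(x_k)-b_r(x_k)^\top\theta)^2$. Hence, by Jensen and least-squares optimality, for any $\theta^\circ$,
\[
\sA_g(w_r^{\mathrm S})\le\Bigl(\sum_{k=1}^K p_k\bigl(g(x_k)-m_{w_r^{\mathrm S}}(x_k)\bigr)^2\Bigr)^{1/2}\le\Bigl(\sum_{k=1}^K p_k\bigl(g(x_k)-b_r(x_k)^\top\theta^\circ\bigr)^2\Bigr)^{1/2}\le\sup_{x\in\cX}\bigl|g(x)-b_r(x)^\top\theta^\circ\bigr|.
\]
Now take $\theta^\circ_\nu=g(\xi_\nu)$ with $\xi_\nu\in\operatorname{supp}(b_{r,\nu})\cap\cX$; this set is nonempty, since otherwise $G_{r,\bm p}$ would have a zero row. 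Nonnegativity, the partition of unity, and support diameter at most $C_{\mathrm S}r$ then give $|g(x)-\sum_\nu g(\xi_\nu)b_{r,\nu}(x)|\le\sum_\nu b_{r,\nu}(x)|g(x)-g(\xi_\nu)|\le L(C_{\mathrm S}r)^\beta$. This is the paper's argument, and it uses only the listed conditions; no bound on $\sum_j|w_j|$ is ever needed.

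A smaller omission is in the kernel estimation bound. The count $\#\{\ell:\|x_\ell-x_k\|_2\le\eta r\}\ge Q_K(B(x_k,\eta r))/\max_\ell p_\ell\gtrsim Kr^d$ requires $\eta r\ge a_0K^{-1/d}$ for Assumption~\ref{ass:quasi_uniform_design} to apply. When $\eta r<a_0K^{-1/d}$, use instead $\sum_j w_j(x_k)^2\le1\le(a_0/\eta)^d(Kr^d)^{-1}$, as the paper does.
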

The proof is given in Appendix~\ref{app:regret_linear_smoother}. Gaussianity (Assumption~\ref{ass:gaussian_model}) is not necessary for the result; any distribution with appropriate second moment bound suffices. Without a margin condition, the rate $K^{-\beta/(2\beta+d)}$ matches the classical nonparametric regression rate in \citep{stone1982optimal} and the binary classification rate in \citet{yang1999minimax} and \citet{audibert2007fast}. Novel to the existing literature, Theorem~\ref{thm:regret_linear_smoother} derives this rate for three concrete smoothers for welfare regret under a fixed-design, heteroskedastic Gaussian observation model with nonuniform weights.

Theorem~\ref{thm:regret_linear_smoother} explicitly shows that
\[
    \sA_g(w_r)\lesssim Lr^\beta,
    \qquad
    \sE_{\bm\sigma}(w_r)
    \lesssim
    \frac{\overline\sigma}{\sqrt{Kr^d}}.
\]
Thus any sequence satisfying $r_K\to0$ and $Kr_K^d\to\infty$ yields vanishing worst-case regret under the estimators and assumptions of Theorem~\ref{thm:regret_linear_smoother}. The oracle resolution balances the two terms. However, the oracle resolution depends on the unknown smoothness parameters $(\beta, L)$. Nonetheless, in Proposition~\ref{pro:cross_validation} (Appendix~\ref{app:additional_theoretical_results}), we show that cross-validation can be used to select the resolution within a suitable growing sieve of linear smoothers while preserving the rate in
Theorem~\ref{thm:regret_linear_smoother}. Thus, the selected policy is minimax adaptive, mirroring similar results in nonparametric regression \citep{tsybakov2009introduction}.

\subsection{When Linear Smoothers are Minimax Optimal}

The next result shows that linear smoothers can be minimax optimal under Assumptions~\ref{ass:holder_smoothness}--\ref{ass:quasi_uniform_design}.
\begin{theorem}[Minimax Lower Bound]
\label{thm:minimax_lower_bound}
Suppose Assumptions~\ref{ass:gaussian_model}--\ref{ass:quasi_uniform_design} hold. Then there exists a constant $c>0$ such that for $K$ sufficiently large and $r \in [2 a_0 K^{-1/d}, r_0/2]$,
\begin{align*}
        \inf_{\widetilde\pi:\R^K \times [0, \infty)^K\to[0,1]^K}
        &\sup_{\substack{
        P_{g,\bm{\sigma}}\in\cP_K\\
        g\in\cH_d(\beta,L,H)
    }}
    R_K(P_{g,\bm{\sigma}},\widetilde\pi)
    \\
    &\qquad \quad \geq c
    \min\lt\{ H, Lr^\beta, \frac{\overline\sigma}{\sqrt{Kr^d}}\rt\}.
\end{align*}
In particular, for $r_K^\ast = \lt( \frac{d \overline{\sigma}}{2\beta L\sqrt K}\rt)^{\frac{2}{2\beta+d}}$,
\begin{align*}
         \inf_{\widetilde\pi:\R^K \times [0, \infty)^K\to[0,1]^K}
        &\sup_{\substack{
        P_{g,\bm{\sigma}}\in\cP_K\\
        g\in\cH_d(\beta,L,H)
    }}
    R_K(P_{g,\bm{\sigma}},\widetilde\pi) 
    \\
    &\, \geq c L^{d/(2\beta+d)}
    \overline\sigma^{2\beta/(2\beta+d)}
    K^{-\beta/(2\beta+d)}.
\end{align*}
\end{theorem}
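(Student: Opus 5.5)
We prove the lower bound with an Assouad-type construction: a hypercube of bumps at scale $r$ in the signal, combined with the weighted Assouad inequality (Lemma~\ref{lem:weighted_assouad}) already used for Theorem~\ref{thm:minimax_regret}. Throughout we restrict to the submodel with $\sigma_k=\overline\sigma$ for all $k$ and $\widehat\sigma$ degenerate at $\bm\sigma$. This makes $\widehat\sigma$ uninformative, so lower bounds for rules of $Z$ alone suffice.

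\textbf{Step 1: packing.} Pick a maximal $4r$-separated subset $\{y_1,\dots,y_M\}$ of the task features $\{x_k\}$. By Assumption~\ref{ass:quasi_uniform_design} with radii $r$ and $4r$, each ball $B(y_i,r)$ has $Q_K$-mass between $c r^d$ and $C r^d$, and the balls $B(y_i,8r)$ cover all tasks. Hence $1\le \sum_i Q_K(B(y_i,8r))\le M C 8^d r^d$, which gives $M\gtrsim r^{-d}$. The admissible range $r\in[2a_0K^{-1/d},r_0/2]$ is what keeps every radius used here inside the window $[a_0K^{-1/d},r_0]$.

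\textbf{Step 2: perturbations.} Fix a Lipschitz bump $\phi(u)=(1-\|u\|_2)^+$ and set
\[
g_\omega(x)=\delta\sum_{i=1}^M \omega_i\,\phi\bigl((x-y_i)/r\bigr),\qquad \omega\in\{-1,1\}^M,
\]
with amplitude $\delta=c_0\min\{H,\,Lr^\beta,\,\overline\sigma/\sqrt{Kr^d}\}$. The bumps have disjoint supports. Since $\beta\le1$ and $\delta\lesssim Lr^\beta$, a standard check gives $g_\omega\in\cH_d(\beta,L,H)$. On the tasks inside bump $i$, the sign of $h_k$ equals $\omega_i$.

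\textbf{Step 3: reduction to testing.} A policy fixes the treatment for task $k$ in bump $i$ either way. Its regret contribution is $p_k|g_\omega(x_k)|$ times the probability of the wrong action, where the wrong action is treating when $\omega_i=-1$ and not treating when $\omega_i=1$. So the regret is a sum over $i$ of per-coordinate losses. Flipping $\omega_i$ makes the loss weights sum to at least $\sum_{k\in B(y_i,r)}p_k|g(x_k)|$. Restricting to $B(y_i,r/2)$, where $\phi\ge 1/2$, this is at least $\tfrac{\delta}{2}\,Q_K(B(y_i,r/2))\gtrsim \delta r^d$. The weighted Assouad lemma, which also handles randomized $\widetilde\pi$, then gives
\[
\sup_\omega R_K \gtrsim \sum_i \delta r^d\bigl(1-\max_{\omega\sim\omega'}\TV(P_\omega,P_{\omega'})\bigr).
\]

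\textbf{Step 4: neighbouring experiments.} For neighbouring $\omega,\omega'$,
\[
\mathrm{KL}(P_\omega,P_{\omega'})=\sum_k \frac{(2\delta\phi_k)^2}{2\overline\sigma^2}\le \frac{2\delta^2\,N_i}{\overline\sigma^2},
\]
where $N_i$ is the number of tasks in $B(y_i,r)$. This is the step I expect to be the main obstacle. The quasi-uniform assumption controls the mass $Q_K$, not the counts $N_i$, and we need $N_i\lesssim Kr^d$. The upper bound $Q_K(B(x_k,a_0K^{-1/d}))\le Ca_0^d/K$ gives $p_k\lesssim 1/K$ for every $k$, but turning that into a count bound needs $p_k\gtrsim 1/K$ as well.

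I would first try to get $\min_k p_k\gtrsim 1/K$ from the lower inequality of Assumption~\ref{ass:quasi_uniform_design} via a covering argument. If that fails, I would place bump centers only in regions where the counts are typical, which averaging guarantees for a constant fraction of the bumps: $\sum_i N_i\le K$ and $M\asymp r^{-d}$, so by Markov at least half the bumps have $N_i\le 2K/M\lesssim Kr^d$. Using only those bumps loses just a constant, since $\sum_{i\text{ kept}}\delta r^d$ is still of order $\delta$.

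With $N_i\lesssim Kr^d$ and $\delta\le c_0\overline\sigma/\sqrt{Kr^d}$, the KL divergence is below a small constant. By Pinsker the total variation is at most $1/2$. Summing over $M\gtrsim r^{-d}$ bumps of weight $\delta r^d$ then gives regret $\gtrsim\delta$, which is the first display.

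\textbf{Step 5: the rate.} Substituting $r_K^\ast$ makes the last two terms in the minimum comparable, both of order $L^{d/(2\beta+d)}\overline\sigma^{2\beta/(2\beta+d)}K^{-\beta/(2\beta+d)}$. For large $K$ this is below $H$, and $r_K^\ast$ lies in the admissible range, which gives the second display.
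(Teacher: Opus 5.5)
Your proposal is correct and follows the paper's proof in all essentials: bumps centred at a maximal packing of task features, $M\gtrsim r^{-d}$ from a covering argument with the upper quasi-uniform bound, discarding the half of the bumps that contain too many tasks, the per-ball mass bound on $B(z_j,r/2)$, and the weighted Assouad lemma, with KL plus Pinsker in place of the paper's exact Gaussian total-variation formula. The discarding step is your fallback, and it is exactly the Markov step in Lemma~\ref{lem:low_norm_packing}; it is also necessary, because Assumption~\ref{ass:quasi_uniform_design} permits tasks of negligible weight, even many clustered in one bump, so $\min_k p_k\gtrsim 1/K$ cannot be derived.

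One slip: with $4r$-separation and an $8r$ cover, the radii can exceed $r_0$ when $r$ is near $r_0/2$, contrary to your claim that every radius stays in the window. Use a maximal $2r$-packing as the paper does: its $2r$-balls cover the tasks and satisfy $2r\le r_0$.
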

The proof is given in Appendix~\ref{app:minimax_lower_bound}. The lower bound matches the minimax rates for binary classification \citep{yang1999minimax,audibert2007fast} under our observation model. Combined with Theorem~\ref{thm:regret_linear_smoother}, this establishes minimax-rate optimality of the partition-, kernel-, and B-spline-smoother policies over H\"older-smooth signals under quasi-uniform designs.

\section{Simulation Study}\label{sec:simulation}

\begin{figure*}[h!]
    \centering
    \includegraphics[width=\linewidth]{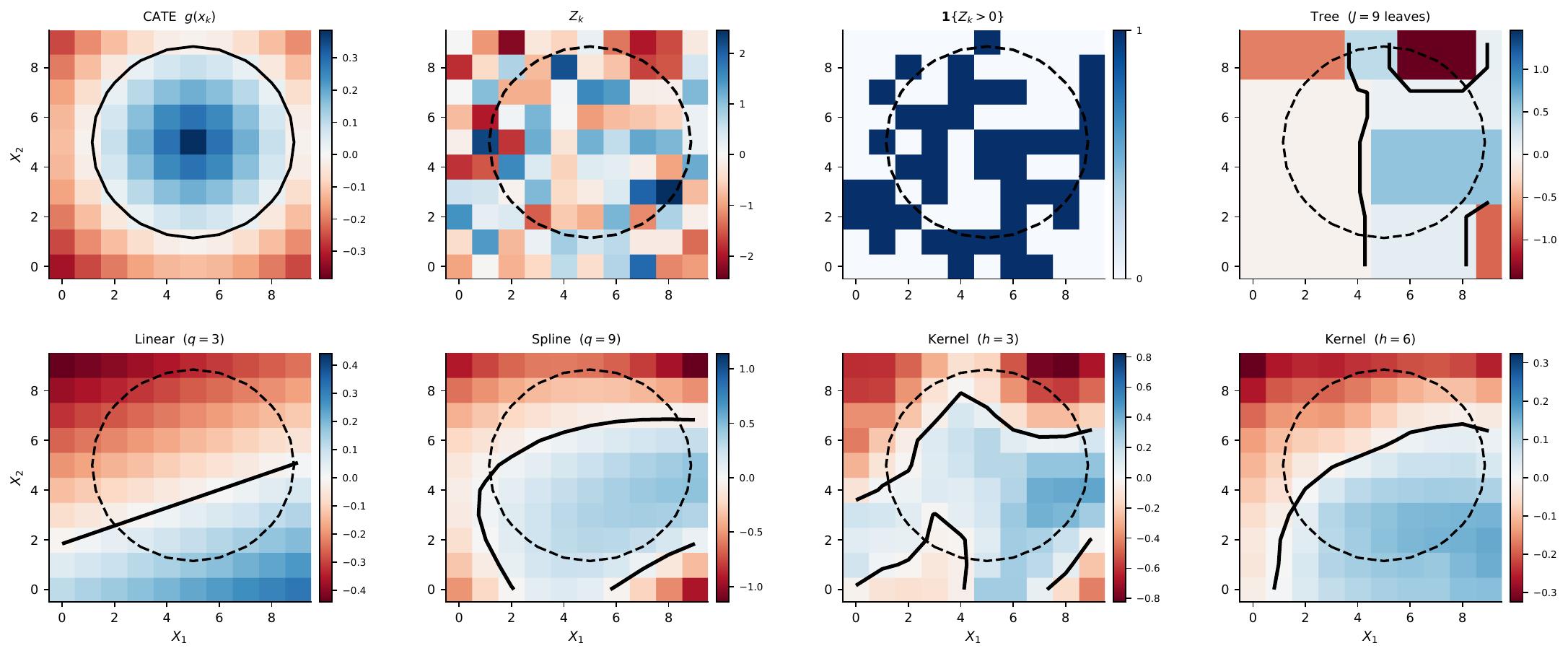}
    \caption{The first three panels show the true curved CATE $g$, one draw of the rescaled estimates $Z_k=\sqrt n\,\widehat\tau_k$, and the plug-in policy $\indicator{Z_k>0}$ that these estimates imply. The remaining panels show representative linear smoothers. The dashed line is the true frontier at which $g=0$, while the solid line represents the learner's decision boundary, $\widehat g_w=0$.}
    \label{fig:signal_curved}
\end{figure*}

We consider $K=100$ equally weighted tasks on a two-dimensional
grid, with unit-variance Gaussian noise and a curved, Lipschitz
signal $g_{\mathrm{crv}}$. The signal is centered, so treat-all and treat-none both have zero welfare. Appendix~\ref{app:simulation_design} provides additional detail on the simulation.

Figure~\ref{fig:signal_curved} illustrates this regime. The observed estimates display little structure, even though the underlying CATE is highly structured. Consequently, the plug-in policy $\indicator{Z_k > 0}$ mis-signs $45\%$ of tasks, against $50\%$ for a coin flip, and attains only $14\%$ of $V_{g_{\mathrm{crv}}}^\ast$. Comparing this pattern with Figure~\ref{fig:exp-1} shows that this simulation regime is relatively faithful to patterns observed in real-world data.

We consider three families of smoothers: axis-aligned trees with $J$ leaves, tensor-product B-splines of degree two with dimension $q$, and kernel smoothing with an Epanechnikov kernel of bandwidth $h$.

The remaining panels of Figure~\ref{fig:signal_curved} show representative smoothers at their oracle resolution predicted by Theorem~\ref{thm:regret_linear_smoother} ($J = q = 9$ and $h = 3$). Each adapt to the circular frontier and the best of them, the spline with $q = 9$, attains $46\%$ of $V_{g_{\mathrm{crv}}}^\ast$, more than three times the value of the plug-in policy. Table~\ref{tab:learners} reports the full comparison. In Figures~\ref{fig:tradeoff} and~\ref{fig:frontier} in Appendix~\ref{app:additional_simulation_results}, we illustrate the approximation--estimation tradeoff of Proposition~\ref{pro:smoothing_tradeoff} of the smoothers in terms of their resolution. Another simulation example is discussed in Appendix~\ref{app:linear_signal}.

\section{Empirical Application at Netflix}\label{sec:netflix}

\begin{figure*}[h!]
    \centering
    \includegraphics[width=\linewidth]{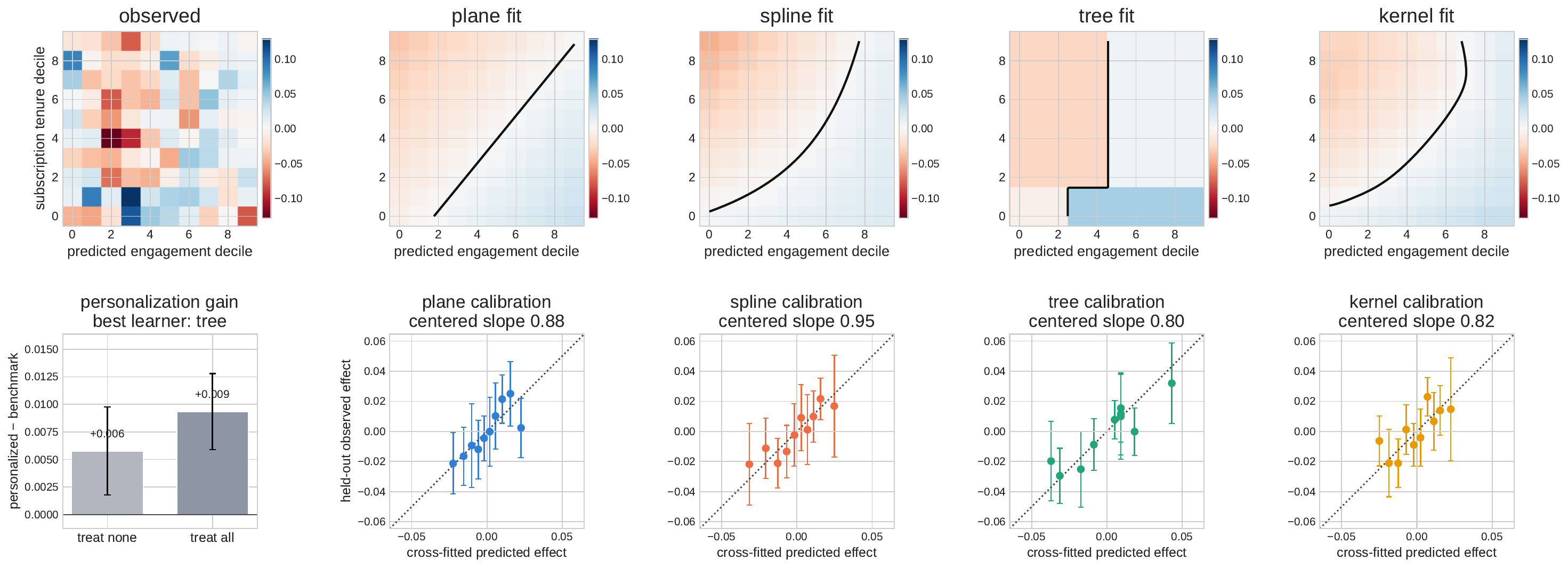}
    \caption{The top row shows the observed CATEs and full-sample fits of a plane, tensor-product B-spline, regression tree, and local-linear Epanechnikov kernel applied to Experiment 1. Each displayed fit uses the candidate selected most often across outer folds in nested CV.  The lower-left panel shows the average held-out policy gains over unpersonalized policies of the family with the lowest estimated policy loss.  The remaining panels show the calibration of cross-fitted CATE predictions after centering within each fold.}
    \label{fig:exp-1}
\end{figure*}

\begin{figure*}[h!]
    \centering
    \includegraphics[width=\linewidth]{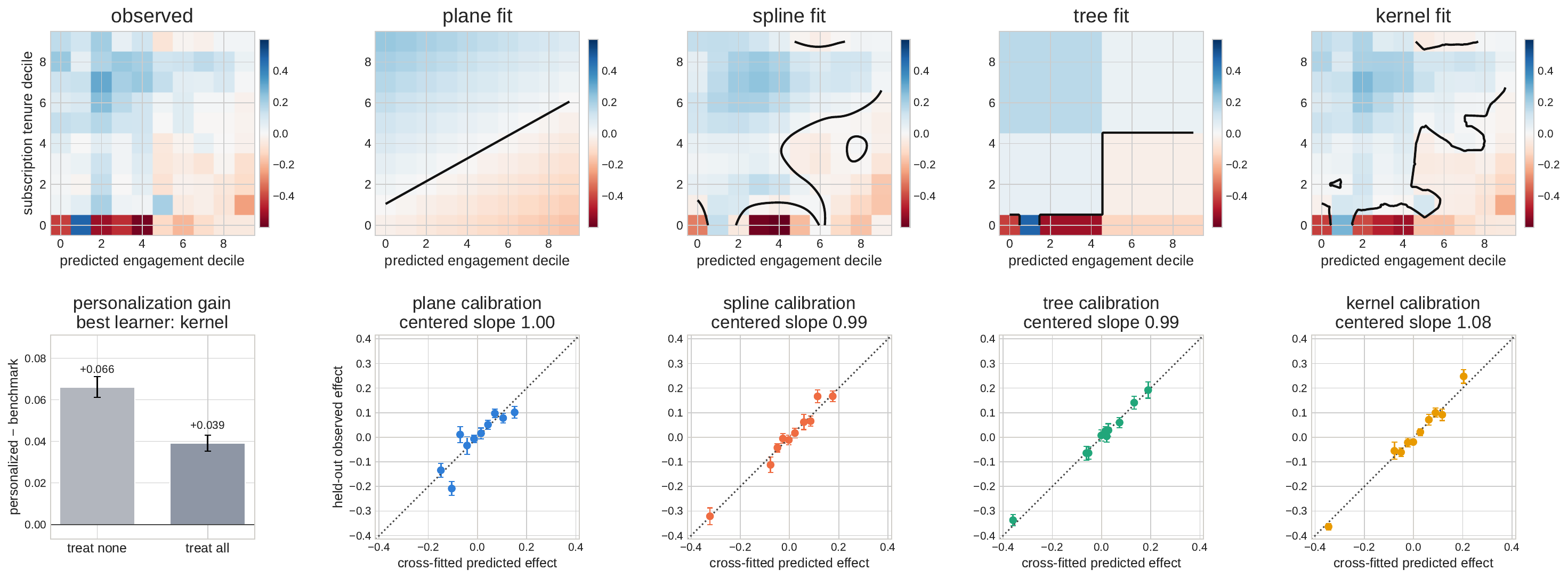}
    \caption{Panels are laid out as in Figure~\ref{fig:exp-1}, but for Experiment 2.}
    \label{fig:exp-2}
\end{figure*}

We demonstrate the empirical relevance of our framework by applying it to two large-scale real-world Netflix experiments.  For both experiments, we implement the proposed methodology as follows.  We discretize two pre-treatment covariates, predicted engagement level and length of subscription tenure, into ten deciles, yielding 100 distinct covariate cells.  Experimental units, which are divided into treatment and control arms, are further randomized to 10 folds.  We do not report the outcome metric for confidentiality reasons, but it can be interpreted as a measure of engagement frequency.  For each arm-fold-covariate value, we aggregate the data to the count, sum, and sum of squares of the outcome.  This allows us to collapse each experimental dataset, which consists of millions of observations, into a smaller dataset of 2,000 rows.

Let $k \in \{1,\ldots,100\}$ index covariate values, $f \in \{1,\ldots,10\}$ index folds, and $a\in\{0,1\}$ index arms. The CATE is $\tau_k=\mathbb{E}[Y(1)-Y(0)\mid X_k=k]$. In fold $f$, we estimate $\tau_k$ by $\hat\tau_{f,k}=\bar Y_{1,f,k}-\bar Y_{0,f,k}$ and calculate the empirical cell mass $p_{f,k}=(n_{1,f,k}+n_{0,f,k})/\sum_j(n_{1,f,j}+n_{0,f,j})$, which we use as learner weights. We fit four learner families to these 100 cell effects: a linear plane in the two decile indices, tensor-product B-splines, regression trees, and isotropic local-linear Epanechnikov kernel regression.

To jointly select the CATE learner and measure its corresponding policy value, we use nested cross-validation.  For outer fold $f$, candidate hyperparameter $\lambda$ in family $m$, and inner validation fold $g\ne f$, we fit on the other eight folds and predict $\hat\tau_{m,\lambda}^{(-f,-g)}(k)$. The inner-fold loss is the negative policy welfare $L_{f,g}(m,\lambda)=-\sum_{k=1}^{100}p_{g,k}\,\mathbf{1}\{\hat\tau_{m,\lambda}^{(-f,-g)}(k)>0\}\,\hat\tau_{g,k}$, motivated by Proposition~\ref{pro:cross_validation}. We select the simplest model that is within one standard error of the minimum average fold loss. We refit that candidate on all nine non-held-out folds and set the outer-fold policy to $\hat\pi_{m,f}(k)=\indicator{\hat\tau_m^{(-f)}(k)>0}$.  Lastly, for family $m$ and comparator $c=0$ (treat none) or $c=1$ (treat all), we estimate the welfare gain by $\hat\Delta_m(c) = \frac{1}{10}\sum_{f=1}^{10}\sum_{k=1}^{100}p_{f,k}(\hat\pi_{m,f}(k)-c)\hat\tau_{f,k}$ with variance $\widehat{\mathrm{SE}}_m^2(c)=\frac{1}{10^2}\sum_{f,k}p_{f,k}^2(\hat\pi_{m,f}(k)-c)^2\bigl(s_{1,f,k}^2/n_{1,f,k}+s_{0,f,k}^2/n_{0,f,k}\bigr)$.

\subsection{Experiment 1}
This experiment allocated approximately 9.41 million users to treatment and 10.54 million users to control.  This experiment modified the weights used in recommender system training to emphasize newer titles.  Results are shown in Figure~\ref{fig:exp-1}.

As the first panel of Figure~\ref{fig:exp-1} shows, the grid of observed CATEs over the 100 covariate values is very noisy.  However, all four linear smoothing families yield well-calibrated and interpretable CATE functions that indicate that treatment is more effective for highly-engaged and less-tenured units.  The selected tree-family policy significantly lifts engagement over both unpersonalized policies (treat none and treat all).

\subsection{Experiment 2}

This experiment allocated approximately 4.73 million users to treatment and 10.65 million users to control.  The treatment augmented the features available for newly launching titles.  Results are shown in Figure~\ref{fig:exp-2}.

The raw CATEs, which are shown in the first panel of Figure~\ref{fig:exp-2}, show somewhat more structure, but still with substantial noise.  All model families yield a well-calibrated and interpretable CATE estimator, which show that the treatment is more effective for more-tenured but less-engaged users.  The selected kernel-family policy significantly lifts engagement over both unpersonalized policies.

\section{Conclusion}

We study the problem of policy learning in the weak-signal setting, in which treatment effects are small and on the scale of measurement error. We establish the general impossibility of learning the optimal policy and the slowness of learning its value in this regime.  However, when treatment effects are smooth, linear smoothing estimators that exploit this structure can be minimax optimal. We show how to derive the optimal resolution of such learners and apply our framework to two large-scale real-world experiments at Netflix. Our results demonstrate the promise of our framework for policy learning in digital experiments.

\clearpage

\subsection*{AI Use Statement}

We used generative AI tools to assist with data analysis, code and figure generation, manuscript editing (including grammar checks, spelling errors, and improvements to clarity), the derivation of selected critical proof steps, and verifying proofs. Generative AI was not used to generate research ideas or for substantive text generation. We checked the code and results and reviewed the proofs line by line. We take responsibility for the final content of this work, including text, claims, and artifacts produced with the aid of generative AI.

\newpage

\bibliography{references}
\bibliographystyle{plainnat}

\clearpage
\appendix
\thispagestyle{empty}

\onecolumn
\begin{center}
{\Large\bfseries Supplementary Materials}
\end{center}

\section{Gaussian Limit Experiments}\label{app:limit_experiments}
In Section~\ref{sec:setup}, we formulated decision rules as functions
of task-level data, $\widetilde\pi(\widehat\tau,\widehat\sigma)$,
and assumed a Gaussian sequence model (Assumption~\ref{ass:gaussian_model}). This can be justified by local asymptotics. We provide a discussion here and refer the reader to \citet{van2000asymptotic,hirano2009asymptotics} for more details.

For each task $x_k$, denote the individual-level observations by
$D_{k,n}=\{(A_{ki},Y_{ki}):i=1,\ldots,n\}$, where $A_{ki}\in\{0,1\}$ is the randomized treatment assignment and
$Y_{ki}$ is the observed outcome. For simplicity, we consider independent samples of $n$ units
for each task. We assume that $n$ is large. Writing $D_n=(D_{1,n},\ldots,D_{K,n})$ for the full data,
a general decision rule takes the form
$\widetilde\pi_n(D_n)\in[0,1]^K$.

Let $\widehat \tau_k$ be an estimator of $\tau_k$ based on $D_{k,n}$, for example the difference in means estimator $\widehat \tau_k = \overline Y_{1, k} - \overline Y_{0, k}$. By the central limit theorem, under regularity conditions, 
\[
    \sqrt{n} (\widehat \tau_k - \tau_k) \rightsquigarrow \cN(0, \sigma_k^2),
\]
in distribution, justifying Assumption~\ref{ass:gaussian_model}.

To justify task-level decision rules, by Assumption~\ref{ass:weak_signal}, $\tau_k = \frac{h_k}{\sqrt{n}}$, so that
\[
    Z_{k, n} := \sqrt{n} \widehat \tau_k \rightsquigarrow \cN(h_k, \sigma_k^2).
\]
Writing $Z_n = (Z_{1, n}, \ldots, Z_{k, n})$, then $Z_n \rightsquigarrow \cN(\bm h, \operatorname{diag}(\sigma_1, \ldots, \sigma_K))$, for $\bm h = (h_1, \ldots, h_K)$. 

Let $Q_{n, \bm h}$ be the distribution of $D_n$ in a regular submodel and $\widehat \tau_k$ a regular, asymptotically efficient estimator in this submodel. Let $\phi_{\bm h}$ denote
the corresponding Gaussian limiting density of $Z_n$. The likelihood-ratio expansion \citep[Theorem~7.2 and Lemma~8.14]{van2000asymptotic} gives
\[
    \log\lt(\frac{d Q_{n, \bm h}}{d Q_{n, \bm 0}}\rt) (D_n) = \log\lt(\frac{\phi_{\bm h}(Z_n)}{\phi_{\bm 0}(Z_n)}\rt) + o_{Q_{n,\bm 0}}(1), 
\]
i.e., to distinguish $\bm h$ from $\bm 0$ all the evidence from individual units is summarized (up to first order) by $Z_n$. In particular, this is all the information needed to determine a cell treatment decision (sign of $\bm h$)

To connect this to decision rules, suppose that $\widetilde \pi_n(D_n)$ converges in distribution under every local alternative. By \citet[Theorem~9.3]{van2000asymptotic} its limiting distribution can be reproduced by a randomized policy
\[
    \widetilde \pi_{\infty}(Z, U), \qquad Z \sim \cN(\bm h, \operatorname{diag}(\sigma_1, \ldots, \sigma_K)), U \sim \operatorname{Unif}[0, 1].
\]
Thus, in the limit, every unit-level policy can be replicated by a Gaussian experiment and uniform noise. Finally, since our welfare criterion is based on expectations, for each randomized limiting policy, we can find a corresponding average policy $\overline{\pi}_k(z) = \E_U\lt[\widetilde \pi_{\infty, k}(z, U) \rt]$ with the same expected welfare. With this justification, we can restrict to limiting decision rules $\overline\pi(Z)\in[0,1]^K$. Since noise levels are unknown, we let decision rules additionally depend on estimates of them too.

\section{Linear Smoothers} \label{app:smoothers}

In Section~\ref{sec:linear_smoothing} we focus on three classes of linear smoothers, each defined by its weight functions and a \emph{resolution} parameter $r$, which controls the scale of smoothing. A small $r$ corresponds to less smoothing. Smaller $r$ allows for better approximation but typically increases variance. Larger $r$ pools more information across neighboring tasks, reducing variance at the cost of (potentially) greater bias.

\begin{definition}[Partition Pooling]\label{def:partition_smoothing}
Let $r> 0$ and let $\cA_r$ be a finite partition of $\cX$ such that every cell contains a task and $\max_{A \in \cA_r} \sup_{x, y \in A}\|x - y\|_2 \leq r$. Define
\[
    w_{r, j}^{\mathrm P}(x)
    :=
    \frac{p_j\indicator{x_j\in A_r(x)}}
    {\sum_{\ell:x_\ell\in A_r(x)}p_\ell},
\]
where $A_r(x)$ denotes the cell containing $x$.
\end{definition}
Definition~\ref{def:partition_smoothing} includes tree-based pooling, where the partition cells are the leaves of a tree and $w_r^{\mathrm P}$ averages within each leaf. Under the assumptions of Theorem~\ref{thm:regret_linear_smoother}, a tree at resolution $r$ has $J_r\asymp r^{-d}$ leaves.

\begin{definition}[Kernel Smoothing]\label{def:kernel_smoothing}
Let $\sK:\R^d\to[0,\infty)$ be bounded, supported on $B(0,1)$,
continuous at the origin, and satisfy $\sK(0)>0$. For $r>0$, define
\[
    w_{r, j}^{\mathrm K}(x)
    :=
    \frac{\sK\lt((x_j-x)/r\rt)}
    {\sum_{\ell=1}^K\sK\lt((x_\ell-x)/r\rt)}.
\]
\end{definition}
Partition pooling imposes a common action within cells, while kernel smoothing can lead to a different decision at every task location. We associate the bandwidth of the kernel with its resolution.

\begin{definition}[Series Regression]
\label{def:series_regression}
Fix $r>0$. Let $\bm a=(a_1,\ldots,a_K)$ be fitting weights satisfying
$a_k>0$ and $\sum_{k=1}^K a_k=1$, and let
$b_r:\cX\to\R^{q_r}$ be a vector of basis functions with $q_r\leq K$ and $q_r \asymp r^{-d}$.  Assume that the constant
function belongs to the span of $b_r$ and that 
$G_{r,\bm a}  := \sum_{k=1}^K a_k b_r(x_k)b_r(x_k)^\top$ is nonsingular. Define
\[
    w_{r, \bm a, j}^{\mathrm S}(x) := a_j b_r(x)^\top G_{r,\bm a}^{-1} b_r(x_j).
\]
\end{definition}
The weights in Definition~\ref{def:series_regression} are induced by the weighted least-squares estimator
\[
    \widehat\theta_{r,\bm a}
    :=
    \argmin_{\theta\in\R^{q_r}}
    \sum_{k=1}^K
    a_k(Z_k-b_r(x_k)^\top\theta)^2,
\]
with resulting signal estimator $\widehat g_{w_{r,\bm a}^{\mathrm S}}(x) := b_r(x)^\top\widehat\theta_{r,\bm a} = \sum_{j=1}^K  w_{r,\bm a,j}^{\mathrm S}(x)Z_j$.
By requiring that the constant function is in the span of the basis function, we guarantee $\sum_{j=1}^K w_{r,\bm a,j}^{\mathrm S}(x)=1$

Choosing $a_k=p_k$ gives the welfare-weighted series smoother and makes a cell-indicator basis coincide exactly with partition pooling. Choosing $a_k=1/K$ gives ordinary least squares, while choosing $a_k\propto \widehat\sigma_k^{-2}$ gives inverse-variance-weighted least squares.

\subsection{Smoothers for Theorem~\ref{thm:regret_linear_smoother}}\label{app:smoother_conditions}

Theorem~\ref{thm:regret_linear_smoother} applies smoothers induced by the following weights:

\begin{enumerate} [label=(\alph*)]
    \item \emph{Partition pooling}: Let $\cA_r$ be as in Definition~\ref{def:partition_smoothing}, and suppose there exists a constant $c_{\mathrm P}>0$ such that $\min_{A\in\cA_r} Q_K(A) \geq c_{\mathrm P} r^d$. Let $w_r^{\mathrm P}$ denote the weights induced by this partition.
   \item \emph{Kernel smoothing}: Let $\sK$ be as in
    Definition~\ref{def:kernel_smoothing} inducing $w_r^{\mathrm K}$.
    \item \emph{Spline regression}: Suppose $\cX\subseteq[0,1]^d$ and let
    $b_r$ be a normalized tensor-product B-spline basis with knot spacing of order $r$. That is, $b_{r,\nu}(x)\ge 0$ and $\sum_{\nu=1}^{q_r}b_{r,\nu}(x)\equiv 1$ for $x \in \cX$. Let $a_k = p_k$ and impose the conditions in Definition~\ref{def:series_regression}. Let $w_{r}^{\mathrm S}$ denote the induced weights.
\end{enumerate}
The assumption on partition pooling is mild. In particular, it is satisfied under Assumption~\ref{ass:quasi_uniform_design} whenever, around each task covariate, one can inscribe a ball of radius $\rho r$, $\rho \in (0, 1/2)$, that is contained in the same partition cell.

\section{Additional Simulation Results} \label{app:additional_simulation_results}
This appendix collects and extends the results referenced in Section~\ref{sec:simulation}.

\subsection{Simulation Design of Section~\ref{sec:simulation}}\label{app:simulation_design}

We place $K = 100$ tasks on a regular grid at each covariate combination
$x_k = (X_1, X_2)$ with $X_1, X_2 \in \{0.05, 0.15, \ldots, 0.95\} \subseteq [0,1]^2$ and $d = 2$. Task are weighted equally, $p_k = 1/K$, and
noise is homoskedastic $\sigma_k = 1$. Adjacent tasks are spaced $K^{-1/d}=0.1$ apart, which we use as the unit of resolution. Thus Assumption~\ref{ass:quasi_uniform_design} holds by construction. 

For the curved signal, define 
\[
    g_0(x)=0.3-\|x-c\|_2,
\]
where $c=(0.55,0.55)$ and center it as
\[
    g_{\mathrm{crv}}(x) = g_0(x)-\frac{1}{K}\sum_{j=1}^K g_0(x_j).
\]
The centered signal is H\"older-smooth with $\beta=1$ and $L=1$. Since $\sum_{k=1}^K p_k g_{\mathrm{crv}}(x_k)=0$, the treat-all and treat-none policies both have value zero. The resulting signal satisfies $H=0.39$, so the design lies within Assumption~\ref{ass:weak_signal}.

Recall that we considered three types of smoothers: axis-aligned trees with $J$ leaves, tensor-product B-splines of degree two with dimension $q$ (where $q=3$ corresponds to ordinary least squares with features $(1,X_1,X_2)$), and kernel smoothing with an Epanechnikov kernel of bandwidth $h$. To account for grid spacing, we measure resolution in grid units, $\tilde r = K^{1/d}r$ for $r\in[0,1]$. For trees and splines, we set $\tilde r=K^{1/d} J^{-1/d}$ and $ \tilde r=K^{1/d} q^{-1/d}$, respectively, with $J\in\{1,4,9,16,25,36,49\}$ and $q\in\{1,3,9,16,25,36,49\}.$ For kernel smoothing, we set $\tilde r=h$ and vary $h\in[1,10]$.

The three families agree closely at the grid-unit resolution predicted by Theorem~\ref{thm:regret_linear_smoother}, $\tilde r_K^\ast \approx 3.16$. The nearest candidate choices are $J=9$ leaves for trees, $q=9$ for splines, and $h=3$ for kernels.

\subsection{Linear Signal}\label{app:linear_signal}

Alongside the curved signal in Section~\ref{sec:simulation}, we consider a linear signal on the same design,
\[
    g_{\mathrm{lin}}(x) = (x - \bar{x})^\top u
\]
where $u = (1,1)/\sqrt2$ and $\bar x = \frac{1}{K} \sum_{k = 1}^K x_k$. This is again a H\"older smooth function with $\beta = 1$ and $L = 1$. It satisfies $H = 0.64$ and thus sits within Assumption~\ref{ass:weak_signal}.

Figure~\ref{fig:signal_linear} illustrates the regime. Even through the overall signal is almost twice as strong as in Section~\ref{sec:linear_smoothing} the plug-in policy still mis-labels $41\%$ of tasks and attains only $27\%$ of $V_{g_{\mathrm{lin}}}^\ast$.

The linear smoothers perform quite well in this regime, with the linear regression at $q=3$ performing best and, unsurprisingly, attaining $88\%$ of $V_{g_{\mathrm{lin}}}^\ast$. The performance of the remaining learners is reported in Table~\ref{tab:learners}.

\begin{figure}[t]
    \centering
    \includegraphics[width=\linewidth]{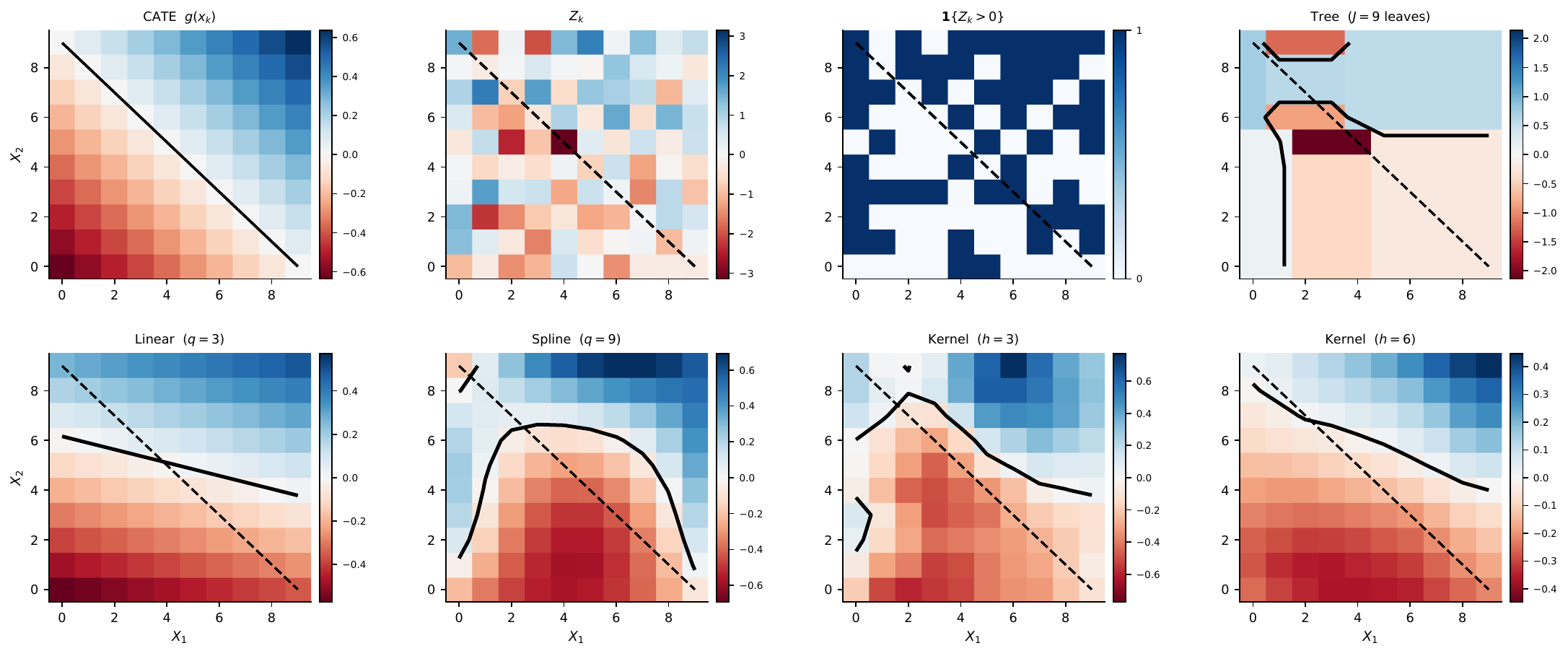}
    \caption{The first three panels show the true linear CATE $g$, one draw of the rescaled estimates $Z_k=\sqrt n\,\widehat\tau_k$, and the plug-in policy $\indicator{Z_k>0}$ that these estimates imply. The remaining panels show representative linear smoothers. The dashed line is the true frontier at which $g=0$, while the solid line represents the learner's decision boundary, $\widehat g_w=0$.}
    \label{fig:signal_linear}
\end{figure}

\subsection{Comparison of Learners}\label{app:learner_comparison}

Table~\ref{tab:learners} compares representative learners from each family at three matched resolutions under both signals. Because both signals are centered, $W/V^\ast=0$ means that a learner does no better than a constant policy, while $W/V^\ast=1$ means that it matches the oracle. In the table and figures, partitions correspond to fixed axis aligned partitions with $J$ cells.

The table highlights the approximation--estimation tradeoff of Proposition~\ref{pro:smoothing_tradeoff}. Within each family, if the resolution is too coarse, approximation error dominates. However, if the resolution is too fine, estimation error dominates. We observe that kernel smoothing adapts well to both signal regimes.

Figure~\ref{fig:tradeoff} further illustrates this point and also plots the bound from Proposition~\ref{pro:smoothing_tradeoff}. The bound is close for series regression and less close for trees. Overall, the figure shows that balancing approximation and estimation error predicts the optimal resolution well.

Figure~\ref{fig:frontier} additionally illustrates the performance of our linear smoothers in terms of their parameter complexity. Especially for the curved signal, striking the right balance in parameter complexity, and hence in resolution, is crucial for good performance.

\begin{table}[t]
\centering
\small
\setlength{\tabcolsep}{5pt}

\begin{tabular}{@{}lc@{\hskip 12pt}rr@{\hskip 12pt}rr@{}}
\toprule
& & \multicolumn{2}{c}{Linear signal} & \multicolumn{2}{c}{Curved signal}\\
\cmidrule(lr){3-4}\cmidrule(lr){5-6}
Learner & $\tilde r$ & $R_K$ & $W/V^\ast$ & $R_K$ & $W/V^\ast$\\
\midrule
Plug-in              & $1$    & $0.085$ & $0.27$ & $0.052$ & $0.14$\\
Pooled               & $10$   & $0.117$ & $0.00$ & $0.060$ & $0.00$\\
\midrule
Tree ($J = 4$)       & $5.0$  & $0.035$ & $0.70$ & $0.034$ & $0.43$\\
Tree ($J = 9$)       & $3.3$  & $0.040$ & $0.65$ & $0.034$ & $0.44$\\
Tree ($J = 25$)      & $2.0$  & $0.063$ & $0.46$ & $0.043$ & $0.30$\\
\midrule
Partition ($J = 4$)  & $5.0$  & $0.035$ & $0.70$ & $0.056$ & $0.07$\\
Partition ($J = 9$)  & $3.3$  & $0.041$ & $0.65$ & $0.045$ & $0.25$\\
Partition ($J = 25$) & $2.0$  & $0.060$ & $0.49$ & $0.046$ & $0.24$\\
\midrule
Kernel ($h = 5$)     & $5.0$  & $0.024$ & $0.80$ & $0.041$ & $0.32$\\
Kernel ($h = 3$)     & $3.3$  & $0.033$ & $0.72$ & $0.038$ & $0.37$\\
Kernel ($h = 2$)     & $2.0$  & $0.048$ & $0.59$ & $0.042$ & $0.31$\\
\midrule
Series ($q = 3$)     & $5.8$  & $\mathbf{0.014}$ & $\mathbf{0.88}$ & $0.056$ & $0.08$\\
Series ($q = 9$)     & $3.3$  & $0.033$ & $0.72$ & $\mathbf{0.033}$ & $\mathbf{0.46}$\\
Series ($q = 25$)    & $2.0$  & $0.059$ & $0.50$ & $0.042$ & $0.30$\\
\bottomrule
\end{tabular}
\caption{Exact welfare regret and proportional welfare for representative learners. The resolution $\tilde r$ is measured in units of $K^{-1/d}$, so $\tilde r=1$ corresponds to the plug-in rule and $\tilde r=10$ corresponds to full pooling, that is, a single treatment decision is made for all cells jointly.}
\label{tab:learners}
\end{table}

\begin{figure}[t]
    \centering
    \includegraphics[width=\linewidth]{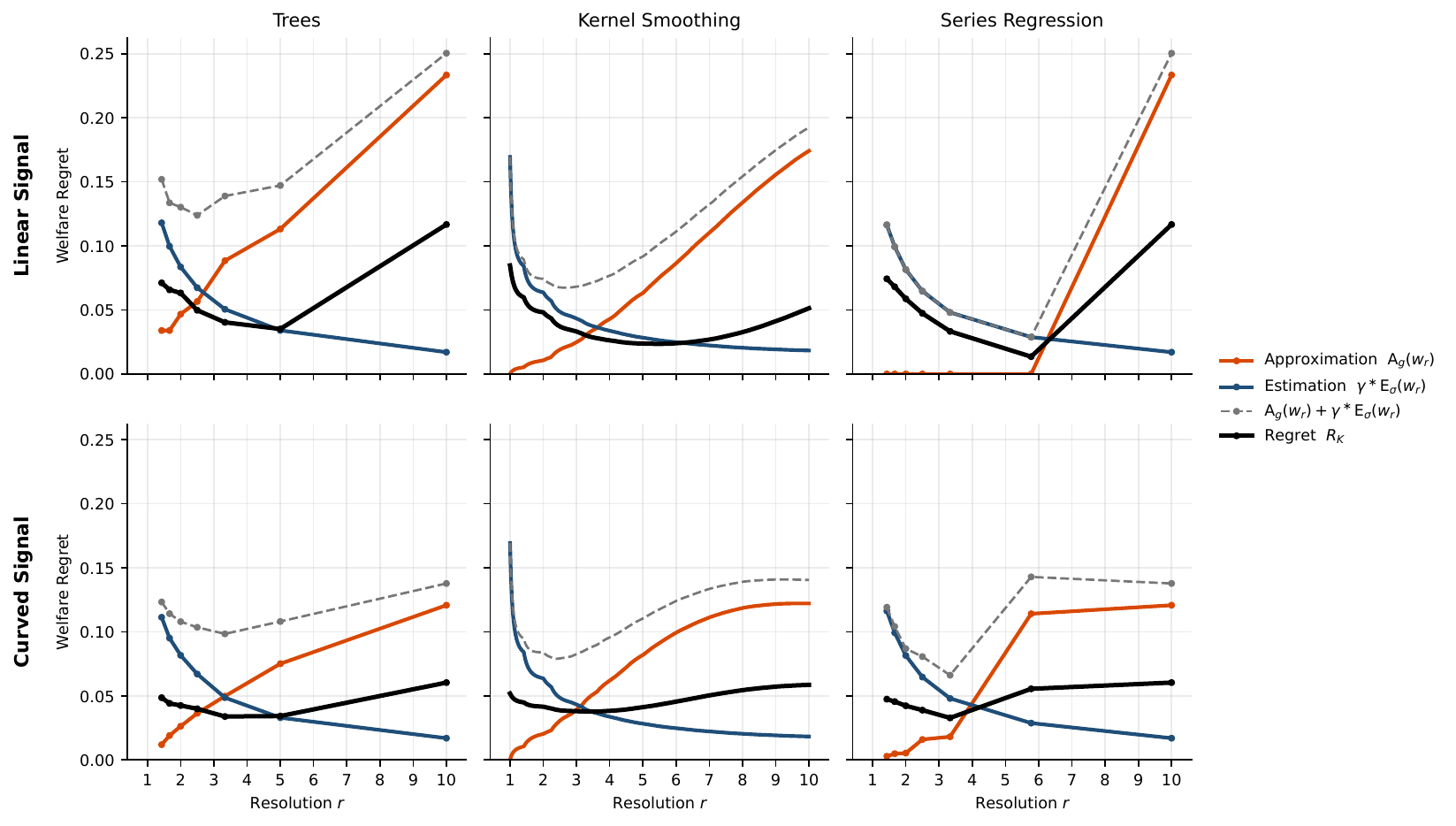}
    \caption{Approximation error $\sA_g(w_r)$, estimation error $\gamma^\ast\sE_{\bm\sigma}(w_r)$, their sum, and exact regret $R_K$ as functions of resolution $\tilde r$. Rows correspond to signal regimes, while columns correspond to the smoother families considered in the main text.}
    \label{fig:tradeoff}
\end{figure}

\begin{figure*}[t]
    \centering
    \includegraphics[width=\linewidth]{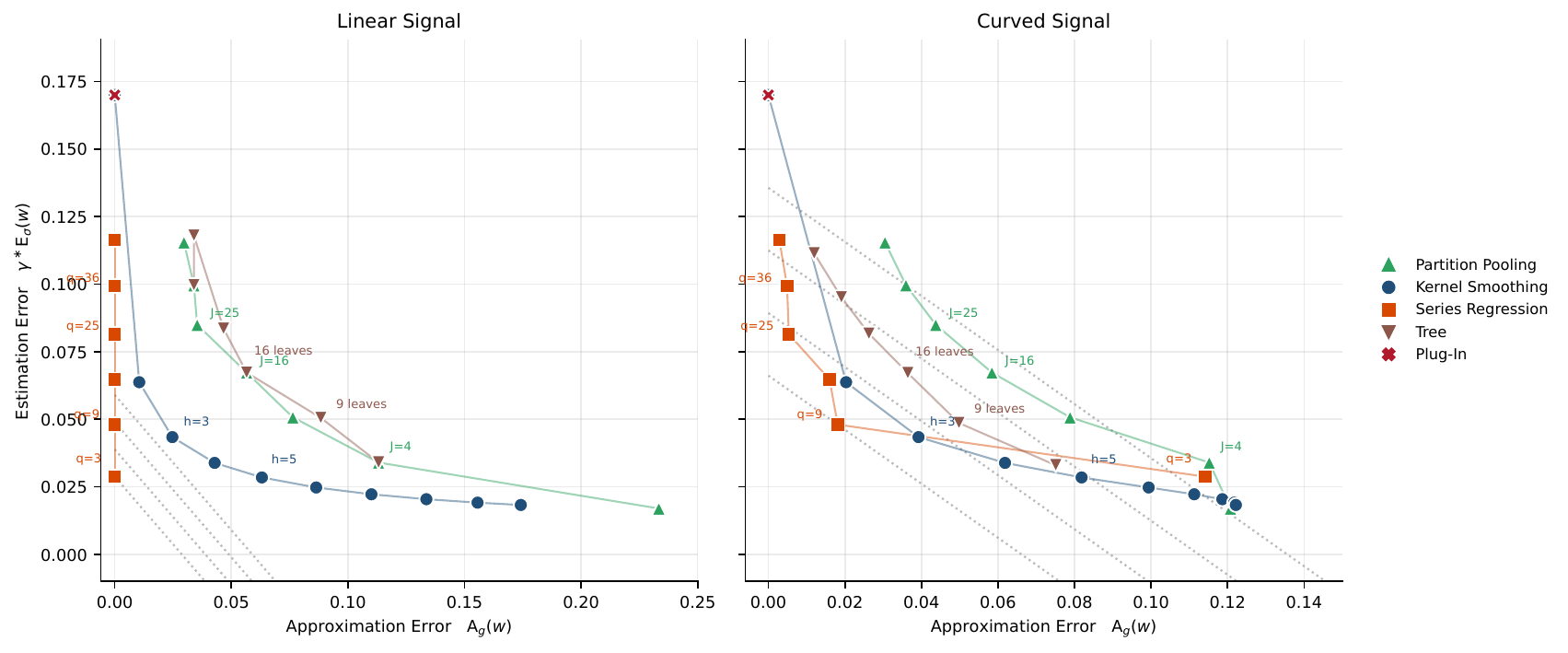}
    \caption{Every candidate is one point in terms of approximation error $\sA_g(w)$ and estimation error $\gamma^\ast\sE_{\bm\sigma}(w)$. Note that the axis are on different scales. Better smoothers lie closer to the bottom-left corner.}
    \label{fig:frontier}
\end{figure*}

\clearpage
\newpage 
\section{Additional Theoretical Results}\label{app:additional_theoretical_results}

\begin{proposition}[Welfare Decomposition for Linear Smoothers]\label{pro:welfare_decomposition}
    Suppose Assumptions~\ref{ass:gaussian_model}--\ref{ass:weak_signal} hold. Let $\widehat g_w$ be a linear smoother. Then,
    \begin{align*}
         \sqrt{n} W_{P_{g, \bm{\sigma}}}(\widetilde\pi_w) = \bar g_p \bar \psi_{w, p} + \Cov_p(g, \psi_w), 
    \end{align*}
    where $\bar g_p = \sum_{k = 1}^K p_k g(x_k)$, $\psi_{w, k} = \E_{g, \bm{\sigma}}\lt[\widetilde\pi_{w, k}(Z)\rt] = \Phi\lt(\frac{m_w(x_k)}{s_w(x_k)}\rt)$, $\bar \psi_{w, p} = \sum_{k = 1}^K p_k \psi_{w, k}$ and
    \[
        \Cov_p(g, \psi_w) = \sum_{k = 1}^K p_k (g(x_k) - \bar g_p) (\psi_{w, k} - \bar \psi_{w, p}) = \frac{1}{2} \sum_{k = 1}^K \sum_{l = 1}^K p_k p_l (g(x_k) - g(x_l)) (\psi_{w, k} - \psi_{w, l}).
    \]
\end{proposition}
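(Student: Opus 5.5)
We first compute the expected welfare of $\widetilde\pi_w$ coordinate by coordinate and then apply an elementary identity for weighted covariances.

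\textbf{Step 1: treatment probabilities.} Fix a task $k$. By Definition~\ref{def:linear_smoother}, $\widehat g_w(x_k)=\sum_j w_j(x_k)Z_j$ is a deterministic linear combination of the independent Gaussians $Z_j\sim\cN(g(x_j),\sigma_j^2)$ from \eqref{eq:weak_model}. Hence $\widehat g_w(x_k)\sim\cN(m_w(x_k),s_w^2(x_k))$. Provided $s_w(x_k)>0$, this gives
\[
\psi_{w,k}=P(\widehat g_w(x_k)>0)=\Phi\lt(m_w(x_k)/s_w(x_k)\rt).
\]
The rule ignores $\widehat\sigma$, so only the $Z$-marginal of $P_{g,\bm\sigma}$ matters. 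If $s_w(x_k)=0$, interpret $\Phi(m/0)$ as $\indicator{m>0}$, with the convention $\Phi(0/0)=0$ matching the strict inequality.

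\textbf{Step 2: welfare as a weighted mean.} By definition, $\sqrt n W_{P_{g,\bm\sigma}}(\widetilde\pi_w)=\sum_k p_k g(x_k)\psi_{w,k}$. Write $g(x_k)=\bar g_p+(g(x_k)-\bar g_p)$ and use $\sum_k p_k=1$ to get
\[
\sum_k p_k g(x_k)\psi_{w,k}=\bar g_p\bar\psi_{w,p}+\sum_k p_k(g(x_k)-\bar g_p)\psi_{w,k}.
\]
Since $\sum_k p_k(g(x_k)-\bar g_p)=0$, we may replace $\psi_{w,k}$ by $\psi_{w,k}-\bar\psi_{w,p}$ in the last sum, which yields $\Cov_p(g,\psi_w)$.

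\textbf{Step 3: pairwise form.} Expand
\[
\tfrac12\sum_{k,l}p_kp_l(g_k-g_l)(\psi_k-\psi_l)
\]
into its four terms. The two diagonal-type terms each equal $\tfrac12\sum_k p_k g_k\psi_k$, so together they give $\sum_k p_kg_k\psi_k$. The two cross terms each equal $-\tfrac12\bar g_p\bar\psi_{w,p}$, so together they give $-\bar g_p\bar\psi_{w,p}$. The total is therefore $\sum_k p_kg_k\psi_k-\bar g_p\bar\psi_{w,p}$, which equals the weighted covariance from Step 2.

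The argument involves no genuine obstacle. The only delicate point is the degenerate case $s_w(x_k)=0$ in Step 1, which is handled by the convention above; it is also excluded automatically whenever $w(x_k)\neq 0$, because $\sigma_j\ge\underline\sigma>0$.
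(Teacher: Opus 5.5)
Your proof is correct and follows essentially the same route as the paper. In both, Gaussianity of $\widehat g_w(x_k)$ gives $\psi_{w,k}=\Phi(m_w(x_k)/s_w(x_k))$, the weighted covariance is identified with $\sum_k p_k g(x_k)\psi_{w,k}-\bar g_p\bar\psi_{w,p}$, and the pairwise form follows by expanding the double sum. Your degenerate-case convention is never needed: $\sum_j w_j(x_k)=1$ forces $w(x_k)\neq 0$, hence $s_w(x_k)\geq\underline\sigma\,\|w(x_k)\|_2>0$.
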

The proof is given in Appendix~\ref{app:welfare_decomposition}.
Proposition~\ref{pro:welfare_decomposition} decomposes welfare into a baseline term and a covariance term. The baseline term captures what would be achieved by treating every task with probability $\bar\psi_{w,p}$. The covariance term measures the personalization capacity of the policy. It is larger when tasks with larger treatment effects are more likely to be treated.

\begin{proposition}[Cross-Validation]\label{pro:cross_validation}
Suppose Assumptions~\ref{ass:gaussian_model}--\ref{ass:weak_signal} hold. Assume further that $\max_k p_k = O(1/K)$. Suppose for each task $x_k$, the data is split into $F$ folds and satisfies 
\[
    Z_k^{(f)} \overset{\mathrm{ind}}{\sim} \cN(h_k, F \sigma_k^2), 
\]
for $k \in [K], f \in [F]$. Denote $Z_k^{(-f)} = \frac{1}{F - 1} \sum_{j \neq f} Z_k^{(j)}$ and $Z_k = \frac{1}{F} \sum_{f = 1}^F Z_k^{(f)}$. Let $\Lambda_K$ be a finite set of resolutions. For each $r \in \Lambda_K$, define the linear smoothing policies $\widetilde \pi_{w_r, k}(z) = \indicator{\sum_{j = 1}^K w_{r, j}(x_k) z_j > 0}$. Select
\[
    \widehat r \in \argmax_{r \in \Lambda_K} \frac{1}{F} \sum_{f = 1}^F \sum_{k = 1}^K p_k Z_k^{(f)} \widetilde\pi_{w_r,k}(Z^{(-f)}).
\]
Then, 
\[
    R_K(P_{g, \bm{\sigma}}, \widetilde\pi_{w_{\widehat r}}) \lesssim  \min_{r\in\Lambda_K} R_K(P_{g, \bm{\sigma}_F}, \widetilde\pi_{w_r}) + \sqrt{\frac{\log(|\Lambda_K|)}{K}},
\]
where $\bm\sigma_F=\sqrt{F/(F-1)}\bm\sigma$.
\end{proposition}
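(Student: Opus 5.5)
The plan is to prove a \emph{fixed-resolution comparison} between the full-data noise level $\bm\sigma$ and the cross-fitting noise level $\bm\sigma_F$, and combine it with a \emph{uniform concentration} argument that controls the data-dependent choice $\widehat r$.

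\textbf{Step 1 (noise comparison at fixed $r$).} For a fixed resolution $r$, write $s_r(x_k)$ for the standard deviation of $\widehat g_{w_r}(x_k)$ under $\bm\sigma$; under $\bm\sigma_F$ it is $\sqrt{F/(F-1)}\,s_r(x_k)$. By the computation behind Proposition~\ref{pro:welfare_decomposition},
\[
R_K(P_{g,\bm\sigma},\widetilde\pi_{w_r})=\sum_k p_k\, g(x_k)\lt(\indicator{g(x_k)>0}-\Phi\lt(m_{w_r}(x_k)/s_r(x_k)\rt)\rt).
\]
I would bound the summand for each task $k$ separately.
\begin{itemize}
\item If $m_{w_r}(x_k)$ has the same sign as $g(x_k)$, the summand equals $|g(x_k)|\,\Phi(-|m|/s)$. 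This is nondecreasing in $s$, so it is no larger under $\bm\sigma$ than under $\bm\sigma_F$.
\item If the signs differ, the summand equals $|g(x_k)|\,\Phi(|m|/s)\le |g(x_k)|$. Under $\bm\sigma_F$ the same summand is at least $|g(x_k)|/2$.
\end{itemize}
Hence
\[
R_K(P_{g,\bm\sigma},\widetilde\pi_{w_r})\le 2R_K(P_{g,\bm\sigma_F},\widetilde\pi_{w_r})\quad\text{for every } r.
\]
This explains why $\bm\sigma_F$ appears on the right-hand side.

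\textbf{Step 2 (the CV criterion tracks cross-fitted welfare).} Set $\widehat C(r)=\frac1F\sum_f\sum_k p_k Z^{(f)}_k\widetilde\pi_{w_r,k}(Z^{(-f)})$ and $\widetilde C(r)=\frac1F\sum_f\sum_k p_k g(x_k)\widetilde\pi_{w_r,k}(Z^{(-f)})$. Conditionally on $Z^{(-f)}$, the vector $Z^{(f)}$ is independent of it. So $\sum_k p_k(Z^{(f)}_k-g(x_k))\widetilde\pi_{w_r,k}(Z^{(-f)})$ is centered Gaussian with variance at most $F\overline\sigma^2\sum_kp_k^2=O(1/K)$, using $\max_kp_k=O(1/K)$. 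A Gaussian maximal inequality over $r\in\Lambda_K$, followed by averaging over $f$ (fixed $F$), gives
\[
\E\max_{r\in\Lambda_K}|\widehat C(r)-\widetilde C(r)|\lesssim\sqrt{\log|\Lambda_K|/K}.
\]

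\textbf{Step 3 (concentration of welfare around its mean, uniformly in $r$).} Next I need $\widetilde C(r)$ and the full-data realized welfare $U(r)=\sum_kp_kg(x_k)\widetilde\pi_{w_r,k}(Z)$ to be close to their expectations uniformly over $\Lambda_K$, with the same $\sqrt{\log|\Lambda_K|/K}$ error. Their expectations are $\sqrt n\,W$ evaluated under $\bm\sigma_F$ and under $\bm\sigma$, respectively. I expect this to be the main obstacle, because the indicators make these statistics non-Lipschitz in $Z$.
\begin{itemize}
\item My first attempt would be a bounded-differences / Efron--Stein argument that exploits the fact that each $Z_j$ only moves $\widehat g_{w_r}(x_k)$ for tasks with $w_{r,j}(x_k)\neq0$, together with the bound $|g|\le H$ and $p_k=O(1/K)$.
\item If the locality is too weak (for example, for series weights), I would instead replace the indicator by a $\delta$-smoothed ramp. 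Gaussian concentration then applies with Lipschitz constant of order $\delta^{-1}\max_k p_k\|w_r(x_k)\|_2$. The smoothing bias is controlled through $|g|\le H$ and the Gaussian density of $\widehat g_{w_r}(x_k)$, and $\delta$ is tuned at the end.
\end{itemize}
If neither route gives uniformity, a fallback is to combine Step 1's pathwise sign argument directly with the selected index.

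\textbf{Step 4 (assembly).} On the high-probability event from Steps 2--3, the defining property of $\widehat r$ gives
\[
\widetilde C(\widehat r)\ge\max_r\widetilde C(r)-O(\sqrt{\log|\Lambda_K|/K}).
\]
It follows that the $\bm\sigma_F$-regret of $\widehat r$ is within that error of $\min_rR_K(P_{g,\bm\sigma_F},\widetilde\pi_{w_r})$.

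To pass to the full-data policy, I use the sign argument of Step 1 at $\widehat r$ in its realized form. The uniform concentration of $U(r)$ around $\sqrt n\,W_{P_{g,\bm\sigma}}(\widetilde\pi_{w_r})$ lets me evaluate the deterministic comparison $R_K(\bm\sigma,w_r)\le2R_K(\bm\sigma_F,w_r)$ at $r=\widehat r$, at an additional cost of $O(\sqrt{\log|\Lambda_K|/K})$. This gives
\[
R_K(P_{g,\bm\sigma},\widetilde\pi_{w_{\widehat r}})\lesssim\min_rR_K(P_{g,\bm\sigma_F},\widetilde\pi_{w_r})+\sqrt{\log|\Lambda_K|/K}.
\]
On the complement event, the regret is at most $H$, and the complement has probability $O(1/K)$ after choosing the deviation level appropriately.
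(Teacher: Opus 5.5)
Your Step 2 is the paper's maximal-inequality step, and the fixed-$r$ comparison in Step 1 is correct. Steps 3--4, however, contain a genuine gap.

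\textbf{Step 3 is false in general.} Uniform concentration of $U(r)$ and $\widetilde C(r)$ around their means fails for linear smoothers, so no bounded-differences or smoothed-ramp argument can deliver it. Take $d=1$, $x_k=(k-1/2)/K$, $p_k=1/K$, $\sigma_k=\sigma$, $g(x)=\cos(2\pi x)$, and least squares on $(1,x)$, the one-dimensional analogue of the $q=3$ series candidate in the simulations.
\begin{itemize}
\item By the symmetry $x\mapsto 1-x$, the projection of $g$ onto $\operatorname{span}\{1,x\}$ is zero.
\item Hence $\widehat g_w(x)=\hat a+\hat b(x-1/2)$, where $\hat a,\hat b$ are independent and centered with variances of order $\sigma^2/K$.
\item The estimated threshold $t=1/2-\hat a/\hat b$ is Cauchy with a $K$-free scale.
\item The realized welfare satisfies $U\approx\pm\sin(2\pi t)/(2\pi)$, so it has order-one fluctuations around $\E U=0$.
\end{itemize}
The smoother $w_j\equiv\indicator{j=1}$ is an even cruder example.

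\textbf{Why Step 4 then fails.} Without that concentration, Step 4 cannot pass to the selected rule. The index $\widehat r$ is a function of the same folds that make up $Z$. Hence $\E[L(\widetilde\pi_{w_{\widehat r}}(Z))]$ is not the map $r\mapsto R_K(P_{g,\bm\sigma},\widetilde\pi_{w_r})$ evaluated at $\widehat r$. Moreover, Step 1 compares Gaussian probabilities at a fixed $r$. It is not a pathwise statement, so it cannot be invoked at a data-dependent index.

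\textbf{The missing idea} is a genuinely pathwise domination that holds for all $r$ simultaneously. Since $Z=\frac1F\sum_f Z^{(-f)}$ and the smoother is linear, $\sum_j w_{r,j}(x_k)Z_j$ is the average over $f$ of $\sum_j w_{r,j}(x_k)Z^{(-f)}_j$. If the average has the wrong sign relative to $g(x_k)$, so does at least one fold-level value. Write $L(\pi):=\sum_kp_kg(x_k)(\indicator{g(x_k)>0}-\pi_k)$ and $\bar L_r:=\frac1F\sum_fL(\widetilde\pi_{w_r}(Z^{(-f)}))=\sum_kp_kg(x_k)^+-\widetilde C(r)$. Then $L(\widetilde\pi_{w_r}(Z))\le F\bar L_r$ for every realization and every $r$. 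Setting $r=\widehat r$ and taking expectations gives $R_K(P_{g,\bm\sigma},\widetilde\pi_{w_{\widehat r}})\le F\,\E[\bar L_{\widehat r}]$.

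\textbf{Finishing in expectation.} You should then work in expectation rather than on a high-probability event. Let $\Delta_{r,f}:=\sum_kp_k(Z^{(f)}_k-g(x_k))\widetilde\pi_{w_r,k}(Z^{(-f)})$ and $D(r):=\widehat C(r)-\widetilde C(r)=\frac1F\sum_f\Delta_{r,f}$. The definition of $\widehat r$ gives, pathwise and for each fixed $r$, $\bar L_{\widehat r}\le\bar L_r+\frac1F\sum_f\max_{u\in\Lambda_K}\Delta_{u,f}-D(r)$. For fixed $r$ we have $\E\bar L_r=R_K(P_{g,\bm\sigma_F},\widetilde\pi_{w_r})$ and $\E D(r)=0$. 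Your Step 2 bound on $\E\max_u\Delta_{u,f}$ then finishes the proof. The constant becomes $F$ instead of your $2$, and Steps 1 and 3 become unnecessary.
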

The proof is given in Appendix~\ref{app:cross_validation}. Proposition~\ref{pro:cross_validation} justifies our use of cross-validation by maximizing welfare in Section~\ref{sec:netflix}. Under Assumption~\ref{ass:gaussian_model}, $Z_k=\frac{1}{F}\sum_{f=1}^F Z_k^{(f)}\sim\cN(h_k,\sigma_k^2)$, matches our observation model.
The selected smoother achieves the regret of the best candidate resolution on $\Lambda_K$ with an additional selection cost. If $|\Lambda_K|$ grows at most polynomially in $K$, this cost is $o(K^{-\beta/(2\beta+d)})$, in which case cross-validation preserves the rate in Theorem~\ref{thm:regret_linear_smoother}.

\section{Connection to Binary Classification and Regression}\label{app:binary_classification}
Choosing the optimal treatment decision for each task is analogous to binary classification \citep{yang1999minimax, audibert2007fast} and, for plug-in rules, bounded by $L^1$ regression error \citep{stone1982optimal}. This connection is well established in the policy-learning literature \citep[e.g.,][]{kitagawa2018should, luedtke2020performance}. In particular, our scaled welfare regret is proportional to a weighted excess classification risk. However, we study a nonuniformly weighted, fixed-design, heteroskedastic Gaussian observation model, instead of observing i.i.d. binary labels directly.

Let $Q_K = \sum_{k = 1}^K p_k \delta_{x_k}$ denote the task measure. Consider the auxiliary binary classification problem
\begin{align*}
    X &\sim Q_K
    \\
    Y \mid X = x &\sim \text{Ber}(\eta_g(x)),
\end{align*}
where $\eta_g(x) = \frac{1}{2} \lt(1 + \frac{g(x)}{H}\rt) \in [0, 1]$ for $g \in \cH_d(\beta, L, H)$. The Bayes classifier is given by $\pi^\ast_g(x) = \indicator{\eta_g(x) > \frac{1}{2}} = \indicator{g(x) > 0}$.

We do not observe labels from this auxiliary model. Instead, we observe $(Z, \widehat \sigma) \sim P_{g, \bm \sigma}$, where $Z_k=g(x_k)+\sigma(x_k) \epsilon_k$ with $\epsilon_k\overset{\mathrm{ind}}{\sim}\cN(0,1)$, for all $k \in [K]$. Write $\widehat\pi := \widetilde\pi(Z, \widehat \sigma)$ for the fitted classifier, a map from $\cX$ to $\{0,1\}$. Its expected excess risk in
the auxiliary classification problem is \citep[e.g.,][]{audibert2007fast}
\begin{align*}
    \mathcal{E}(P_{g, \bm{\sigma}}, \widetilde\pi) &= \E_{P_{g, \bm{\sigma}}} \lt[\E_{X}[|2 \eta_g(X) - 1| \indicator{\widehat\pi(X) \neq \indicator{g(X) > 0}}]\rt]
            \\
            &=  \E_{P_{g, \bm{\sigma}}} \lt[\frac{1}{H}\sum_{k=1}^K p_k |g(x_k)|  \indicator{\widehat\pi(x_k) \neq \indicator{g(x_k) > 0}}\rt]
            \\
            &= \frac{1}{H}\sum_{k=1}^K p_k |g(x_k)|  P_{g, \bm{\sigma}} \lt(\widehat\pi(x_k) \neq \indicator{g(x_k) > 0}\rt)
            \\
            &= \frac{1}{H}R_K(P_{g, \bm{\sigma}}, \widetilde\pi),
\end{align*}
Thus the two problems have the same loss but the observation model differs. Moreover, in the auxiliary classification model, $g$ determines both the conditional mean and variance. In our
Gaussian model, the variances must be controlled independently.

Furthermore, for any estimator $\widehat g$ and its induced plug-in policy
$\widehat\pi(X)=\indicator{\widehat g(x)>0}$, a decision error implies that $|g(x_k)|\leq|\widehat g(x_k)-g(x_k)|$. Hence,
\[
R_K(P_{g, \bm{\sigma}},\widetilde\pi)
\leq \E_{P_{g, \bm{\sigma}}}\lt[ \sum_{k=1}^K p_k|\widehat g(x_k)-g(x_k)| \rt] = \E_{P_{g, \bm{\sigma}}}\|\widehat g-g\|_{L^1(Q_K)}.
\]
Thus global $L^1$ regression rates provide an upper bound on the welfare regret.

\section{Mathematical Proofs}

\subsection{Proof of Theorem~\ref{thm:minimax_regret}}\label{app:minimax_regret}

We first establish the minimax claim before proving the Bayes identity.
Collect $Z = (Z_1, \ldots, Z_K)$, $\widehat \sigma = (\widehat \sigma_1, \ldots, \widehat \sigma_K)$ and write $\bm{\overline{\sigma}} := (\overline \sigma, \ldots, \overline \sigma)$. Fix a probability law $Q_0$ on $[0, \infty)^K$. Let $\cP_K^0 \subseteq \cP_K$ consist of the laws under which $\widehat \sigma \sim Q_0$ and $\widehat \sigma \indep Z$. On $\mathcal P_K^0$, our variance estimates $\widehat \sigma$ are uninformative about $\bm h$. That is, every rule $\widetilde\pi(Z, \widehat \sigma)$ has the same regret under every law in $\cP_K^0$ as $\overline \pi(z) := \int \widetilde\pi(z, s) Q_0(ds)$. Thus, on $\cP^0_K$, we can restrict to rules that ignore $\widehat \sigma$.

Fix an arbitrary $t \in (0, H]$. Since $\{-t, t\}^K \subseteq [-H, H]^K$ and $\bm{\overline{\sigma}} \in [\underline \sigma, \overline{\sigma}]^K$, we bound 
    \begin{align*}
        \inf_{\widetilde\pi: \R^K \times [0, \infty)^K \to [0, 1]^K} \sup_{P_{\bm h, \bm \sigma} \in \mathcal P_K} R_K(P_{\bm h, \bm \sigma}, \widetilde\pi) &\geq  \inf_{\widetilde\pi: \R^K \times [0, \infty)^K \to [0, 1]^K} \sup_{P_{\bm h, \bm{\overline{\sigma}}} \in \mathcal P_K^0} R_K(P_{\bm h, \bm{\overline{\sigma}}}, \widetilde\pi)
        \\
        &=  \inf_{\widetilde\pi: \R^K \to [0, 1]^K} \sup_{P_{\bm h, \bm{\overline{\sigma}}} \in \mathcal P_K^0} R_K(P_{\bm h, \bm{\overline{\sigma}}}, \widetilde\pi)
        \\
        &=  \inf_{\widetilde\pi: \R^K \to [0, 1]^K} \sup_{P_{\bm h, \bm{\overline{\sigma}}} \in \mathcal P_K^0} \E_{P_{\bm h, \bm{\overline{\sigma}}}} \lt[\sum_{k = 1}^K p_k h_k (\indicator{h_k > 0} - \widetilde\pi_k(Z))\rt]
        \\
        &\geq  \inf_{\widetilde\pi: \R^K \to [0, 1]^K} \max_{\bm h  \in \{-t, t\}^K}  \E_{P_{\bm h, \bm{\overline{\sigma}}}} \lt[\sum_{k = 1}^K p_k h_k (\indicator{h_k > 0} - \widetilde\pi_k(Z))\rt]
        \\
        &= \inf_{\widetilde\pi: \R^K \to [0, 1]^K} \max_{\bm h  \in \{-t, t\}^K} \E_{P_{\bm h, \bm{\overline{\sigma}}}} \lt[\sum_{k = 1}^K p_k |h_k| |\indicator{h_k > 0} - \widetilde\pi_k(Z)|\rt]
        \\
        &= t \inf_{\widetilde\pi: \R^K \to [0, 1]^K} \max_{\bm h  \in \{-t, t\}^K} \E_{P_{\bm h, \bm{\overline{\sigma}}}} \lt[\sum_{k = 1}^K p_k |\indicator{h_k > 0} - \widetilde\pi_k(Z)|\rt]
        \\
        &= t \inf_{\widetilde\pi: \R^K \to [0, 1]^K} \max_{\bm h  \in \{-t, t\}^K} \E_{P_{\bm h, \bm{\overline{\sigma}}}^Z} \lt[\sum_{k = 1}^K p_k |\indicator{h_k > 0} - \widetilde\pi_k(Z)|\rt].
    \end{align*}
where $P_{\bm h, \bm \sigma}^Z = \cN(\bm h, \operatorname{diag}(\sigma_1^2, \ldots, \sigma_K^2))$ is the $Z$-marginal of $P_{\bm h, \bm \sigma}$.

    By Lemma~\ref{lem:weighted_assouad} we obtain
    \begin{align*}
       \inf_{\widetilde\pi: \R^K \times [0, \infty)^K \to [0, 1]^K} \sup_{P_{\bm h, \bm \sigma} \in \mathcal P_K} R_K(P_{\bm h, \bm \sigma}, \widetilde\pi) &\geq \frac{t}{2} \lt(1 - \max_{\substack{\bm h, \bm h^\prime \in \{-t, t\}^K, \\ \rho(\bm h, \bm  h^\prime) = 1}}  \TV(P_{\bm h, \bm{\overline{\sigma}}}^Z, P_{\bm h^\prime, \bm{\overline{\sigma}}}^Z)\rt) \sum_{k=1}^K p_k
       \\
       & \geq \frac{t}{2} \lt(1 - \max_{\substack{\bm h, \bm h^\prime \in \{-t, t\}^K, \\ \rho(\bm h, \bm  h^\prime) = 1}} \TV(P_{\bm h, \bm{\overline{\sigma}}}^Z, P_{\bm h^\prime, \bm{\overline{\sigma}}}^Z)\rt),
    \end{align*}
    where $\rho(\bm h, \bm h^\prime) = \sum_{k=1}^K \indicator{h_k \neq h^\prime_k}$ is the hamming distance and $\TV(\cdot, \cdot)$ the total variation distance between probability measures \citep[Chapter 15.1.3]{wainwright2019high}.
    Observe that when $\rho(\bm h, \bm h^\prime) = 1$, $P_{\bm h, \bm{\overline{\sigma}}}^Z, P_{\bm h^\prime, \bm{\overline{\sigma}}}^Z$ are two $K$-dimensional Gaussian laws that only differ in one component of their mean (by a distance of $2t$) but share same covariance structure of $\overline \sigma^2 I$. By \citet[Equation~(7.40)]{polyanskiy2025information}
    \[
        \TV(P_{\bm h, \bm{\overline{\sigma}}}^Z, P_{\bm h^\prime, \bm{\overline{\sigma}}}^Z) = 2 \Phi \lt(\frac{t}{\overline \sigma}\rt) - 1.
    \]
    Plugging this into the above,
    \begin{align*}
       \inf_{\widetilde\pi: \R^K \times [0, \infty)^K \to [0, 1]^K} \sup_{P_{\bm h, \bm \sigma} \in \mathcal P_K} R_K(P_{\bm h, \bm \sigma}, \widetilde\pi) & \geq  \frac{t}{2} \lt[1 - \lt(2 \Phi \lt(\frac{t}{\overline \sigma}\rt) - 1\rt)\rt]
        \\
        &= t\lt[1 - \Phi \lt(\frac{t}{\overline \sigma}\rt)\rt]
        \\
        &= t \Phi\lt(- \frac{t}{\overline \sigma}\rt).
    \end{align*}
    Since this inequality holds for every $t\in (0, H]$, we deduce
    \begin{align*}
       \inf_{\widetilde\pi: \R^K \times [0, \infty)^K \to [0, 1]^K} \sup_{P_{\bm h, \bm \sigma} \in \mathcal P_K} R_K(P_{\bm h, \bm \sigma}, \widetilde\pi) & \geq \max_{0 \leq t \leq H} t \Phi\lt(- \frac{t}{\overline \sigma}\rt),
    \end{align*}
    which concludes the first part of the proof. 
    
    Next we show that the plug-in policy $\widetilde\pi_k^*(Z, \widehat \sigma) = \indicator{Z_k > 0}$ attains the lower bound. Fix any $P_{\bm h, \bm \sigma} \in \mathcal P_K$. Since the rule ignores $\widehat \sigma$, we have
    \begin{align}\label{eq:scaled_regret}
         R_K(P_{\bm h, \bm \sigma}, \widetilde\pi^*) &= \E_{P_{\bm h, \bm \sigma}}  \lt[\sum_{k = 1}^K p_k |h_k| |\indicator{h_k > 0} - \indicator{Z_k > 0}|\rt]
        \notag
        \\
        &=  \sum_{k = 1}^K p_k |h_k| \lt[\indicator{h_k > 0} P_{\bm h, \bm \sigma}(Z_k \leq 0) + \indicator{h_k < 0} P_{\bm h, \bm \sigma}(Z_k > 0)\rt]
        \notag
        \\
        &= \sum_{k = 1}^K p_k |h_k| \lt[\indicator{h_k > 0} \Phi\lt(- \frac{|h_k|}{\sigma_k}\rt) + \indicator{h_k < 0} \Phi\lt(- \frac{|h_k|}{\sigma_k}\rt)\rt]
        \notag
        \\
        &= \sum_{k = 1}^K p_k  |h_k| \Phi\lt(- \frac{|h_k|}{\sigma_k}\rt)
        \\
        &\leq \sum_{k = 1}^K p_k  |h_k| \Phi\lt(- \frac{|h_k|}{\overline \sigma} \rt)
        \notag
        \\
        &\leq \max_{0 \leq t \leq H} t \Phi\lt(- \frac{t}{\overline \sigma}\rt) \sum_{k = 1}^K p_k 
        \notag
        \\
        &= \max_{0 \leq t \leq H} t \Phi\lt(- \frac{t}{\overline \sigma}\rt).
        \notag
    \end{align}
    Equality is obtained by setting $\bm \sigma = \bm{\overline{\sigma}}$ and any $\bm h$ with
    $|h_k| = t^*_{H, \overline \sigma}$ for all $k$, where $t^*_{H, \overline \sigma} = \argmax_{0 \leq t \leq H} t \Phi\lt(- \frac{t}{\overline \sigma}\rt)$, since the regret depends on $\bm h$ only through $(|h_k|)_{k=1}^K$. This shows that
    \begin{align*}
        \inf_{\widetilde\pi: \R^K \times [0, \infty)^K \to [0, 1]^K} \sup_{P_{\bm h, \bm \sigma} \in \mathcal P_K} R_K(P_{\bm h, \bm \sigma}, \widetilde\pi) = \max_{0 \leq t \leq H} t \Phi\lt(- \frac{t}{\overline \sigma}\rt).
    \end{align*}

    Finally, we prove the Bayes claim. Since for any prior $\Pi \in \mathfrak P$,
    \begin{align}\label{eq:minimax_bayes}
         \inf_{\widetilde\pi: \R^K \times [0, \infty)^K \to [0, 1]^K} \sup_{P_{\bm h, \bm \sigma} \in \mathcal P_K} R_K(P_{\bm h, \bm \sigma}, \widetilde\pi) \geq   \inf_{\widetilde\pi: \R^K \times [0, \infty)^K \to [0, 1]^K} \int R_K(P_{\bm h, \bm \sigma}, \widetilde\pi) \Pi(dP_{\bm h, \bm \sigma}),
    \end{align}
    it is enough to show that the reverse inequality holds for a least-favorable prior $\Pi^* \in \mathfrak P$. Write $t^*=t_{H,\overline\sigma}^*$ and define the prior $\Pi^\ast$ by
    \[
        \Pi^\ast\lt(\{P_{\bm h, \bm{\overline{\sigma}}}\}\rt) = 2^{-K},
    \]
    for $P_{\bm h, \bm{\overline{\sigma}}} \in \cP_K^0$ and $\bm h \in \{-t^*, t^*\}^K$. This prior assigns equal probability to all elements in $\cP^0_K$ with $\bm h \in \{-t^*, t^*\}^K$. Then, following the above calculations and Lemma~\ref{lem:weighted_assouad},
    \begin{align*}
         \inf_{\widetilde\pi: \R^K \times [0, \infty)^K \to [0, 1]^K} \sup_{P_{\bm h, \bm \sigma} \in \mathcal P_K} R_K(P_{\bm h, \bm \sigma}, \widetilde\pi) &\geq   \inf_{\widetilde\pi: \R^K \times [0, \infty)^K \to [0, 1]^K} \int R_K(P_{\bm h, \bm \sigma}, \widetilde\pi) \Pi^*(dP_{\bm h, \bm \sigma})
         \\
         &=  \inf_{\widetilde\pi: \R^K \times [0, \infty)^K \to [0, 1]^K} \frac{1}{2^K} \sum_{\bm h \in \{-t^*, t^*\}^K}  R_K(P_{\bm h, \bm{\overline{\sigma}}}, \widetilde\pi)
         \\
         &=  \inf_{\widetilde\pi: \R^K \to [0, 1]^K} \frac{1}{2^K} \sum_{\bm h \in \{-t^*, t^*\}^K}  R_K(P_{\bm h, \bm{\overline{\sigma}}}, \widetilde\pi)
         \\
         &=  t^* \inf_{\widetilde\pi: \R^K \to [0, 1]^K} \frac{1}{2^K} \sum_{\bm h \in \{-t^*, t^*\}^K}  \E_{P_{\bm h, \bm{\overline{\sigma}}}} \lt[\sum_{k = 1}^K p_k |\indicator{h_k > 0} - \widetilde\pi_k(Z)|\rt]
         \\
         &\geq t^* \Phi\lt(- \frac{t^*}{\overline \sigma}\rt)
         \\
         &= \max_{0 \leq t \leq H} t \Phi\lt(- \frac{t}{\overline \sigma}\rt)
         \\
         &=  \inf_{\widetilde\pi: \R^K \times [0, \infty)^K \to [0, 1]^K} \sup_{P_{\bm h, \bm \sigma} \in \mathcal P_K} R_K(P_{\bm h, \bm \sigma}, \widetilde\pi).
    \end{align*}
    Thus $\Pi^\ast$ is least favorable. By Equation~\eqref{eq:scaled_regret}, the plug-in policy $\widetilde\pi^*$ attains this Bayes risk, which completes the proof.

\subsection{Proof of Theorem~\ref{thm:value}} \label{app:value}

The proof follows from an adaptation of Theorem~3 and Theorem~4 in \citet{cai2011}. To give intuition, establish that our estimator achieves the claimed rate and refer to \citet[Theorem~3]{cai2011} for the corresponding minimax lower bound. In the following, we write $\E_{\bm h, \bm \sigma}$ and $\V_{\bm h, \bm \sigma}$ for expectation and variance under the Gaussian marginal $Z \sim \cN(\bm h, \operatorname{diag}(\sigma_1^2, \ldots, \sigma_K^2))$.

Decompose 
\begin{align*}
    \sqrt{n} \{V_{\bm h}^\ast - \widehat V^\ast_m\} &= \lt[\frac{1}{2} \sum_{k=1}^K p_k h_k + \frac{1}{2} \sum_{k=1}^K p_k |h_k|\rt] - \lt[\frac{1}{2} \sum_{k=1}^K p_k \lt(Z_k + \sum_{j = 0}^m q^\ast_{j, m} \sigma_k^{2j} \Theta_{2j} \lt(\frac{Z_k}{\sigma_k}\rt) \rt)\rt]
    \\
    &= \lt[\frac{1}{2} \sum_{k=1}^K p_k (h_k - Z_k) \rt] + \lt[\frac{1}{2} \sum_{k=1}^K p_k \lt(|h_k| - \sum_{j = 0}^m q^\ast_{j, m} \sigma_k^{2j} \Theta_{2j} \lt(\frac{Z_k}{\sigma_k}\rt) \rt)\rt]
    \\
    &= R_1 + R_2.
\end{align*}
To bound the mean-square error, we study the bias and variance of $R_1$ and $R_2$ respectively. We start with $R_1$. Since $\E_{\bm h, \bm \sigma}[Z_k] = h_k$, $\E_{\bm h, \bm \sigma}[R_1] = 0$. By independence of $Z_1, \ldots, Z_K$,
\begin{align*}
   \V_{\bm h, \bm \sigma}(R_1) = \frac{1}{4} \sum_{k = 1}^K p_k^2 \sigma_k^2 \leq \frac{\overline{\sigma}^2}{4} \sum_{k = 1}^K p_k^2  \lesssim \frac{\overline{\sigma}^2}{4 K},
\end{align*}
as $\max_k p_k = O(1/K)$. Combining bias and variance yields
\begin{align}\label{eq:mean_mse}
    \E_{\bm h, \bm \sigma}[R_1^2] \lesssim \frac{1}{K}.
\end{align}
For the second term $R_2$ we recall that for probabilistic Hermite polynomials and $Z_k \sim \cN(h_k, \sigma_k^2)$
\[
    \E_{\bm h, \bm \sigma}\lt[\sigma_k^{2j} \Theta_{2j} \lt(\frac{Z_k}{\sigma_k}\rt)\rt] = h_k^{2j}.
\]
We use this to determine the bias,
\begin{align*}
    |\E_{\bm h, \bm \sigma}[R_2]| &= \lt|\frac{1}{2} \sum_{k=1}^K p_k \E_{\bm h, \bm \sigma}\lt[|h_k| - \sum_{j = 0}^m q^\ast_{j, m} \sigma_k^{2j} \Theta_{2j} \lt(\frac{Z_k}{\sigma_k}\rt)\rt]\rt| 
    \\
    &= \lt|\frac{1}{2} \sum_{k=1}^K p_k \lt(|h_k| -  \sum_{j = 0}^m q^\ast_{j, m} h_k^{2j}\rt) \rt|
    \\
    &= \lt|\frac{1}{2} \sum_{k=1}^K p_k \lt(|h_k| -  Q^\ast_m(h_k)\rt) \rt|
    \\
    &\leq \frac{1}{2} \delta_m,
\end{align*}
where $\delta_m = \sup_{x \in [-H, H]} ||x| - Q^\ast_m(x)|$ and $Q^\ast_m = \sum_{j=0}^m q^\ast_{j, m} x^{2j}$. Since $Q^\ast_m$ is the best polynomial approximation of $|x|$ on $[-H, H]$ we have \citep{bernstein1913valeur, varga1987conjecture}
\[
    \delta_m \lesssim \frac{H}{m}.
\]
Thus the bias of $R_2$ is uniformly bounded by
\[
    \sup_{\bm h \in [-H, H]^K} |\E_{\bm h, \bm \sigma}[R_2]| \lesssim \frac{H}{m}.
\]
To determine the variance of $R_2$, \citet[Lemma~2 and Section~4.2]{cai2011} bounds the coefficients of $Q^\ast_m$
\[
    |q^\ast_{j, m}| \leq H^{1 - 2j} 2^{3m}.
\]
Hence for $1 \leq j \leq m$ \citet[Lemma~3]{cai2011} implies
\begin{align*}
    \V_{\bm h, \bm \sigma}\lt(q^\ast_{j, m} \sigma_k^{2j} \Theta_{2j} \lt(\frac{Z_k}{\sigma_k}\rt) \rt) &= \lt(q^\ast_{j, m} \sigma_k^{2j}\rt)^2 \V\lt(\Theta_{2j} \lt(\frac{Z_k}{\sigma_k}\rt)\rt)
    \\
    &\leq (q^\ast_{j, m})^2 \sigma_k^{4j} e^{(h_k / \sigma_k)^2} (2j)^{2j}
    \\
    &\leq (H^{1 - 2j} 2^{3m})^2 \sigma_k^{4j}  e^{(h_k / \sigma_k)^2} (2j)^{2j}
    \\
    &\leq H^2  2^{6m} \lt(\frac{\sigma_k}{H}\rt)^{4j} e^{(H / \underline{\sigma})^2} (2j)^{2j}
    \\
        &\leq H^2  2^{6m} \lt(\frac{\sigma_k}{H}\rt)^{4j} e^{(H / \underline{\sigma})^2} 2^{2m} m^{2m}
    \\
    &\leq H^2 e^{(H / \underline{\sigma})^2} 2^{8m} \max\lt\{1, \lt(\frac{\overline{\sigma}}{H}\rt)^{4m}\rt\} m^{2m}
    \\
    &\lesssim A^{m} m^{2m},
\end{align*}
for some large enough $A = A(H, \overline{\sigma}, \underline{\sigma}) > 0$. For $j = 0$ the above term is constant as $\Theta_0(x) \equiv 1$. Using independence across $k$ and the Minkowski’s inequality, 
\begin{align*}
    \V_{\bm h, \bm \sigma}(R_2) &= \frac{1}{4} \sum_{k = 1}^K p_k^2 \V_{\bm h, \bm \sigma}\lt(\sum_{j = 0}^m q^\ast_{j, m} \sigma_k^{2j} \Theta_{2j} \lt(\frac{Z_k}{\sigma_k}\rt) \rt)
    \\
    &\leq \frac{1}{4} \sum_{k = 1}^K p_k^2 \lt(\sum_{j = 0}^m\V_{\bm h, \bm \sigma}\lt(q^\ast_{j, m} \sigma_k^{2j} \Theta_{2j} \lt(\frac{Z_k}{\sigma_k}\rt) \rt)^{1/2}\rt)^2
    \\
    &\lesssim \sum_{k = 1}^K p_k^2 m^2 A^m m^{2m}
    \\
    &\lesssim \frac{1}{K} A^{m} m^{2m}.
\end{align*}
Adding bias and variance yields 
\begin{align}\label{eq:herm_mse}
    \E_{\bm h, \bm \sigma}[R_2^2] \lesssim \frac{H^2}{m^2} + \frac{A^m m^{2m}}{K}.
\end{align}
Combining Equations~\eqref{eq:mean_mse} and~\eqref{eq:herm_mse}, for any $h \in [-H, H]^K$,
\begin{align*}
   \E_{\bm h, \bm \sigma}\lt[\lt(\sqrt{n} \{V^\ast_{\bm h} - \widehat V^\ast_m\}\rt)^2\rt] &= \E_{\bm h, \bm \sigma}[(R_1 + R_2)^2]
   \\
   &\leq 2 \E_{\bm h, \bm \sigma}[R_1^2] + 2 \E_{\bm h, \bm \sigma}[R_2^2]
   \\
   &\lesssim \frac{1}{K} + \frac{H^2}{m^2} + \frac{A^m m^{2m}}{K}
   \\
   &\lesssim_{H, \overline{\sigma}, \underline{\sigma}} \lt(\frac{\log(\log(K))}{\log(K)}\rt)^2,
\end{align*}
where we set $m = m_K = \lt \lfloor \frac{1}{4} \log(K) / (\log(\log(K))) \rt \rfloor$ in the last line to balance the terms. This proves the claimed rate.

\subsection{Proof of Proposition~\ref{pro:smoothing_tradeoff}} \label{app:smoothing_tradeoff}

Under Assumptions~\ref{ass:gaussian_model}--\ref{ass:weak_signal}, $\widehat g_w(x) \sim \cN(m_w(x), s_w(x)^2)$. Its induced statistical decision rule is given by $\widetilde\pi_{w, k}(Z) = \indicator{\widehat g_w(x_k) > 0}$ and incurs a regret of
\begin{align}\label{eq:sign_regret}
    R_K(P_{g, \bm{\sigma}}, \widetilde\pi_w) &= \sum_{k= 1}^K p_k |g(x_k)| |\E_{g, \bm{\sigma}}[\indicator{\widehat g_w(x_k) > 0}] - \indicator{g(x_k) > 0}|
    \notag
    \\
    &= \sum_{k= 1}^K p_k |g(x_k)| P_{g, \bm{\sigma}}(\indicator{\widehat g_w(x_k) > 0} \neq \indicator{g(x_k) > 0})
    \notag
    \\
    &= \sum_{k = 1}^K p_k |g(x_k)| P_{g, \bm{\sigma}}(\widehat g_w(x_k) g(x_k) \leq  0)
    \notag
    \\
    &= \sum_{k= 1}^K p_k |g(x_k)| \Phi\lt(- \frac{m_w(x_k) \operatorname{sign}(g(x_k))}{s_w(x_k)} \rt).
\end{align}
We analyze the regret coordinatewise.  If $g(x_k) m_w(x_k) \leq 0$ then 
\begin{align*}
    |g(x_k)| \Phi\lt(-\frac{m_w(x_k) \operatorname{sign}(g(x_k))}{s_w(x_k)} \rt) \leq |g(x_k)| \leq |g(x_k) - m_w(x_k)|.
\end{align*}
Now suppose that $g(x_k)m_w(x_k)>0$. Then
$m_w(x_k)\operatorname{sign}(g(x_k))=|m_w(x_k)|$, and hence
\begin{align*}
    |g(x_k)| \Phi\lt( -\frac{|m_w(x_k)|}{s_w(x_k)}\rt) &\leq
    \Big(
        |g(x_k)-m_w(x_k)|
        +
        |m_w(x_k)|
    \Big)
    \Phi\lt(
        -\frac{|m_w(x_k)|}{s_w(x_k)}
    \rt)
    \\
    &\leq
    |g(x_k)-m_w(x_k)|
    + s_w(x_k) \frac{|m_w(x_k)|}{s_w(x_k)}
    \Phi\lt( -\frac{|m_w(x_k)|}{s_w(x_k)} \rt)
    \\
    &\leq
    |g(x_k)-m_w(x_k)| + \gamma^\ast s_w(x_k),
\end{align*}
where $\gamma^\ast := \sup_{t\geq0}t\Phi(-t)$. Combining both cases gives
\begin{align*}
     R_K(P_{g, \bm{\sigma}}, \widetilde\pi_w) &= \sum_{k= 1}^K p_k |g(x_k)| \Phi\lt(- \frac{m_w(x_k) \operatorname{sign}(g(x_k))}{s_w(x_k)} \rt)
     \\
     &\leq \sum_{k= 1}^K p_k \Big(|g(x_k)-m_w(x_k)| + \gamma^\ast s_w(x_k)\Big)
     \\
     &= \sA_g(w) + \gamma^\ast \sE_{\bm{\sigma}}(w).
\end{align*}
Finally, since $\|g\|_\infty \leq H$ and $\sum_{k=1}^Kp_k=1$, we have trivially, 
\[
    R_K(P_{g, \bm{\sigma}},\widetilde\pi_w) \leq  \sum_{k=1}^Kp_k|g(x_k)| \leq H.
\]

We now show that the bound is sharp over the class of linear smoothers, both for the approximation term and for the estimation term. To establish sharpness of the coefficient on the estimation error, let $ t^\ast\in\argmax_{t\ge0} t\Phi(-t)$ and suppose that
\[
    g(x)=\frac{t^\ast\sigma}{\sqrt K},
\]
$\sigma_k=\sigma$ and $p_k=\frac{1}{K}$ for all $k \in [K]$. Trivially, $g \in \cH_d(\beta, L, H)$.
Consider the uniform linear smoother with $w_j(x_k)=1/K$ for all $j,k \in [K]$, which sum to one. Then for all $k \in [K]$, $m_w(x_k) = g(x_k)$ and hence $\sA_g(w)=0$. Furthermore, $s_w(x_k) = \frac{\sigma}{\sqrt{K}}$ for all $k \in [K]$, which implies $\sE_{\bm\sigma}(w)=\frac{\sigma}{\sqrt K}$. Thus,
\begin{align*}
    R_K(P_{g, \bm{\sigma}},\widetilde\pi_w) &= \frac{t^\ast \sigma}{\sqrt{K}} \Phi(-t^\ast)
    \\
    &= \gamma^\ast \frac{\sigma}{\sqrt{K}}
    \\
    &= \gamma^\ast \sE_{\bm\sigma}(w)
\end{align*}
and $\gamma^\ast$ is sharp.

Next, we show that the approximation coefficient (of $1$) is sharp. Let $K$ be even and define $\varepsilon_K=\frac{H}{K^{1/4}}$. Let $\Delta=\lt(\frac{2H}{L}\rt)^{1/\beta}$ and define the signal by
\[
    g_K(x)
    =
    \begin{cases}
        H, & \text{if } x^{(1)}\leq 0,\\
        -\varepsilon_K, & \text{if } x^{(1)}\geq \Delta,
    \end{cases}
\]
with appropriate $\beta$-H\"older smoothing in between, where $x^{(1)}$ denotes the first coordinate of $x$. For example, one may take $g_K(x) = -\varepsilon_K + (H+\varepsilon_K) f(x^{(1)})$ where $f(u) := \lt[1 - \lt(\frac{u^+}{\Delta}\rt)^\beta\rt]^+$. Hence $g_K \in \cH_d(\beta, L, H)$.

Let $p_k=1/K$ and $\sigma_k=\sigma$ for all $k\in[K]$. Choose task covariates $x_1,\ldots,x_{K/2}$ such that $g_K(x_k)=H$, and choose $x_{K/2+1},\ldots,x_K$ such that $g_K(x_k)=-\varepsilon_K$. Define the smoothing weights
\[
    w_j(x)  = \frac{2}{K}\indicator{g_K(x_j)\leq 0}.
\]
These weights are deterministic, nonnegative, and sum to one by construction. They induce the smoother
\[
    \widehat g_w(x) =  \frac{2}{K} \sum_{j:\,g_K(x_j)\le0} Z_j,
\]
which averages only over the negative-signal tasks.

Its mean is given by
\[
    m_w(x_k) = \frac{2}{K} \sum_{j: g(x_j) \leq 0} - \epsilon_K = - \epsilon_K,
\]
and variance by
\begin{align*}
    s_w(x_k)^2 &= \lt(\frac{2}{K}\rt)^2 \sum_{j: g(x_j) \leq 0} \sigma^2
    \\
    &= \frac{4 \sigma^2}{K^2} \frac{K}{2}
    \\
    &= \frac{2 \sigma^2}{K}.
\end{align*}
Consequently, by Equation~\eqref{eq:sign_regret}
\begin{align*}
    R_K(P_{g, \bm{\sigma}}, \widetilde\pi_w) &= \sum_{k= 1}^K p_k |g(x_k)| \Phi\lt(- \frac{m_w(x_k) \operatorname{sign}(g(x_k))}{s_w(x_k)} \rt)
    \\
    &= \frac{1}{K} \sum_{k = 1}^{K/2} H \Phi\lt(\frac{\epsilon_K}{\sigma \sqrt{2 / K}}\rt) + \frac{1}{K} \sum_{k = K/2 + 1}^{K} \epsilon_K \Phi\lt(\frac{- \epsilon_K}{\sigma \sqrt{2 / K}}\rt)
    \\
    &= \frac{1}{K} \sum_{k = 1}^{K/2} H \Phi\lt(\frac{H K^{1/4}}{\sigma \sqrt{2}}\rt) + \frac{1}{K} \sum_{k = K/2 + 1}^{K} \epsilon_K \Phi\lt(\frac{-H K^{1/4}}{\sigma \sqrt{2}}\rt)
    \\
    &= \frac{H}{2} \Phi\lt(\frac{H K^{1/4}}{\sigma \sqrt{2}}\rt) + \frac{\epsilon_K}{2} \Phi\lt(\frac{-H K^{1/4}}{\sigma \sqrt{2}}\rt).
\end{align*}
Thus, $R_K(P_{g, \bm{\sigma}}, \widetilde\pi_w) \to \frac{H}{2}$ as $K \to \infty$. Since mistakes are made only on tasks with positive signal, the approximation error is
\begin{align*}
    \sA_{g_K}(w) = \frac{1}{2} |H - (- \epsilon_K)| \to \frac{H}{2}.
\end{align*}
Since all tasks are equally weighted and have the same standard error, the estimation error is given by
\[
    \sE_{\bm{\sigma}}(w) = \sigma \sqrt{\frac{2}{K}} \to 0.
\]
Hence, in the limit $R_K(P_{g, \bm{\sigma}}, \widetilde\pi_w) -  \sA_{g_K}(w) \to 0$, which proves that the approximation constant is sharp.

\subsection{Proof of Theorem~\ref{thm:regret_linear_smoother}}\label{app:regret_linear_smoother}

First, by Assumption~\ref{ass:quasi_uniform_design}, for every $j \in [K]$, we have
\begin{align*}
    p_j \leq Q_K(B(x_j, a_0K^{-1/d})) \leq \frac{C a_0^d}{K},
\end{align*}
and consequently, 
\begin{align}\label{eq:max_p}
    \max_{j \in [K]} p_j \leq \frac{C a_0^d}{K}.
\end{align}

We now prove the theorem for partition pooling, then for kernel smoothing and finally for series smoothing estimators separately.

\paragraph{Partition Pooling.}
Fix $k \in [K]$ and let $A_k=A_r(x_k)$ be the cell of $\cA_r$ that contains $x_k$. Then $w_{r, j}^{\mathrm P}(x_k) = \frac{p_j \indicator{x_j\in A_k}}{Q_K(A_k)}$ where $Q_K(A_k) = \sum_{j: x_j \in A_k} p_j$.
By Proposition~\ref{pro:smoothing_tradeoff}, the approximation error is given by
\begin{align*}
    \sA_g(w_r^{\mathrm P}) &= \sum_{k = 1}^K p_k |g(x_k) - m_{w_r^{\mathrm P}}(x_k)|
    \\
    &= \sum_{k = 1}^K p_k \lt|g(x_k) - \sum_{j: x_j \in A_k} \frac{p_j}{Q_K(A_k)} g(x_j)\rt|
    \\
    &\leq \sum_{k = 1}^K p_k \lt|\sum_{j: x_j \in A_k} \frac{p_j}{Q_K(A_k)} g(x_k) - \sum_{j: x_j \in A_k} \frac{p_j}{Q_K(A_k)} g(x_j)\rt|
    \\
    &\leq \sum_{k = 1}^K p_k \sum_{j: x_j \in A_k} \frac{p_j}{Q_K(A_k)} \lt| g(x_k) - g(x_j)\rt|
    \\
    &\leq \sum_{k = 1}^K p_k \sum_{j: x_j \in A_k} \frac{p_j}{Q_K(A_k)} L r^\beta
    \\
    &\leq \sum_{k = 1}^K p_k L r^\beta
    \\
    &\leq L r^\beta.
\end{align*}
Likewise, for the estimation error, by Assumption~\ref{ass:gaussian_model},
\begin{align*}
    \sE_{\bm{\sigma}}(w_r^{\mathrm P}) &= \sum_{k = 1}^K p_k s_{w_r^{\mathrm P}}(x_k)    
    \\
    &= \sum_{k = 1}^K p_k \sqrt{\sum_{j=1}^K w_{r, j}^{\mathrm P}(x_k)^2\sigma_j^2}
    \\
    &\leq \overline{\sigma} \sum_{k = 1}^K p_k \sqrt{\sum_{j=1}^K w_{r, j}^{\mathrm P}(x_k)^2}
    \\
    &= \overline{\sigma} \sum_{k = 1}^K p_k \sqrt{\sum_{j: x_j \in A_k} \frac{ p_j^2}{Q_K(A_k)^2}}
    \\
    &\leq \overline{\sigma} \sum_{k = 1}^K p_k \sqrt{ \frac{ \max_{j \in [K]} p_j \sum_{j: x_j \in A_k} p_j}{Q_K(A_k)^2}}
    \\
    &= \overline{\sigma} \sum_{k = 1}^K p_k  \sqrt{\frac{ \max_{j \in [K]} p_j}{Q_K(A_k)}}
    \\
    &\leq \overline{\sigma} \sum_{k = 1}^K p_k  \sqrt{\frac{C a_0^d}{c_{\mathrm P} K r^d}}
    \\
    &= \sqrt{\frac{C a_0^d}{c_{\mathrm P}}} \frac{\overline{\sigma} }{\sqrt{K r^d}}
    \\
    &\lesssim \frac{\overline{\sigma} }{\sqrt{K r^d}}
\end{align*}
where we used Equation~\eqref{eq:max_p} and that $\min_{A \in \cA_r} Q_K(A) \geq c_{\mathrm P} r^d$ by assumption. This proves the partition pooling claim.

\paragraph{Kernel Smoothing.}
We consider the kernel smoothing claim next. Because the kernel is supported on the unit ball, when $w_{r,j}^{\mathrm K}(x_k)>0$ then $\|x_j-x_k\|_2\le r$. Hence, 
\begin{align*}
    \sA_g(w_r^{\mathrm K}) &= \sum_{k = 1}^K p_k |g(x_k) - m_{w_r^{\mathrm K}}(x_k)|
    \\
    &=\sum_{k = 1}^K p_k |g(x_k) - \sum_{j = 1}^K w_{r, j}^{\mathrm K}(x_k) g(x_j)|
    \\
    &=\sum_{k = 1}^K p_k \lt|\sum_{j = 1}^K w_{r, j}^{\mathrm K}(x_k) g(x_k) - \sum_{j = 1}^K w_{r, j}^{\mathrm K}(x_k) g(x_j)\rt|
    \\
    &\leq \sum_{k = 1}^K p_k \sum_{j = 1}^K w_{r, j}^{\mathrm K}(x_k) | g(x_k) - g(x_j)|
    \\
    &\leq \sum_{k = 1}^K p_k \sum_{j = 1}^K w_{r, j}^{\mathrm K}(x_k) L r^\beta
    \\
    &\leq L r^\beta.
\end{align*}

Next, we upper bound the estimation error. We start by bounding the squared sum of weights,
\begin{align}\label{eq:kernel_weights_square}
    \sum_{j=1}^K w_{r, j}^{\mathrm K}(x_k)^2 &\leq  \sum_{j=1}^K \frac{\sK\lt((x_j-x_k)/r\rt)^2}
    {\lt(\sum_{\ell=1}^K\sK\lt((x_\ell-x_k)/r\rt)\rt)^2}
    \notag
    \\
    &\leq \|\sK\|_\infty \sum_{j=1}^K \frac{\sK\lt((x_j-x_k)/r\rt)}
    {\lt(\sum_{\ell=1}^K\sK\lt((x_\ell-x_k)/r\rt)\rt)^2}
    \notag
    \\
    &= \frac{\|\sK\|_\infty}{\sum_{\ell=1}^K\sK\lt((x_\ell-x_k)/r\rt)}
\end{align}
Convexity also gives $\sum_{j=1}^K w_{r, j}^{\mathrm K}(x_k)^2 \leq 1$. 

Because the kernel is continuous at zero and positive at zero, there are constants $\rho\in(0,1]$ and $\kappa>0$ such that $ \sK(u)\ge\kappa$ for all $\|u\|_2\le\rho$. This induces two regimes. If $\rho r\geq a_0K^{-1/d}$, then
\begin{align*}
    \sum_{\ell=1}^K
    \sK\lt(\frac{x_\ell-x_k}{r}\rt)
    &\geq \kappa \sum_{\ell=1}^K \indicator{\|x_\ell-x_k\|_2\leq\rho r}
    \\
    &\geq \kappa \sum_{\ell=1}^K \frac{p_\ell}{\max_{\ell\in[K]}p_\ell}\indicator{\|x_\ell-x_k\|_2\leq\rho r}
    \\
    &=
    \kappa
    \frac{Q_K(B(x_k,\rho r))}{\max_{\ell\in[K]}p_\ell}
    \\
    &\geq
    \kappa
    \frac{c\rho^dr^d}{Ca_0^d/K}
    \\
    &=
    \frac{\kappa c\rho^d}{Ca_0^d}Kr^d,
\end{align*}
where we used
$\max_{\ell\in[K]}p_\ell\leq Ca_0^d/K$ by Equation~\eqref{eq:max_p} and $Q_K(B(x_k,\rho r)) \geq c \rho^d r^d$ by Assumption~\ref{ass:quasi_uniform_design}. Consequently, if $\rho r\geq a_0K^{-1/d}$, Equation~\eqref{eq:kernel_weights_square} gives
\[
    \sum_{j=1}^K
    w_{r,j}^{\mathrm K}(x_k)^2
    \leq
    \frac{C a_0^d\|\sK\|_\infty}
         {\kappa c\rho^d}
    \frac{1}{Kr^d}.
\]

If instead $\rho r<a_0K^{-1/d}$, then $Kr^d<\frac{a_0^d}{\rho^d}$.
Therefore, using
$\sum_{j=1}^K w_{r,j}^{\mathrm K}(x_k)^2\leq1$,
\[
    \sum_{j=1}^K
    w_{r,j}^{\mathrm K}(x_k)^2
    \leq
    1
    \leq
    \frac{a_0^d}{\rho^d}
    \frac{1}{Kr^d}.
\]

Combining the two regimes gives
\[
    \sum_{j=1}^K
    w_{r,j}^{\mathrm K}(x_k)^2
    \leq
    \frac{a_0^d}{\rho^d}
    \max\lt\{
        1,
        \frac{C\|\sK\|_\infty}{\kappa c}
    \rt\}
    \frac{1}{Kr^d}.
\]
It follows that
\begin{align*}
    \sE_{\bm\sigma}(w_r^{\mathrm K})
    &=
    \sum_{k=1}^Kp_k s_{w_r^{\mathrm K}}(x_k)
    \\
    &= \sum_{k=1}^Kp_k \sqrt{\sum_{j=1}^K w_{r,j}^{\mathrm K}(x_k)^2\sigma_j^2}
    \\
    &\leq \overline{\sigma} \sum_{k=1}^Kp_k \sqrt{\sum_{j=1}^K w_{r,j}^{\mathrm K}(x_k)^2}
    \\
    &\leq
    \frac{a_0^{d/2}}{\rho^{d/2}}
    \max\lt\{
        1,
        \frac{C\|\sK\|_\infty}{\kappa c}
    \rt\}^{1/2}
    \frac{\overline\sigma}{\sqrt{Kr^d}} \sum_{k = 1}^K p_k
    \\
     &\leq
    \frac{a_0^{d/2}}{\rho^{d/2}}
    \max\lt\{
        1,
        \frac{C\|\sK\|_\infty}{\kappa c}
    \rt\}^{1/2}
    \frac{\overline\sigma}{\sqrt{Kr^d}}
    \\
    &\lesssim
    \frac{\overline\sigma}{\sqrt{Kr^d}}.
\end{align*}
This proves the bounds on the kernel smoothing estimator.

\paragraph{Series Regression.}

We start with the approximation error. Fix the a potential signal $g \in \cH_d(\beta, L, H)$. We construct a signal close to $g$that lies within the span of the basis functions.

For each $\nu\in[q_r]$ choose $\xi_\nu\in\operatorname{supp}(b_{r,\nu}) \cap \cX$ and define the spline interior signal as
\begin{align}\label{eq:spline_signal}
    g_r^\circ(x) = \sum_{\nu=1}^{q_r} \theta_\nu^\circ b_{r,\nu}(x),
\end{align}
where $\theta_\nu^\circ = g(\xi_\nu)$ are the spline coefficients.
By assumption, for each $x \in \cX$ exists $\nu$ such that $b_{r,\nu}(x) > 0$. Furthermore, if $x \in \operatorname{supp}(b_{r,\nu})$ then $\|x - \xi_\nu\|_2 \leq C_{\mathrm S}r$. Since $\sum_{\nu=1}^{q_r}b_{r,\nu}(x)=1$, for all $x \in \cX$
\begin{align}\label{eq:sup_signal}
    |g(x)-g_r^\circ(x)| &= \lt|g(x) -  \sum_{\nu=1}^{q_r} \theta_\nu^\circ b_{r,\nu}(x)\rt|\
    \notag
    \\
    &=\lt|g(x) -  \sum_{\nu=1}^{q_r} g(\xi_\nu)b_{r,\nu}(x)\rt|
    \notag
    \\
    &= \lt|\sum_{\nu=1}^{q_r} b_{r,\nu}(x) g(x) -  \sum_{\nu=1}^{q_r} g(\xi_\nu)b_{r,\nu}(x)\rt|
    \notag
    \\
    &\leq \sum_{\nu=1}^{q_r} b_{r,\nu}(x) \lt|g(x) -  g(\xi_\nu)\rt|
    \notag
    \\
    &\leq L \sum_{\nu=1}^{q_r} b_{r,\nu}(x) \lt\|x -  \xi_\nu \rt\|_2^\beta
    \notag
    \\
    &\leq L C_{\mathrm S}^\beta r^\beta \sum_{\nu=1}^{q_r} b_{r,\nu}(x) 
    \notag
     \\
    &\leq L C_{\mathrm S}^\beta r^\beta,
\end{align}
where we used that $g \in \cH_d(\beta, L, H)$. 

Next, because  $a_k=p_k$, we have $m_{w_r^{\mathrm{S}}}(x) = b_r(x)^\top \theta_{r,g}^\ast$ where
\begin{align*}
    \theta_{r,g}^\ast &=  \E[\widehat\theta_{r,\bm p}] = G_{r,\bm p}^{-1} \sum_{k = 1}^K p_k b_r(x_k) g(x_k).
\end{align*}
Define $Q_r(\theta) = \sum_{k=1}^K p_k \lt(g(x_k)-b_r(x_k)^\top \theta\rt)^2$. Observe that $\theta_{r,g}^\ast$ is the unique minimizer of $Q_r$, hence $Q_r(\theta_{r,g}^\ast) \leq Q_r(\theta^\circ)$, where $\theta^\circ$ is the coefficient vector of $g_r^\circ$ in Equation~\eqref{eq:spline_signal}.
Thus
\begin{align*}
    \sum_{k=1}^K p_k \lt(g(x_k)-m_{w_r^{\mathrm{S}}}(x_k) \rt)^2 &= \sum_{k=1}^K p_k \lt(g(x_k)- b_r(x_k)^\top \theta_{r,g}^\ast\rt)^2
    \\
    &=  Q_r(\theta_{r,g}^\ast)
    \\
    &\leq Q_r(\theta^\circ)
    \\
    &= \sum_{k=1}^K p_k \lt(g(x_k)- b_r(x_k)^\top \theta^\circ\rt)^2
    \\
    &= \sum_{k=1}^K p_k \lt(g(x_k)- g_r^\circ(x_k)\rt)^2.
\end{align*}
Finally, by Jensen's inequality, $\sum_{k=1}^Kp_k=1$ and Equation~\eqref{eq:sup_signal},
\begin{align*}
    \sA_g(w_r^{\mathrm S})
    &=
    \sum_{k=1}^Kp_k
    |g(x_k)-m_{w_r^{\mathrm S}}(x_k)|
    \\
    &\leq
    \lt[
        \sum_{k=1}^Kp_k
        (g(x_k)-m_{w_r^{\mathrm S}}(x_k))^2
    \rt]^{1/2}
    \\
    &\leq
    \lt[
        \sum_{k=1}^Kp_k
        (g(x_k) - g_r^\circ(x_k))^2
    \rt]^{1/2}
    \\
    &\leq
    \sup_{x\in\cX}|g(x)-g_r^\circ(x)|
    \\
    &\leq  C_{\mathrm S} L r^\beta
    \\
    &\lesssim L r^\beta,
\end{align*}
which concludes the claim on the approximation error.

For the estimation error, the weights in
Definition~\ref{def:series_regression} give, for every $k\in[K]$,
\begin{align*}
    s_{w_r^{\mathrm S}}(x_k)^2
    &=
    \sum_{j=1}^K \sigma_j^2p_j^2 \lt( b_r(x_k)^\top G_{r,\bm p}^{-1}b_r(x_j) \rt)^2
    \\
    &\leq \overline\sigma^2 \sum_{j=1}^K p_j^2 \lt( b_r(x_k)^\top G_{r,\bm p}^{-1}b_r(x_j) \rt)^2
    \\
    &\leq \overline\sigma^2 \max_{j\in[K]}p_j \sum_{j=1}^K p_j \lt( b_r(x_k)^\top G_{r,\bm p}^{-1}b_r(x_j) \rt)^2
    \\
    &=
    \overline\sigma^2 \max_{j\in[K]}p_j b_r(x_k)^\top G_{r,\bm p}^{-1}b_r(x_k).
\end{align*}
The final equality follows from the definition of $G_{r,\bm p}$ and the fact that it is positive definite,
\begin{align*}
    \sum_{j=1}^K p_j \lt( b_r(x_k)^\top G_{r,\bm p}^{-1}b_r(x_j) \rt)^2 &= \sum_{j=1}^K p_j \lt( b_r(x_k)^\top G_{r,\bm p}^{-1}b_r(x_j) \rt) \lt( b_r(x_k)^\top G_{r,\bm p}^{-1}b_r(x_j) \rt)
    \\
    &= \sum_{j=1}^K p_j \lt( b_r(x_k)^\top G_{r,\bm p}^{-1}b_r(x_j) \rt) \lt( b_r(x_j)^\top G_{r,\bm p}^{-1}b_r(x_k) \rt)
    \\
    &=  b_r(x_k)^\top G_{r,\bm p}^{-1} \lt(\sum_{j=1}^K p_j  b_r(x_j) b_r(x_j)^\top \rt) G_{r,\bm p}^{-1}b_r(x_k) 
    \\
    &=  b_r(x_k)^\top G_{r,\bm p}^{-1} G_{r,\bm p} G_{r,\bm p}^{-1}b_r(x_k) 
    \\
    &=  b_r(x_k)^\top G_{r,\bm p}^{-1}b_r(x_k).
\end{align*}

Consequently, by Jensen inequality and the definition of $G_{r,\bm p}$,
\begin{align*}
    \sE_{\bm\sigma}(w_r^{\mathrm S}) &= \sum_{k = 1}^K p_k  s_{w_r^{\mathrm S}}(x_k)
    \\
    &\leq \lt\{\sum_{k = 1}^K p_k  s_{w_r^{\mathrm S}}(x_k)^2 \rt\}^{1/2}
    \\
    &\leq \overline\sigma \sqrt{\max_{j\in[K]}p_j} \lt\{\sum_{k = 1}^K p_k  b_r(x_k)^\top G_{r,\bm p}^{-1}b_r(x_k) \rt\}^{1/2}
    \\
    &= \overline\sigma \sqrt{\max_{j\in[K]}p_j} \lt\{\operatorname{tr}\lt(\sum_{k = 1}^K p_k  b_r(x_k)^\top G_{r,\bm p}^{-1}b_r(x_k) \rt)\rt\}^{1/2}
    \\
    &= \overline\sigma \sqrt{\max_{j\in[K]}p_j} \lt\{\operatorname{tr}\lt( G_{r,\bm p}^{-1}\sum_{k = 1}^K p_k  b_r(x_k) b_r(x_k)^\top \rt)\rt\}^{1/2}
    \\
    &= \overline\sigma \sqrt{\max_{j\in[K]}p_j} \lt\{\operatorname{tr} \lt(G_{r,\bm p}^{-1} G_{r,\bm p} \rt)\rt\}^{1/2}
    \\
    &= \overline\sigma \sqrt{\max_{j\in[K]}p_j} \sqrt{q_r}
    \\
    &\lesssim \overline{\sigma} \sqrt{\frac{C a_0^d}{K r^d}}
    \\
    &\lesssim
    \frac{\overline\sigma}{\sqrt{Kr^d}},
\end{align*}
where the last inequality follows from Equation~\eqref{eq:max_p} and
$q_r\asymp r^{-d}$.

\subsection{Proof of Theorem~\ref{thm:minimax_lower_bound}}\label{app:minimax_lower_bound}

The proof is based on a weighted version of Assouad's lemma
\citep[Theorem~2.12]{tsybakov2009introduction} and adapts techniques from \citet{audibert2007fast}. We construct a family of potential signal functions on the hypercube, whose elements are statistically
difficult to distinguish, while errors in identifying the right sign of each
coordinate incur nonvanishing welfare loss.

We start by constructing a family of signal functions.
Fix $r\in[2a_0K^{-1/d},r_0/2]$. For $x\in\R^d$, define
$\phi_0(x):=(1-\|x\|_2^2)_+^\beta$ and set
$\phi_{z,r}(x):=\phi_0\lt(\frac{x-z}{r}\rt)$. By
Lemma~\ref{lem:low_norm_packing}, there exists a set
$\cZ_r=\{z_1,\ldots,z_M\}\subseteq\{x_k:k\in[K]\}$ such that $\|z_i - z_j\|_2 > 2r$ for $i \neq j, i, j \in [M]$, $M\geq\frac{1}{C2^{d+1}r^d}$, for every $j\in[M]$, 
\[
\sum_{k=1}^K\phi_{z_j,r}(x_k)^2 \leq 2^{d+1}CKr^d,
\]
and 
\[
\sum_{j=1}^M Q_K\lt(B(z_j,r/2)\rt) \geq \frac{c}{C2^{2d+1}},
\]
where $0 < c < C < \infty$ are the constants from Assumption~\ref{ass:quasi_uniform_design}.

Let $\Omega:=\{0,1\}^M$ and define $a := \min\lt\{
        H,
        \frac{Lr^\beta}{2^{\beta+1}},
        \frac{\overline\sigma}{2\sqrt{2^{d+1} C Kr^d}}
    \rt\}.$
For every $\omega=(\omega_1,\ldots,\omega_M)\in\Omega$, set
\[
    g_\omega(x)
    :=
    a\sum_{j=1}^M(2\omega_j-1)\phi_{z_j,r}(x),
\]
which will be our candidate signals, for which we need to show that $g_\omega\in\cH_d(\beta,L,H)$.

We verify that $\|g_\omega\|_\infty\leq a\leq H$.
Since $\operatorname{supp}(\phi_0)=B(0,1)$, we have, for every $j\in[M]$,
$\operatorname{supp}(\phi_{z_j,r})=B(z_j,r)$. By
Lemma~\ref{lem:low_norm_packing}, $\|z_i - z_j\|_2 > 2r$ for all $i \neq j$, which implies that the closed balls $\{B(z_j,r):j\in[M]\}$, and therefore the supports of
$\{\phi_{z_j,r}:j\in[M]\}$, are pairwise disjoint. Thus at most one summand
in $g_\omega(x)$ is nonzero for every $x\in\R^d$. Since
$0\leq\phi_{z_j,r}\leq1$, it follows that
$\|g_\omega\|_\infty\leq a\leq H$.

Next, we verify that $g_\omega$ is $\beta$-H\"older with constant $L$. By Lemma~\ref{lem:hoelder_regularity}
\begin{align*}
    |g_\omega(x)-g_\omega(y)|
    &\leq
    a\sum_{j=1}^M
    |\phi_{z_j,r}(x)-\phi_{z_j,r}(y)|
    \\
     &=
    a\sum_{j=1}^M
    |\phi_{z_j,r}(x)-\phi_{z_j,r}(y)|  \indicator{x\in B(z_j,r)\text{ or }y\in B(z_j,r)}
    \\
    &\leq
    \frac{2^\beta a}{r^\beta}\|x-y\|_2^\beta
    \sum_{j=1}^M
    \indicator{x\in B(z_j,r)\text{ or }y\in B(z_j,r)}
    \\
    &\leq
    \frac{2^{\beta+1}a}{r^\beta}\|x-y\|_2^\beta
    \\
    &\leq
    L\|x-y\|_2^\beta,
\end{align*}
where the penultimate inequality follows because each of $x$ and $y$
belongs to at most one ball.
Together with $\|g_\omega\|_\infty\leq H$, this proves that
$g_\omega\in\cH_d(\beta,L,H)$.

Next, we bound the total variation distance between the data distributions
induced by neighboring hypotheses. We use a similar construction as in the proof of Theorem~\ref{thm:minimax_regret}. Fix a probability law $Q_0$ on
$[0, \infty)^K$ and let $\cP_K^0 \subseteq \cP_K$ consist of the laws
under which $\widehat \sigma \sim Q_0$ and $\widehat \sigma \indep Z$.
Write $P_{g,\bm\sigma}^Z$ for the $Z$-marginal of $P_{g,\bm\sigma}$. By the same argument as in the proof of Theorem~\ref{thm:minimax_regret}, on the submodel $\cP_K^0$, it suffices to prove the lower bound for rules that do not take $\widehat \sigma$ as an input.

For $j\in[M]$, let $\omega^{(j)}$ denote $\omega$ with
coordinate $j$ flipped, i.e., $\omega^{(j)}_{i} = \omega_{i}$ for all $i \neq j$ and $\omega^{(j)}_{j} = 1 - \omega_{j}$.  Let
$P_\omega := P_{g_\omega,\bm{\overline{\sigma}}}^Z$, where
$P_{g_\omega,\bm{\overline{\sigma}}}\in\cP_K^0$ is the corresponding joint law. Then $P_\omega$ and $P_{\omega^{(j)}}$ are
$K$-dimensional Gaussian distributions centered around
$\{g_\omega(x_k)\}_{k=1}^K$ and
$\{g_{\omega^{(j)}}(x_k)\}_{k=1}^K$, respectively, with shared covariance
matrix $\overline\sigma^2I_K$. Since
$g_\omega,g_{\omega^{(j)}}\in\cH_d(\beta,L,H)$ and
$\bm{\overline{\sigma}}\in
[\underline\sigma,\overline\sigma]^K$, $P_\omega$ and
$P_{\omega^{(j)}}$ are potential data-generating distributions of
$Z=(Z_1,\ldots,Z_K)$.

By \citet{barsov1987estimates,devroye2018total},
\begin{align*}
    d_{\mathrm{TV}}(P_\omega,P_{\omega^{(j)}})
    =
    2\Phi\lt(
        \frac{
            \sqrt{\sum_{k=1}^K
            (g_\omega(x_k)-g_{\omega^{(j)}}(x_k))^2}
        }{2\overline\sigma}
    \rt)-1.
\end{align*}
Since $\omega$ and $\omega^{(j)}$ differ only in coordinate $j$,
\begin{align*}
    \frac{1}{2\overline\sigma}
    \sqrt{\sum_{k=1}^K
    (g_\omega(x_k)-g_{\omega^{(j)}}(x_k))^2}
    &=
    \frac{a}{\overline\sigma}
    \sqrt{\sum_{k=1}^K\phi_{z_j,r}(x_k)^2}
    \\
    &\leq
    \frac{a\sqrt{2^{d+1}C Kr^d}}{\overline\sigma}
    \\
    &\leq
    \frac12,
\end{align*}
where the first inequality follows from
Lemma~\ref{lem:low_norm_packing} and the second from the definition of $a$. Therefore,
\begin{align}\label{eq:minimax_tv_distance}
    \max_{\omega\in\Omega}\max_{j\in[M]}
    d_{\mathrm{TV}}(P_\omega,P_{\omega^{(j)}})
    \leq
    2\Phi\lt(\frac12\rt)-1.
\end{align}

Having established that $g_\omega\in\cH_d(\beta,L,H)$ and that neighboring
signals are difficult to distinguish, we complete the proof by reducing the
policy regret to a weighted Hamming risk over the hypercube.

For each $j\in[M]$, define $I_j:=\lt\{k\in[K]:x_k\in B(z_j,r/2)\rt\}$ and $v_j = \sum_{k \in I_j} p_k$. Note that $\{I_j\}_{j=1}^M$ are pairwise disjoint. For every
$k\in I_j$,
\[
    \phi_{z_j,r}(x_k)
    =
    \lt(1-\frac{\|x_k-z_j\|_2^2}{r^2}\rt)^\beta
    \geq \lt(1 - \frac{1}{4}\rt)^\beta = 
    \lt(\frac{3}{4}\rt)^\beta,
\]
and thus
\[
    |g_\omega(x_k)|
    \geq
    a\lt(\frac34\rt)^\beta.
\]
Moreover, by Assumption~\ref{ass:quasi_uniform_design}, 
\begin{align}\label{eq:positive_information}
    v_j = \sum_{k \in I_j} p_k = Q_K\lt(B(z_j,r/2)\rt) \geq c\lt( \frac{r}{2}\rt)^d > 0.
\end{align}

Fix a decision rule $\widetilde\pi:\R^K\to[0,1]^K$. Using that, for
$X\in[0,1]$ and $b\in\{0,1\}$,
$|\E[X]-b|=\E[|X-b|]$,
\begin{align*}
    \sup_{\substack{
        P_{g,\bm{\sigma}}\in\cP_K^0\\
        g\in\cH_d(\beta,L,H)
    }}
    R_K(P_{g,\bm{\sigma}},\widetilde\pi)
    &\geq
    \max_{\omega\in\Omega}
    R_K(P_{g_\omega,\bm{\overline{\sigma}}},\widetilde\pi)
    \\
    &=
    \max_{\omega\in\Omega}
    \sum_{k=1}^Kp_k|g_\omega(x_k)|
    \lt|\E_\omega[\widetilde\pi_k(Z)]
    -\indicator{g_\omega(x_k)>0}\rt|
    \\
    &=
    \max_{\omega\in\Omega}
    \sum_{k=1}^Kp_k|g_\omega(x_k)|
    \E_\omega\lt[
    |\widetilde\pi_k(Z)-\indicator{g_\omega(x_k)>0}|
    \rt]
    \\
    &\geq
    \max_{\omega\in\Omega}
    \sum_{j=1}^M\sum_{k\in I_j}
    p_k|g_\omega(x_k)|
    \E_\omega\lt[
    |\widetilde\pi_k(Z)-\indicator{g_\omega(x_k)>0}|
    \rt]
    \\
    &\geq
    a\lt(\frac34\rt)^\beta
    \max_{\omega\in\Omega}
    \sum_{j=1}^M\sum_{k\in I_j}
    p_k\E_\omega\lt[
    |\widetilde\pi_k(Z)-\indicator{g_\omega(x_k)>0}|
    \rt]
    \\
    &=
    a\lt(\frac34\rt)^\beta
    \max_{\omega\in\Omega}
    \sum_{j=1}^M\sum_{k\in I_j}
    p_k\E_\omega\lt[|\widetilde\pi_k(Z)-\omega_j|\rt].
\end{align*}
The final equality follows because, for every $k\in I_j$,
$\phi_{z_j,r}(x_k)>0$ and hence
$\indicator{g_\omega(x_k)>0}=\omega_j$.

Given $\widetilde\pi$, define the estimator
$\widetilde\omega:\R^K\to[0,1]^M$ entrywise by
\[
    \widetilde\omega_j(Z)
    :=
    \frac{1}{v_j}\sum_{k\in I_j}p_k\widetilde\pi_k(Z),
\]
where we recall $v_j > 0$ by Assumption~\ref{ass:quasi_uniform_design} and Equation~\eqref{eq:positive_information}. Since
$\omega_j\in\{0,1\}$ and
$\widetilde\pi_k(Z),\widetilde\omega_j(Z)\in[0,1]$,
\[
    \sum_{k\in I_j}p_k
    |\widetilde\pi_k(Z)-\omega_j|
    =
    v_j|\widetilde\omega_j(Z)-\omega_j|.
\]
Applying Lemma~\ref{lem:weighted_assouad} together with
Equation~\eqref{eq:minimax_tv_distance},
\begin{align*}
    \sup_{\substack{
        P_{g,\bm{\sigma}}\in\cP_K^0\\
        g\in\cH_d(\beta,L,H)
    }}
    R_K(P_{g,\bm{\sigma}},\widetilde\pi)
    &\geq
    a\lt(\frac34\rt)^\beta
    \max_{\omega\in\Omega}
    \sum_{j=1}^M\sum_{k\in I_j}
    p_k\E_\omega\lt[|\widetilde\pi_k(Z)-\omega_j|\rt]
    \\
    &=
    a\lt(\frac34\rt)^\beta
    \max_{\omega\in\Omega}
    \sum_{j=1}^M v_j
    \E_\omega\lt[|\widetilde\omega_j(Z)-\omega_j|\rt]
    \\
    &\geq
    a\lt(\frac34\rt)^\beta
    \inf_{\widetilde\omega}
    \max_{\omega\in\Omega}
    \sum_{j=1}^M v_j
    \E_\omega\lt[|\widetilde\omega_j(Z)-\omega_j|\rt]
    \\
    &\geq
    a\lt(\frac34\rt)^\beta
    \frac{1-\lt(2\Phi\lt(\frac12\rt)-1\rt)}{2}
    \sum_{j=1}^M v_j
    \\
    &=
    a\lt(\frac34\rt)^\beta
    \Phi\lt(-\frac12\rt)
    \sum_{j=1}^M v_j
    \\
    &=
    a\lt(\frac34\rt)^\beta
    \Phi\lt(-\frac12\rt)
    \sum_{j=1}^M Q_K\lt(B(z_j,r/2)\rt)
    \\
    &\geq
    a\lt(\frac34\rt)^\beta
    \Phi\lt(-\frac12\rt)
     \frac{c}{C2^{2d+1}}
\end{align*}
where the final inequality follows from the mass property in
Lemma~\ref{lem:low_norm_packing}. Since this holds for every measurable decision rule $\widetilde\pi:\R^K\to[0,1]^K$, taking the infimum and applying the reduction on $\cP_K^0 \subseteq \cP_K$ gives
\begin{align*}
    \inf_{\widetilde\pi:\R^K \times [0, \infty)^K \to[0,1]^K}
     \sup_{\substack{
        P_{g,\bm{\sigma}}\in\cP_K\\
        g\in\cH_d(\beta,L,H)
    }}
    R_K(P_{g,\bm{\sigma}},\widetilde\pi) 
    &\geq  \inf_{\widetilde\pi:\R^K \times [0, \infty)^K \to[0,1]^K} \sup_{\substack{
        P_{g,\bm{\sigma}}\in\cP_K^0\\
        g\in\cH_d(\beta,L,H)
    }}
    R_K(P_{g,\bm{\sigma}},\widetilde\pi)
    \\
    &=  \inf_{\widetilde\pi:\R^K \to[0,1]^K} \sup_{\substack{
        P_{g,\bm{\sigma}}\in\cP_K^0\\
        g\in\cH_d(\beta,L,H)
    }}
    R_K(P_{g,\bm{\sigma}},\widetilde\pi)
    \\
    &\geq
    a\lt(\frac34\rt)^\beta
    \Phi\lt(-\frac12\rt)
    \frac{c}{C2^{2d+1}}.
\end{align*}
By the definition of $a$,
\[
    a
    \geq
    \min\lt\{
        1,
        \frac{1}{2^{\beta+1}},
        \frac{1}{2\sqrt{2^{d+1}C}}
    \rt\}
    \min\lt\{
        H,
        Lr^\beta,
        \frac{\overline\sigma}{\sqrt{Kr^d}}
    \rt\}.
\]
Consequently,
\begin{align*}
     \inf_{\widetilde\pi:\R^K \times [0, \infty)^K \to[0,1]^K}
     \sup_{\substack{
        P_{g,\bm{\sigma}}\in\cP_K\\
        g\in\cH_d(\beta,L,H)
    }}
    R_K(P_{g,\bm{\sigma}},\widetilde\pi)  \geq
    c_1
    \min\lt\{
        H,
        Lr^\beta,
        \frac{\overline\sigma}{\sqrt{Kr^d}}
    \rt\},
\end{align*}
where
\[
    c_1
    :=
    \lt(\frac34\rt)^\beta
    \Phi\lt(-\frac12\rt)
    \frac{c}{C2^{2d+1}}
    \min\lt\{
        1,
        \frac{1}{2^{\beta+1}},
        \frac{1}{2\sqrt{2^{d+1}C}}
    \rt\}.
\]
The constant $c_1>0$ is independent of $K$ and $r$, proving the claim.

\subsection{Proof of Proposition~\ref{pro:welfare_decomposition}}\label{app:welfare_decomposition}
By Assumption~\ref{ass:gaussian_model}--\ref{ass:weak_signal}, $\widehat g_w(x_k) \sim \cN(m_w(x_k), s_w(x_k)^2)$. Therefore
\begin{align*}
    \sqrt{n} W_{P_{g, \bm{\sigma}}}(\widetilde\pi_w) &= \sum_{k = 1}^K p_k g(x_k) \E_{g, \bm{\sigma}}\lt[\widetilde\pi_{w,k}(Z)\rt]
    \\ 
    &= \sum_{k = 1}^K p_k g(x_k)  \E_{g, \bm{\sigma}}\lt[\indicator{\widehat g_w(x_k) > 0}\rt]
    \\ 
    &= \sum_{k = 1}^K p_k g(x_k)  P_{g, \bm{\sigma}}\lt(\widehat g_w(x_k) > 0\rt)
    \\ 
    &= \sum_{k = 1}^K p_k g(x_k) \Phi\lt(\frac{m_w(x_k)}{s_w(x_k)}\rt)
    \\
    &=  \sum_{k = 1}^K p_k g(x_k) \psi_{w, k}.
\end{align*}
Similarly,
\begin{align*}
    \Cov_p(g, \psi_w) &= \sum_{k = 1}^K p_k (g(x_k) - \bar g_p) (\psi_{w, k} - \bar \psi_{w, p}) 
    \\
    &=  \sum_{k = 1}^K p_k g(x_k) \psi_{w, k} - \bar g_p \bar \psi_{w, p}
\end{align*}
Combining the expressions gives
\begin{align*}
    \sqrt{n} W_{P_{g, \bm{\sigma}}}(\widetilde\pi_w) = \bar g_p \bar \psi_{w, p} +  \Cov_p(g, \psi_w),
\end{align*}
which proves the decomposition. Next, we show the pairwise representation of $\Cov_p(g, \psi_w)$.
\begin{align*}
    \Cov_p(g, \psi_w) &= \sum_{k = 1}^K p_k g(x_k) \psi_{w, k} - \bar g_p \bar \psi_{w, p}
    \\
    &= \sum_{k = 1}^K p_k g(x_k) \psi_{w, k} - \lt(\sum_{k=1}^K p_k g(x_k) \rt) \lt(\sum_{l=1}^K p_l \psi_{w, l} \rt)
    \\
    &= \sum_{k = 1}^K p_k g(x_k) \psi_{w, k} - \sum_{k=1}^K \sum_{l=1}^K p_k p_l g(x_k) \psi_{w, l}
    \\
    &= \frac{1}{2} \sum_{k = 1}^K p_k g(x_k) \psi_{w, k} + \frac{1}{2} \sum_{l = 1}^K p_l g(x_l) \psi_{w, l} - \frac{1}{2 }\sum_{k=1}^K \sum_{l=1}^K p_k p_l g(x_k) \psi_{w, l} - \frac{1}{2} \sum_{k=1}^K \sum_{l=1}^K p_k p_l g(x_l) \psi_{w, k}
    \\
    &= \frac{1}{2} \sum_{k = 1}^K p_k p_l \lt(g(x_k) \psi_{w, k} + g(x_l) \psi_{w, l} - g(x_k) \psi_{w, l} - g(x_l) \psi_{w, k}\rt)
    \\
    &= \frac{1}{2} \sum_{k = 1}^K \sum_{j = 1}^K p_k p_j (g(x_k) - g(x_j)) (\psi_{w, k} - \psi_{w, l}).
\end{align*}

\subsection{Proof of Proposition~\ref{pro:cross_validation}}\label{app:cross_validation}

Define the scaled welfare loss $L_{\bm h}(\pi) := \sqrt{n} \{V_{\bm h}^* - V_{\bm h}(\pi)\} = \sum_{k =1}^K p_k h_k (\indicator{h_k > 0} - \pi_k)$. For each $r\in\Lambda_K$, write 
\[
    \bar L_{\bm h,r} := \frac{1}{F} \sum_{f = 1}^F L_{\bm h}(\widetilde\pi_{w_r}(Z^{(-f)})) = \frac{1}{F} \sum_{f = 1}^F \sum_{k =1}^K p_k h_k (\indicator{h_k > 0} - \widetilde\pi_{w_r,k}(Z^{(-f)})).
\]
Then $\E_{P_{\bm h, \bm \sigma}}[\bar L_{\bm h,r}] = R_K(P_{\bm h, \bm \sigma_F}, \widetilde\pi_{w_r})$. 

Next, for each fold $f$, define
\[
    \Delta_{r, f} := \sum_{k = 1}^K p_k (Z_k^{(f)} - h_k) \widetilde \pi_{w_r, k}(Z^{(-f)}).
\]
Since $Z_k^{(f)}=h_k+(Z_k^{(f)}-h_k)$, our selection criterion (depending only on observables) decomposes
\[
    \widehat W(r) := \frac{1}{F} \sum_{f = 1}^F \sum_{k = 1}^K p_k Z_k^{(f)} \widetilde\pi_{w_r,k}(Z^{(-f)}) = \sqrt{n} V_{\bm h}^* - \bar L_{\bm h,r} + \frac{1}{F} \sum_{f = 1}^F \Delta_{r, f} .
\]
Thus,  if $\widehat r$ maximizes $\widehat W(r)$ it also maximizes $- \bar L_{\bm h,r} + \frac{1}{F} \sum_{f = 1}^F \Delta_{r, f}$. Thus for every fixed $r \in \Lambda_K$, 
\begin{align}\label{eq:fold_error}
    \bar L_{\bm h, \widehat r} &\leq \bar L_{\bm h,r} + \frac{1}{F} \sum_{f = 1}^F ( \Delta_{\widehat r, f} -  \Delta_{r, f})
    \notag
    \\
    &\leq \bar L_{\bm h,r} + \frac{1}{F} \sum_{f = 1}^F \lt( \max_{u \in \Lambda_K} \Delta_{u, f} -  \Delta_{r, f}\rt)
\end{align}

Now, conditional on the remaining folds $\cD_{-f} = (Z^{(j)})_{j \neq f}$, $\Delta_{r, f}$ is a centered Gaussian random variable with variance bounded by $s_K^2 = F \overline{\sigma}^2 \sum_{k = 1}^K p_k^2 \leq \frac{C F \overline{\sigma}^2}{K}$, since $\max_k p_k \lesssim K^{-1}$ by assumption. By the Gaussian maximum inequality \citet[Exercise 2.12(a)]{wainwright2019high},
\[
    \E_{P_{\bm h, \bm \sigma}}[\max_{r \in \Lambda_K} \Delta_{r, f} \mid \cD_{-f}] \leq s_K \sqrt{2 \log(|\Lambda_K|)}.
\]
Thus, taking the expectation in Equation~\eqref{eq:fold_error} and minimizing over $r$ gives 
\[
    \E_{P_{\bm h, \bm \sigma}}[\bar L_{\bm h, \widehat r}] \leq \min_{r \in \Lambda_K} \E_{P_{\bm h, \bm \sigma}}[\bar L_{\bm h,r}] + s_K \sqrt{2 \log(|\Lambda_K|)}
\]
Assume for now that $R_K(P_{\bm h,\bm\sigma},\widetilde\pi_{w_{\widehat r}}) \leq F\E_{P_{\bm h,\bm\sigma}}[\bar L_{\bm h,\widehat r}]$. Then, since $\E_{P_{\bm h, \bm \sigma}}[\bar L_{\bm h,r}] = R_K(P_{\bm h, \bm \sigma_F}, \widetilde\pi_{w_r})$ and $s_K \leq \sqrt{\frac{C F \overline{\sigma}^2}{K}}$ by the above, 
\begin{align*}
        R_K(P_{\bm h,\bm\sigma},\widetilde\pi_{w_{\widehat r}}) &\leq F\E_{P_{\bm h,\bm\sigma}}[\bar L_{\bm h,\widehat r}]
       \\
       &\leq F \min_{r \in \Lambda_K} \E_{P_{\bm h, \bm \sigma}}[\bar L_{\bm h, r}] + F s_K \sqrt{2 \log(|\Lambda_K|)}
        \\
        &\leq F \min_{r \in \Lambda_K} R_K(P_{\bm h, \bm \sigma_F}, \widetilde\pi_{w_r}) + F \sqrt{\frac{C F \overline{\sigma}^2 2 \log(|\Lambda_K|)}{K}}
        \\
        &\lesssim \min_{r \in \Lambda_K} R_K(P_{\bm h, \bm \sigma_F}, \widetilde\pi_{w_r}) + \sqrt{\frac{\log(|\Lambda_K|)}{K}},
\end{align*}
which proves the claim as $F$ is fixed. It remains to show that
\[
    R_K(P_{\bm h,\bm\sigma},\widetilde\pi_{w_{\widehat r}}) \leq F\E_{P_{\bm h,\bm\sigma}} [\bar L_{\bm h,\widehat r}].
\]

Since $Z_k = \frac{1}{F} \sum_{f = 1}^F Z_k^{(f)}$, we have
\begin{align*}
    &L_{\bm h}(\widetilde \pi_{w_r}(Z)) 
    \\
    &= \sqrt{n} \{V_{\bm h}^* - V_{\bm h}(\widetilde \pi_{w_r}(Z))\}
    \\
    &= \sum_{k =1}^K p_k h_k (\indicator{h_k > 0} - \widetilde \pi_{w_r, k}(Z))
     \\
    &= \sum_{k =1}^K p_k h_k \lt(\indicator{h_k > 0} - \indicator{\sum_{j = 1}^K w_{r, j}(x_k) Z_j > 0}\rt)
    \\
    &= \sum_{k =1}^K p_k h_k \lt(\indicator{h_k > 0} - \indicator{\frac{1}{F} \sum_{f = 1}^F \sum_{j = 1}^K w_{r, j}(x_k) Z_j^{(-f)} > 0}\rt)
    \\
    &= \sum_{k:h_k>0} p_k h_k \lt(1 - \indicator{\frac{1}{F}\sum_{f=1}^F\sum_{j=1}^K w_{r,j}(x_k)Z_j^{(-f)} > 0}\rt) + \sum_{k:h_k<0} p_k(-h_k) \indicator{\frac{1}{F}\sum_{f=1}^F\sum_{j=1}^K w_{r,j}(x_k)Z_j^{(-f)} > 0}
    \\
    &= \sum_{k:h_k>0} p_k h_k \indicator{\frac{1}{F}\sum_{f=1}^F\sum_{j=1}^K w_{r,j}(x_k)Z_j^{(-f)} \leq 0} + \sum_{k:h_k<0} p_k(-h_k) \indicator{\frac{1}{F}\sum_{f=1}^F\sum_{j=1}^K w_{r,j}(x_k)Z_j^{(-f)} > 0}
    \\   
    &\leq \sum_{k:h_k>0} p_k h_k \sum_{f=1}^F \indicator{\sum_{j=1}^K w_{r,j}(x_k)Z_j^{(-f)} \leq 0} + \sum_{k:h_k<0} p_k(-h_k) \sum_{f=1}^F \indicator{\sum_{j=1}^K w_{r,j}(x_k)Z_j^{(-f)} > 0}
    \\
    &= \sum_{f=1}^F\sum_{k=1}^K p_k h_k \lt(\indicator{h_k>0} - \indicator{\sum_{j=1}^K w_{r,j}(x_k)Z_j^{(-f)} > 0}\rt)
    \\
    &= \sum_{f=1}^F L_{\bm h}(\widetilde\pi_{w_r}(Z^{(-f)}))
    \\
    &= F \bar L_{\bm h,r}.
\end{align*}
This inequality holds for every $r\in\Lambda_K$, including $\widehat r$. Setting $r = \widehat r$ and taking expectations gives
\[
    R_K(P_{\bm h,\bm\sigma},\widetilde\pi_{w_{\widehat r}}) = \E_{P_{\bm h,\bm\sigma}}[L_{\bm h}(\widetilde\pi_{w_{\widehat r}}(Z))] \leq F\E_{P_{\bm h,\bm\sigma}}[\bar L_{\bm h,\widehat r}],
\]
which completes the proof.

\newpage

\subsection{Technical Lemmas}

\begin{lemma}[Weighted Assouad Lemma for Randomized Decisions]
\label{lem:weighted_assouad}
Let $M,K\geq1$, let $\Omega:=\{0,1\}^M$, and let
$\{P_\omega:\omega\in\Omega\}$ be Borel probability measures on
$\R^K$.
For $j\in[M]$, let $\omega^{(j)}$ denote the vector
obtained from $\omega$ by flipping its $j$th coordinate, i.e., $\omega^{(j)} := (\omega_1,\ldots,\omega_{j-1}, 1-\omega_j,\omega_{j+1},\ldots,\omega_M).$
Suppose that
\[
    \max_{\omega\in\Omega}
    \max_{j\in[M]}
    \TV\!\lt(P_\omega,P_{\omega^{(j)}}\rt)
    \leq \Delta,
\]
for some $\Delta\in[0,1)$, where $\TV(P,Q) := \sup_{A\in\mathcal B(\R^K)} |P(A)-Q(A)|$ is the total variation distance. Then, for any nonnegative weights $v_1,\ldots,v_M$,
\[
    \inf_{\widetilde\omega:\R^K\to[0,1]^M}
    \max_{\omega\in\Omega}
    \E_\omega\!\lt[
        \sum_{j=1}^M
        v_j
        |\widetilde\omega_j(Z)-\omega_j|
    \rt]
    \geq
    \frac{1-\Delta}{2}
    \sum_{j=1}^M v_j,
\]
where the infimum ranges over all measurable decision rules
$\widetilde\omega:\R^K\to[0,1]^M$, and $\E_\omega$ denotes expectation under $P_\omega$. The same lower bound holds with the maximum replaced by the average over $\Omega$.
\end{lemma}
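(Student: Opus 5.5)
The plan is the standard Assouad argument: first replace the maximum by the uniform average over $\Omega$, then split the weighted Hamming loss coordinatewise into a sum of two-point testing problems. Since a maximum dominates an average, for any rule $\widetilde\omega$
\[
    \max_{\omega\in\Omega}\E_\omega\Big[\sum_{j=1}^M v_j|\widetilde\omega_j(Z)-\omega_j|\Big]
    \geq
    \sum_{j=1}^M v_j\, 2^{-M}\sum_{\omega\in\Omega}\E_\omega\big[|\widetilde\omega_j(Z)-\omega_j|\big].
\]
It therefore suffices to show the average version, which also gives the final sentence of the lemma. For a fixed $j$, I will pair each $\omega$ with $\omega_j=0$ with its neighbour $\omega^{(j)}$, which has $\omega^{(j)}_j=1$. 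This map is a bijection between the two halves of $\Omega$, so
\[
    2^{-M}\sum_{\omega}\E_\omega|\widetilde\omega_j-\omega_j|
    = 2^{-M}\sum_{\omega:\omega_j=0}\Big(\E_{\omega}[\widetilde\omega_j(Z)] + \E_{\omega^{(j)}}[1-\widetilde\omega_j(Z)]\Big).
\]

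The key step is the two-point bound for randomized (i.e.\ $[0,1]$-valued) tests. For any measurable $f:\R^K\to[0,1]$ and probability measures $P,Q$,
\[
    \E_P[f]+\E_Q[1-f] = 1-(\E_Q[f]-\E_P[f]) \geq 1-\TV(P,Q).
\]
This holds because $\sup_{0\le f\le 1}|\E_Q f-\E_P f|=\TV(P,Q)$. To see the supremum identity, write $f=\int_0^1\indicator{f>s}\,ds$ and apply the set definition of $\TV$ to each level set $\{f>s\}$. Applying this with $f=\widetilde\omega_j$, $P=P_\omega$ and $Q=P_{\omega^{(j)}}$, each pair contributes at least $1-\Delta$ by hypothesis. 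There are $2^{M-1}$ pairs, so coordinate $j$ contributes at least $2^{-M}\cdot 2^{M-1}(1-\Delta)=(1-\Delta)/2$.

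Multiplying by $v_j\ge 0$ and summing over $j$ gives $\frac{1-\Delta}{2}\sum_j v_j$ for every rule $\widetilde\omega$, hence also for the infimum. No step is genuinely hard. The only point needing care is extending the $\TV$ bound from indicator tests to $[0,1]$-valued rules via the layer-cake argument above, plus checking measurability of the level sets, which is immediate since $\widetilde\omega_j$ is measurable.
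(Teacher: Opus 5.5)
Your proposal is correct and takes essentially the same route as the paper: bound the maximum by the uniform average over $\Omega$, pair each $\omega$ with $\omega_j=0$ with $\omega^{(j)}$, and control $\E_{\omega^{(j)}}[\widetilde\omega_j(Z)]-\E_\omega[\widetilde\omega_j(Z)]$ by $\TV(P_\omega,P_{\omega^{(j)}})$ via the layer-cake representation, then count the $2^{M-1}$ pairs. As in the paper, the average-over-$\Omega$ version of the bound follows from the same chain of inequalities.
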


\begin{proof}
The proof follows from similar arguments as in
\citet[Theorem~2.12]{tsybakov2009introduction}, but extended to
nonnegative weights and randomized decisions.

Fix any measurable decision rule
$\widetilde\omega:\R^K\to[0,1]^M$. Bounding the maximum by the average over $\Omega$ gives
\begin{align*}
    \max_{\omega\in\Omega} \E_\omega \lt[ \sum_{j=1}^M v_j|\widetilde\omega_j(Z)-\omega_j| \rt]
    &\geq \frac{1}{2^M} \sum_{\omega\in\Omega}
   \E_\omega \lt[ \sum_{j=1}^M v_j|\widetilde\omega_j(Z)-\omega_j| \rt]
   \\
    &=
    \frac{1}{2^M}
    \sum_{j=1}^M v_j
    \sum_{\omega\in\Omega}
    \E_\omega \lt[ |\widetilde\omega_j(Z)-\omega_j| \rt].
\end{align*}

For each $j\in[M]$, define $\Omega_{j,0} := \{\omega\in\Omega:\omega_j=0\}$. Using $\widetilde\omega_j(Z)\in[0,1]$,
\begin{align*}
    \max_{\omega\in\Omega} \E_\omega \lt[ \sum_{j=1}^M v_j|\widetilde\omega_j(Z)-\omega_j| \rt]
    & \geq
    \frac{1}{2^M}
    \sum_{j=1}^M v_j
    \sum_{\omega\in\Omega}
    \E_\omega \lt[ |\widetilde\omega_j(Z)-\omega_j| \rt]
    \\
    &= \frac{1}{2^M}
    \sum_{j=1}^M v_j
    \sum_{\omega\in\Omega_{j,0}}
    \Big\{
        \E_\omega[|\widetilde\omega_j(Z)|]
        +
        \E_{\omega^{(j)}}[|1-\widetilde\omega_j(Z)|]
    \Big\}
    \\
    &= \frac{1}{2^M}
    \sum_{j=1}^M v_j
    \sum_{\omega\in\Omega_{j,0}}
    \Big\{
        \E_\omega[\widetilde\omega_j(Z)]
        +
        \E_{\omega^{(j)}}[1-\widetilde\omega_j(Z)]
    \Big\}
    \\
    &= \frac{1}{2^M}
    \sum_{j=1}^M v_j
    \sum_{\omega\in\Omega_{j,0}}
    \lt[
        1 -
        \Big\{\E_{\omega^{(j)}}[\widetilde\omega_j(Z)] - \E_\omega[\widetilde\omega_j(Z)]
    \Big\} \rt]
    \\
    &\geq \frac{1}{2^M}
    \sum_{j=1}^M v_j
    \sum_{\omega\in\Omega_{j,0}}
    \lt[
        1 -
        \lt|\E_{\omega^{(j)}}[\widetilde\omega_j(Z)] - \E_\omega[\widetilde\omega_j(Z)]
    \rt| \rt]
\end{align*}
For any measurable $\delta:\R^K\to[0,1]$ and any probability measures $P,Q$ on $\R^K$, 
\begin{align*}
    \lt| \E_P[\delta(Z)]-\E_Q[\delta(Z)] \rt|
    &=\lt| \int_0^1 \lt\{ P(\delta(Z)>t)-Q(\delta(Z)>t) \rt\} dt \rt|
    \\
    &\leq \int_0^1 \lt| P(\delta(Z)>t)-Q(\delta(Z)>t) \rt| dt    
    \\
    &\leq \int_0^1 \TV(P,Q) dt
    \\
    &= \TV(P,Q).
\end{align*}
Applying this inequality with $\delta=\widetilde\omega_j$, $P=P_{\omega^{(j)}}$, and $Q=P_\omega$ and recalling that $\TV(P_\omega, P_{\omega^{(j)}}) \leq \Delta$ for all $\omega \in \Omega, j \in [M]$,
\begin{align*}
     \max_{\omega\in\Omega} \E_\omega \lt[ \sum_{j=1}^M v_j|\widetilde\omega_j(Z)-\omega_j| \rt]
    & \geq \frac{1}{2^M}
    \sum_{j=1}^M v_j
    \sum_{\omega\in\Omega_{j,0}}
    \lt[
        1 -
        \lt|\E_{\omega^{(j)}}[\widetilde\omega_j(Z)] - \E_\omega[\widetilde\omega_j(Z)]
    \rt| \rt]
    \\
    &\geq \frac{1}{2^M}
    \sum_{j=1}^M v_j
    \sum_{\omega\in\Omega_{j,0}}
    \lt[ 1 - \TV(P_\omega, P_{\omega^{(j)}}) \rt]
    \\
    &\geq \frac{1}{2^M}
    \sum_{j=1}^M v_j
    \sum_{\omega\in\Omega_{j,0}} \lt[ 1 - \Delta \rt]
    \\
    &= \frac{1 - \Delta}{2} \sum_{j=1}^M v_j,
\end{align*}
where we used $|\Omega_{j,0}|=2^{M-1}$ for every $j\in[M]$. Since this holds for every measurable
$\widetilde\omega:\R^K\to[0,1]^M$, taking the infimum over
$\widetilde\omega$ proves the claim.
\end{proof}

\begin{lemma}[Low-Norm Packing]
\label{lem:low_norm_packing}
Suppose Assumption~\ref{ass:quasi_uniform_design} holds with constants $0 < c < C < \infty, a_0, r_0 > 0$ and let $r\in[2a_0K^{-1/d},r_0/2]$. Define $\phi_0(x):=(1-\|x\|_2^2)_+^\beta$ and set $\phi_{z,r}(x):=\phi_0\lt(\frac{x-z}{r}\rt)$. Then there exists $\cZ_r=\{z_1,\ldots,z_M\}  \subseteq \{x_k:k\in[K]\}$
satisfying the following properties:
\begin{enumerate}[label=(\alph*)]
    \item \emph{Separation:} $\|z_i-z_j\|_2>2r$ for $i \neq j \in [M]$.
    \item \emph{Cardinality:} $ M\geq\frac{1}{C2^{d+1}r^d}.$
    \item \emph{Task Norm:} For every $j\in[M]$, $\sum_{k=1}^K\phi_{z_j,r}(x_k)^2 \leq 2^{d+1}CKr^d$.
    \item \emph{Mass:} $\sum_{j=1}^M Q_K\lt(B(z_j,r/2)\rt) \geq \frac{c}{C2^{2d+1}}$.
\end{enumerate}
\end{lemma}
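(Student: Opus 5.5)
The plan is to take $\cZ_r$ to be a \emph{maximal} $2r$-separated subset of the task locations and read (a)--(d) off Assumption~\ref{ass:quasi_uniform_design}. Concretely, I would run the greedy procedure: start with $z_1=x_1$, and keep adding any task $x_k$ whose distance to every point already selected exceeds $2r$, until no such task remains. This terminates because $K$ is finite, and (a) holds by construction. The range $r\in[2a_0K^{-1/d},r_0/2]$ is chosen so that both radii used below lie in the admissible window of Assumption~\ref{ass:quasi_uniform_design}: $2r\le r_0$ and $r/2\ge a_0K^{-1/d}$.

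For (b), maximality means every task $x_k$ lies within distance $2r$ of some $z_j$, so the balls $B(z_j,2r)$ cover $\{x_k\}$. Since each $z_j$ is itself a task and $2r\in[a_0K^{-1/d},r_0]$, the upper bound in Assumption~\ref{ass:quasi_uniform_design} gives
\[
1=Q_K(\cX)\le\sum_{j=1}^M Q_K(B(z_j,2r))\le MC2^dr^d .
\]
Hence $M\ge 1/(C2^dr^d)$, which is stronger than (b).

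For (d), separation $>2r$ makes the balls $B(z_j,r/2)$ pairwise disjoint (which is not actually needed for a lower bound on the sum). Since $r/2\ge a_0K^{-1/d}$, the lower bound in Assumption~\ref{ass:quasi_uniform_design} gives $Q_K(B(z_j,r/2))\ge c(r/2)^d$ for each $j$. Summing and using (b),
\[
\sum_{j=1}^M Q_K(B(z_j,r/2))\ \ge\ \frac{1}{C2^{d+1}r^d}\cdot\frac{cr^d}{2^d}=\frac{c}{C2^{2d+1}} .
\]

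For (c), I use $0\le\phi_{z_j,r}\le 1$ and $\operatorname{supp}\phi_{z_j,r}=B(z_j,r)$. This gives
\[
\sum_{k=1}^K\phi_{z_j,r}(x_k)^2\ \le\ \#\{k:x_k\in B(z_j,r)\}.
\]
So (c) reduces to counting tasks in a ball of radius $r$. I would bound the count by mass divided by the minimal weight: $\#\{k:x_k\in B(z_j,r)\}\le Q_K(B(z_j,r))/\min_k p_k\le Cr^d/\min_k p_k$. This yields (c) as soon as $\min_k p_k\ge 1/(2^{d+1}K)$.

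This weight lower bound is the step I expect to be the real obstacle. Assumption~\ref{ass:quasi_uniform_design} only controls the mass of balls of radius at least $a_0K^{-1/d}$, not individual $p_k$. It therefore does not obviously rule out many low-weight tasks clustered inside one small ball, and a pure counting bound can then fail. My first attempt would be to derive $p_k\gtrsim 1/K$ from the assumption, for instance by comparing $Q_K(B(x_k,a_0K^{-1/d}))\ge ca_0^d/K$ with the upper bound at nearby centres. If that does not work, I would add an explicit balance condition $\min_k p_k\gtrsim 1/K$, which is implicit in the paper's regime $n_k\asymp n$ with $p_k\propto n_k$, and adjust the constant in (c) accordingly. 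Alternatively, I would replace the unweighted norm by the $p$-weighted norm $K\sum_k p_k\phi_{z_j,r}(x_k)^2\le KCr^d$ and check that the TV step in the proof of Theorem~\ref{thm:minimax_lower_bound} goes through with that quantity.
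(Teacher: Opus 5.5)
Your arguments for (a), (b), and (d) are correct and match the paper's: a maximal $2r$-packing, covering by the balls $B(z_j,2r)$, and the lower mass bound at radius $r/2$. Your check that $2r\le r_0$ and $r/2\ge a_0K^{-1/d}$ is a detail the paper leaves implicit.

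The gap is (c), and none of your fallbacks closes it. Assumption~\ref{ass:quasi_uniform_design} gives no lower bound on individual weights. Arbitrarily many tasks with combined weight of order $1/K$ can sit inside a ball of radius below $a_0K^{-1/d}$ around one task without violating the two-sided ball-mass bounds. If your greedy procedure starts at that task, then $\sum_k\phi_{z_1,r}(x_k)^2$ is of the order of the cluster size, which can be a constant fraction of $K$. So for small admissible $r$, (c) genuinely fails for that centre; it is not only your counting bound that breaks. Adding $\min_k p_k\gtrsim 1/K$ changes the lemma's hypotheses. The $p$-weighted norm does not serve the downstream argument either: in the proof of Theorem~\ref{thm:minimax_lower_bound}, the total variation distance is between Gaussians with common covariance $\overline\sigma^2 I_K$, so it depends on the unweighted quantity $\sum_k\phi_{z_j,r}(x_k)^2$.

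The missing idea is that (c) only needs to hold on a subcollection of the maximal packing, so you may discard bad centres. The supports $B(\widetilde z_j,r)$ are pairwise disjoint and $0\le\phi_{z,r}\le 1$, so $\sum_{j=1}^{\widetilde M}\sum_{k=1}^K\phi_{\widetilde z_j,r}(x_k)^2\le K$. By Markov's inequality, fewer than $\widetilde M/2$ centres can have $\sum_k\phi_{\widetilde z_j,r}(x_k)^2>2K/\widetilde M$. Your bound $\widetilde M\ge 1/(C2^dr^d)$ gives $2K/\widetilde M\le 2^{d+1}CKr^d$.

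Keeping the remaining centres preserves separation and yields (c). It costs exactly the factor $2$ that turns your $1/(C2^dr^d)$ into the $1/(C2^{d+1}r^d)$ in (b); the slack you noticed in (b) is there to pay for this pruning. Part (d) then follows from the pruned cardinality exactly as you wrote it.
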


\begin{proof}
Let
$\widetilde{\cZ}_r
=\{\widetilde z_1,\ldots,\widetilde z_{\widetilde M}\}$
be a maximal $2r$-packing of the distinct task locations, i.e.,
$\|\widetilde z_i-\widetilde z_j\|_2>2r$ for $i\neq j$ and
\[
    \{x_k:k\in[K]\}
    \subseteq
    \bigcup_{j=1}^{\widetilde M}B(\widetilde z_j,2r).
\]
Such a set exists because there are finitely many distinct task locations.
Hence, by Assumption~\ref{ass:quasi_uniform_design},
\[
    1
    =
    Q_K\bigl(\{x_k:k\in[K]\}\bigr)
    \leq
    \sum_{j=1}^{\widetilde M}
    Q_K\bigl(B(\widetilde z_j,2r)\bigr)
    \leq
    \widetilde M C(2r)^d,
\]
and therefore $ \widetilde M \geq \frac{1}{C2^dr^d}$.

Next, since $\operatorname{supp}(\phi_0)=B(0,1)$, we have
$\operatorname{supp}(\phi_{z,r})=B(z,r)$. Thus, because
$\widetilde{\cZ}_r$ is a $2r$-packing, the supports of
$\{\phi_{\widetilde z_j,r}:j\in[\widetilde M]\}$ are pairwise disjoint.
Consequently, since $0\leq\phi_{z,r}\leq1$,
\[
    \sum_{j=1}^{\widetilde M} \sum_{k=1}^K
    \phi_{\widetilde z_j,r}(x_k)^2
    \leq K.
\]
Combining this with the lower bound on $\widetilde M$ gives
\[
    \frac{1}{\widetilde M}
    \sum_{j=1}^{\widetilde M}
    \sum_{k=1}^K \phi_{\widetilde z_j,r}(x_k)^2
    \leq
    \frac{K}{\widetilde M}
    \leq
    C2^dKr^d.
\]
Therefore, at least half $\widetilde z_j \in \widetilde{\cZ}_r$ satisfies
\[
    \sum_{k=1}^K \phi_{\widetilde z_j,r}(x_k)^2 \leq
    2^{d+1}CKr^d;
\]
otherwise, the preceding average would exceed $C2^dKr^d$.

Only retain these elements and relabel them as
$\cZ_r=\{z_1,\ldots,z_M\}$. Then separation still holds and
\[
    M \geq \frac{\widetilde M}{2} \geq \frac{1}{C2^{d+1}r^d},
\]
by construction.
Moreover, for every $j\in[M]$,
\[
    \sum_{k=1}^K\phi_{z_j,r}(x_k)^2
    \leq
    2^{d+1}CKr^d.
\]

Finally, the balls $\{B(z_j,r/2):j\in[M]\}$ are pairwise disjoint.
Hence, by Assumption~\ref{ass:quasi_uniform_design},
\[
    \sum_{j=1}^M Q_K\lt(B(z_j,r/2)\rt)
    \geq
    Mc(r/2)^d
    \geq
    \frac{c}{C2^{2d+1}}.
\]
\end{proof}

\begin{lemma}[H\"older Regularity]
\label{lem:hoelder_regularity}
Fix $d\geq1$ and $0<\beta\leq1$. Let $x, z \in \R^d, r > 0$. Define $\phi_0(x):=(1-\|x\|_2^2)_+^\beta$ and set $\phi_{z,r}(x) := \phi_0\lt(\frac{x-z}{r}\rt)$. Then, for every $x,y\in\R^d$,
\[
    |\phi_{z,r}(x)-\phi_{z,r}(y)| \leq \frac{2^\beta}{r^\beta}\|x-y\|_2^\beta.
\]
\end{lemma}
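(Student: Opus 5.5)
By scaling, $\phi_{z,r}(x)-\phi_{z,r}(y)=\phi_0(u)-\phi_0(v)$ with $u=(x-z)/r$ and $v=(y-z)/r$. Since $\|u-v\|_2^\beta=\|x-y\|_2^\beta/r^\beta$, it suffices to prove the unscaled bound
\[
|\phi_0(u)-\phi_0(v)|\le 2^\beta\|u-v\|_2^\beta \qquad \text{for all } u,v\in\R^d.
\]
Write $\phi_0=f\circ\psi$ with $\psi(u)=(1-\|u\|_2^2)_+\in[0,1]$ and $f(s)=s^\beta$ on $[0,\infty)$.

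The first ingredient is that $f$ is $\beta$-H\"older with constant $1$, i.e.\ $|s^\beta-t^\beta|\le|s-t|^\beta$ for $s,t\ge0$ and $\beta\in(0,1]$. This follows from subadditivity of the concave function $s\mapsto s^\beta$ with $f(0)=0$: for $s\ge t$ we have $s^\beta\le t^\beta+(s-t)^\beta$. Consequently
\[
|\phi_0(u)-\phi_0(v)|\le|\psi(u)-\psi(v)|^\beta,
\]
and the problem reduces to showing that $\psi$ satisfies a bound that raises to $2^\beta\|u-v\|_2^\beta$.

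The second ingredient is a case split on the positions of $u$ and $v$.
\begin{itemize}
\item If both lie outside the unit ball, then both sides vanish and there is nothing to show.
\item If both lie in $B(0,1)$, then $|\psi(u)-\psi(v)|=\bigl|\|v\|_2^2-\|u\|_2^2\bigr|=(\|u\|_2+\|v\|_2)\,\bigl|\|u\|_2-\|v\|_2\bigr|\le 2\|u-v\|_2$.
\item If $u\in B(0,1)$ and $v\notin B(0,1)$, then $|\psi(u)-\psi(v)|=1-\|u\|_2^2\le 2(1-\|u\|_2)\le 2(\|v\|_2-\|u\|_2)\le 2\|u-v\|_2$.
\end{itemize}
This makes $\psi$ $2$-Lipschitz everywhere. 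The same conclusion also follows directly from the fact that $t\mapsto(1-t^2)_+$ is $2$-Lipschitz in $t=\|u\|_2$, composed with the $1$-Lipschitz map $u\mapsto\|u\|_2$.

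Combining the two ingredients gives
\[
|\phi_0(u)-\phi_0(v)|\le(2\|u-v\|_2)^\beta=2^\beta\|u-v\|_2^\beta,
\]
and rescaling back yields the claim. The only mildly delicate point is the $\beta$-H\"older property of $s\mapsto s^\beta$, but it is a standard subadditivity argument, so I expect no real obstacle. I would simply verify $(a+b)^\beta\le a^\beta+b^\beta$ for $a,b\ge0$ by noting that $t\mapsto(1+t)^\beta-t^\beta$ is nonincreasing for $t\ge0$ and equals $1$ at $t=0$.
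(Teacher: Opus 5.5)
Your proof is correct and matches the paper's argument essentially step for step. The paper uses the same three-case check that $(1-\|u\|_2^2)_+$ is $2$-Lipschitz, then applies $|s^\beta-t^\beta|\le|s-t|^\beta$ and rescales; the one addition is that you justify that last inequality via subadditivity, which the paper simply asserts.
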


\begin{proof}
    Define $f(x):=(1-\|x\|_2^2)_+$ such that $\phi_{z,r}(x) = h\lt(\frac{x - z}{r}\rt)^\beta$. Let $x, y \in \R^d$. We show that $f$ is $2$-Lipschitz.
    
    First, assume $\|x\|_2,\|y\|_2\leq1$. Then by the reverse triangle inequality 
    \[
        |f(x)-f(y)| = \lt|\|x\|_2^2-\|y\|_2^2\rt| \leq 2\|x-y\|_2.
    \]
    Similarly, if $\|x\|_2,\|y\|_2 > 1$ then $f(x) = f(y) = 0$ and the condition holds trivially. Assume now that, without loss of
    generality, $\|x\|_2\leq1<\|y\|_2$. Then by the reverse triangle inequality again
    \[
        |f(x)-f(y)| = f(x) = (1+\|x\|_2) (1-\|x\|_2) \leq 2 (1-\|x\|_2)
        \leq 2(\|y\|_2-\|x\|_2) \leq 2\|x-y\|_2.
    \]
    Thus $f$ is globally $2$-Lipschitz. 
    
    Since $0<\beta\leq1$,
    $|u^\beta-v^\beta|\leq|u-v|^\beta$ for all $u,v\geq0$. Consequently,
    \[
        |\phi_0(x)-\phi_0(y)|
        =
        |f(x)^\beta-f(y)^\beta|
        \leq
        2^\beta\|x-y\|_2^\beta,
    \]
    which shows that $\phi_0$ is $\beta$-H\"older with constant $2^\beta$. Finally, 
    \[
         |\phi_{z,r}(x)-\phi_{z,r}(y)| = \lt|\phi_0\lt(\frac{x-z}{r}\rt) - \phi_0\lt(\frac{y-z}{r}\rt)\rt| \leq 2^\beta \lt\| \frac{x - z}{r} - \frac{y - z}{r} \rt\|_2^\beta \leq \frac{2^\beta}{r^\beta}\|x-y\|_2^\beta.
    \]
\end{proof}

\end{document}